\definecolor{cornflowerblue}{rgb}{0.99,0.78,0.07}%{100,149,237}
\definecolor[named]{lipicsYellow}{rgb}{0.99,0.78,0.07}
\newcommand{\defeq}{\vcentcolon=}
\newcommand{\ket}[1]{  |{#1} \rangle} %{ \left |{#1}\right \rangle}
\renewcommand{\iff}{\Leftrightarrow}
\newcommand{\se}{\subseteq}
\newcommand{\ls}{\leqslant}
\newcommand{\gs}{\geqslant}
\newcommand{\sm}{\setminus}
\newcommand{\Zp}[1]{\mathbb{Z}/#1\mathbb{Z}}
\DeclareSymbolFont{matha}{OML}{txmi}{m}{it}% txfonts
\definecolor{fillvertices}{RGB}{250,240,230}
\newcommand{\supp}{\textup{\textsf{supp}}}
\renewcommand{\iff}{\Leftrightarrow}
\renewcommand{\iff}{\Leftrightarrow}
\title{Deciding  Local Unitary Equivalence of Graph States in Quasi-Polynomial Time}
\author{Nathan Claudet}{Inria Mocqua and Université de Lorraine, CNRS, LORIA, F-54000 Nancy, France
}{nathan.claudet@inria.fr}{https://orcid.org/0009-0000-0862-1264}{}
\author{Simon Perdrix}{Inria Mocqua and Université de Lorraine, CNRS, LORIA, F-54000 Nancy, France}{simon.perdrix@loria.fr}{https://orcid.org/0000-0002-1808-2409}{}
\authorrunning{N. Claudet and S. Perdrix}
\begin{document}

\maketitle

\begin{abstract}  

    We describe an algorithm with quasi-polynomial runtime $n^{\log_2(n)+O(1)}$ for deciding local unitary (LU) equivalence of graph states. The algorithm builds on a recent graphical characterisation of LU-equivalence via generalised local complementation. By first transforming the corresponding graphs into a standard form using usual local complementations, LU-equivalence reduces to the existence of  a single generalised local complementation that maps one graph to the other. We crucially demonstrate that this reduces to solving a system of quasi-polynomially many linear equations, avoiding an exponential blow-up. As a byproduct, we generalise Bouchet's algorithm for deciding local Clifford (LC) equivalence of graph states by allowing the addition of arbitrary linear constraints. We also improve existing bounds on the size of graph states that are LU- but not LC-equivalent. While the smallest known examples involve 27 qubits, and it is established that no such examples exist for up to 8 qubits, we refine this bound by proving that LU- and LC-equivalence coincide for graph states involving up to 19 qubits.
\end{abstract}

\section{Introduction}

Graph states form a ubiquitous family of quantum states. They are used as entangled resource states in various quantum information applications, such as measurement-based computation \cite{raussendorf2001one, raussendorf2003measurement, briegel2009measurement}, error correction \cite{schlingemann2001quantum,schlingemann2001stabilizer,cross2008codeword,sarvepalli2011local}, quantum communication network routing \cite{hahn2019quantum,meignant2019distributing, bravyi2024generating, Cautres2024}, and quantum secret sharing \cite{markham2008graph,gravier2013quantum},  to cite a few.  
In all these applications, graph states are used as multipartite entangled resources, it is thus crucial to understand when two such states have the same entanglement, i.e.~when they can be transformed into each other using only local operations. SLOCC-equivalence (stochastic local operations and classical communications) is the most general case that encompasses the use of local unitaries and measurements. In the particular case of graph states, it is enough to consider \emph{LU-equivalence} (local unitaries), as two graph states are SLOCC-equivalent if and only if there exists $U=U_1\otimes \ldots \otimes U_n$ that transforms one state into the other, where each $U_i$ is a single-qubit unitary transformation \cite{Hein04}. One can also consider \emph{LC-equivalence} (local Clifford) which is known to be distinct from LU-equivalence, the smallest known examples of graph states that are LU-equivalent but not LC-equivalent have 27 qubits \cite{Ji07,Tsimakuridze17}. 

As their name suggests, graph states can be uniquely represented by simple undirected graphs. Remarkably, LC-equivalence of graph states is captured by applications of a simple transformation on the corresponding graphs: \emph{local complementation} \cite{VandenNest04}. Local complementation consists in complementing the neighbourhood of a given vertex. Local complementation was introduced by Kotzig in the 1960s \cite{Kotzig68}, and has been studied independently of its applications in quantum computing. In particular, Bouchet has introduced an efficient algorithm for deciding whether two graphs are related by a sequence of local complementations \cite{Bouchet1991}. This has led to an efficient algorithm for deciding the local Clifford equivalence of graph states within $O(n^4)$ operations, where $n$ is the number of qubits \cite{VdnEfficientLC}.

Recently a graphical characterisation of LU-equivalence has been introduced by means of \emph{generalised local complementation} \cite{claudet2024local}. The characterisation relies in particular on some peculiar graph structures called \emph{minimal local sets} that are known to be invariant under local unitary transformations \cite{Perdrix06}. In \cite{claudet2024covering}, it was shown that any vertex is covered by a minimal local set and that a family of minimal local sets covering every vertex of the graph, called an \emph{MLS cover}, can be computed efficiently. Roughly speaking each minimal local set imposes a constraint on the local unitary transformations mapping a graph state to another, so that the existence of such a local unitary is reduced to solving a linear system  over integers modulo a power of 2. The solutions can then be graphically interpreted as generalised local complementations. 

Shortly after, an algorithm for deciding LU-equivalence was independently  introduced \cite{burchardt2024algorithmverifylocalequivalence} based on a similar idea of reducing the problem of LU-equivalence to a linear system, benefiting in particular from the fact that an MLS cover can be computed efficiently. The overall complexity of this algorithm for deciding LU-equivalence depends on two parameters, roughly speaking the size of the linear system and the number of connected components of an intersection graph related to the MLS cover. Both parameters can potentially make the runtime of the algorithm exponential. 

We introduce a new algorithm for LU-equivalence of graph states that relies on generalised local complementation and allows us to 
mitigate both sources of exponential complexity. 
First, we reduce the LU-equivalence problem to the existence of a single generalised local complementation. To achieve this efficient reduction, we extend Bouchet's algorithm. Then, we demonstrate that the level of the remaining generalised local complementation can be upper bounded by at most the logarithm of the order $n$ of the graphs, leading to a linear system of size at most $n^{\log_2(n)+O(1)}$. This results in an overall algorithm whose time-complexity is quasi-polynomial in $n$. Notice that the generalisation of Bouchet's algorithm provides an efficient algorithm for deciding whether two graphs are related by local complementations under additional constraints, for instance that the local complementations are applied to a particular subset of vertices.

Thanks to the graphical characterisation of LU-equivalence by means of generalised local complementation, we also address the question of the smallest graphs that are LU- but not LC-equivalent. The study of graph classes where LU-equivalence coincides with LC-equivalence has garnered significant attention  
\cite{Hein04, Hein06, VandenNest05,Zeng07,Ji07,CABELLO20092219,tzitrin2018local,claudet2024local,burchardt2024algorithmverifylocalequivalence}. Notably, the smallest known examples of graphs that are LU- but not LC-equivalent have 27 vertices while it is established that no such counterexamples exist for fewer than 8 vertices \cite{CABELLO20092219}. We significantly  improve this result by showing that any counterexample has at least 20 vertices.  

\section{Preliminaries}

{\bf Notations.} Given an undirected simple graph $G=(V,E)$, we use the notation $u\sim_G v$ when the vertices $u$ and $v$ are connected in $G$, i.e.~$(u,v)\in E$. $N_G(u)=\{v\in V~|~u\sim_G v\}$ is the neighbourhood of $u$, $Odd_G(D) = \{v \in V~|~|N_G(v)\cap D|=1\bmod 2\}$ is the odd-neighbourhood of the set $D\subseteq V$ of vertices, and $\Lambda _G^D = \{v\in V~|~ \forall u\in D, u\sim_G v\}$ is the common neighbourhood of $D\subseteq V$. We assume $V$  totally ordered by a relation $\prec$. A local complementation with respect to a given vertex $u$ consists in complementing the subgraph induced by the neighbourhood of $u$, leading to the graph $G\star u= G\Delta K_{N_G(u)}$ where $\Delta$ denotes the symmetric difference on edges and $K_A$ is the complete graph on the vertices of $A$. With a slight abuse of notation we identify multisets of vertices with their multiplicity function $V\to \mathbb N$ (hence we also identify sets of vertices with their indicator functions $V\to \{0,1\}$). We consider sums of multisets: for any vertex $u$, $(S_1+ S_2)(u) = S_1(u)+S_2(u)$. The support $\supp(S)$ of a multiset $S$ of vertices denotes the set of vertices $u \in V$ such that $S(u) > 0$. $S$ is said independent if no two vertices of $\supp(S)$ are connected.
For any multiset $S$ and set $D$, $S\bullet \Lambda_G^D$ is the number of vertices of $S$, counted with their multiplicity, that are neighbours to all vertices of $D$;  in other words, $S\bullet \Lambda_G^D$ is  the number of common neighbours of $D$ in $S$ ($.\bullet.$ is the scalar product: $A\bullet B = \sum_{u\in V}A(u).B(u)$, so $S\bullet \Lambda_G^D = \sum_{u \in \Lambda_G^D}S(u)$).

To any simple undirected graph $G=(V,E)$, is associated a quantum state $\ket G$, called graph state, defined as $$\ket G = \frac 1{\sqrt {2^n}}\sum_{x\in \{0,1\}^n}(-1)^{|G[x]|}\ket x$$ 
where $n$ is the order of $G$ and $|G[x]|$ denotes the number of edges in the subgraph of $G$ induced by $x$.\footnote{With a slight abuse of notation, $x\in \{0,1\}^n$ denotes the subset of vertices $\{u \in V | x_{\iota(u)} = 1\}$, where $\iota:V\to [0,n-1]$ s.t. $u\prec v\iff \iota(u)<\iota(v)$. $\iota$ is unique as $V$ is totally ordered by $\prec$.}

We are interested in the action of local unitaries on graph states. A local unitary is a tensor product of 1-qubit unitaries like Hadamard $H:\ket{a}\mapsto \frac{\ket 0+(-1)^a\ket1}{\sqrt 2}$, and Z- and X-rotations that are respectively defined as follows:
$$Z(\alpha)\!:=e^{i\frac \alpha2}\left(\cos\left(\frac \alpha2\right)I-i\sin\left(\frac \alpha2\right)Z\right); X(\alpha)\!:=HZ(\alpha)H=e^{i\frac \alpha2}\left(\cos\left(\frac \alpha2\right)I-i\sin\left(\frac \alpha2\right)X\right)$$
where  $X:\ket a\mapsto \ket {1-a}$ and $Z:\ket a \mapsto (-1)^a\ket a$.
Any 1-qubit unitary can be decomposed into $H$ and $Z(\alpha)$ rotations, whereas 1-qubit Clifford operators are those generated by $H$ and $Z(\frac \pi 2)$. Local complementation (denoted by the operator $\star$) can be implemented by local Clifford operators: 
$$\ket{G\star u} = X_u\left(\frac \pi {2}\right)\bigotimes_{v\in N_G(u)}Z_v\left(-\frac \pi {2}\right)\ket{G}$$

Conversely, if two graph states are related by local Clifford unitaries, the corresponding graphs are related by local complementations \cite{VandenNest04}. Thus, we use the term of LC-equivalence to describe both local Clifford equivalent graph states, and graphs related by local complementations (conveniently, LC stands for both \emph{local Clifford} and \emph{local complementation}). Similarly, we say that two graphs are LU-equivalent (resp. LC$_r$-equivalent) when there is a local unitary (resp. a local unitary generated by $H$ and $Z(\frac \pi {2^r})$) transforming the corresponding graph states into each other. 

Graph states form a subfamily of the well-known \emph{stabilizer states}, indeed $\ket G$ is the fix point of $X_u\bigotimes_{v\in N_G(u)}Z_v$ for any $u\in V$. When analysing the entanglement properties of stabilizer states, it is natural to focus on graph states as every stabilizer state is known to be local Clifford equivalent to a graph state \cite{VandenNest04}. Moreover, there are efficient procedures to associate with any stabilizer state an LC-equivalent graph state \cite{VdnEfficientLC}, thus the problem of deciding the LU-equivalence of stabilizer states naturally reduces to the LU-equivalence of graph states.

We describe in the next section a recent graphical characterisation of LU- and LC$_r$-equivalences of graph states based on the so-called generalised local complementation~\cite{claudet2024local}. 

\subsection{Generalised local complementation}

We review the definition of generalised local complementation and a few of its basic properties. The reader is referred to \cite{claudet2024local} for a more detailed introduction. A generalised local complementation is a graph transformation parametrised by an independent (multi)set of vertices $S$ and a positive number $r$ called \emph{level}. Like the usual local complementation, the transformation consists in toggling some of the edges of the graph depending on the number of neighbours the endpoint vertices have in common in $S$. Roughly speaking an $r$-local complementation toggles an edge if the number of common vertices in $S$ is an odd multiple of $2^{r-1}$ (an example of $2$-local complementation is given in \cref{fig:generalized_lc}). To be a valid $r$-local complementation, the (multi)set $S$ on which the transformation is applied should  be $r$-incident, i.e.~the number of common neighbours in $S$ of any set of at most $r$ vertices should be an appropriate power of two: 

\begin{definition}[$r$-Incidence]\label{def:r-inc}
    Given a graph $G$, a multiset $S$ of vertices is  $r$-incident, if for any $k\in [0,r)$, and any $K\subseteq V\setminus \supp(S)$ of size $k+2$, their number $S\bullet \Lambda_G^K$ of common neighbours in $S$ is a multiple of $2^{r-k-\delta(k)}$,
    where $\delta$ is the Kronecker delta\footnote{$\delta(x)\in \{0,1\}$ and $\delta(x)=1 \Leftrightarrow x=0$.
    }.
\end{definition}
 
\begin{definition}[$r$-Local Complementation]\label{def:r-LC}
    Given a graph $G$ and an $r$-incident independent multiset $S$, let $G\star^rS$ be the graph defined as
    $$u\sim_{G\star^r S} v ~\Leftrightarrow~\left(u\sim_{G} v ~~\oplus~~ S \bullet\Lambda_G^{u,v} = 2^{r-1}\bmod 2^{r}\right)$$
\end{definition}

%It is easy to see that the multiplicity in $S$ can, without loss of generality, be upperbounded by $2^r$.

\noindent
Below we recall some basic properties of generalised local complementation.

\begin{proposition}{\bf \cite{claudet2024local}}
    ~
    \begin{itemize}
        \item Generalised local complementations are self inverse: if $G\star^r S$ is valid, then $(G \star^r S ) \star^r S = G$.
        \item The multiplicity in $S$ can be upperbounded by $2^r$: if $G\star^r S$ is valid, then $G\star^r S= G\star^r S'$, where, for any vertex $u$, $S'(u)=S(u)\bmod 2^r$.
        \item If $G\star^r S_1$ and $G\star^r S_2$ are valid and  %\footnote{for any vertex $u$, $(S_1\sqcup S_2)(u) = S_1(u)+S_2(u)$.}
        $S_1 + S_2$ is independent in $G$, then $G\star^r (S_1+ S_2) = (G\star^r S_1)\star^r S_2$. %\Ncom{Maybe define $\sqcup$ before.}
        \item If $G\star^r S$ is valid then $G\star^{r+1} (S + S)$ is valid and induces the same transformation: $G\star^{r+1} (S + S) = G\star^{r+1} S\star^{r+1} S = G\star^rS$.%, where $S\sqcup S$ is the multiset obtained from $S$ by doubling the multiplicity of each vertex.
        \item If $G\star^r S$ is valid then $G\star^{r-1} S$ is valid (when $r>1$) and  $G\star^{r-1} S = G$.
    \end{itemize}
\end{proposition}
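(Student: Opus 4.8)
The plan is to derive all five items from two elementary observations — one combinatorial, one arithmetic — after which the items follow by bookkeeping, Item~3 requiring the most care. The combinatorial observation is a \emph{locality} statement: since $S$ is independent, for any $w\in\supp(S)$ and any vertex $u$ one has $S\bullet\Lambda_G^{w,u}=0\ne 2^{r-1}\bmod 2^{r}$, so $\star^r S$ toggles no edge incident to $\supp(S)$. Hence, if $T$ is a multiset whose support is non-adjacent (in $G$) to every vertex of $\supp(S)$ — the cases needed are $T=S$ itself, and $T=S_2$ when $S_1{+}S_2$ is independent — then $\star^r S$ alters no edge having an endpoint in $\supp(T)$, so $T\bullet\Lambda_{G\star^r S}^{K}=T\bullet\Lambda_G^{K}$ for every $K$ disjoint from $\supp(T)$. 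The same computation shows $\supp(T)$ stays independent in $G\star^r S$ and that $S$ stays $r$-incident there; combined with the trivial facts that an $r$-incident set is $(r{-}1)$-incident and that $S{+}S$ is $(r{+}1)$-incident (since $(S{+}S)\bullet\Lambda = 2\,(S\bullet\Lambda)$, the extra factor $2$ supplying the one higher power of $2$ demanded by the definition), this ensures every composition written in the statement is well defined.

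The arithmetic observation concerns the toggling predicate. For $u,v\notin\supp(S)$, the $k=0$ case of $r$-incidence forces $S\bullet\Lambda_G^{u,v}\equiv 0\bmod 2^{r-1}$, so the predicate $[\,S\bullet\Lambda_G^{u,v}=2^{r-1}\bmod 2^{r}\,]$ is precisely the bit ``$S\bullet\Lambda_G^{u,v}/2^{r-1}$ is odd''. On the set of $r$-incident multisets (which is closed under addition) this bit is $\mathrm{GF}(2)$-linear in $S$, which yields at once everything needed: it is unchanged when each $S(w)$ is replaced by $S(w)\bmod 2^r$ (Item~2); it is the exclusive-or of the corresponding bits of $S_1$ and $S_2$ when both are $r$-incident (the crux of Item~3); it equals the level-$(r{+}1)$ predicate of $S{+}S$, because $2x\equiv 2^{r}\bmod 2^{r+1}\iff x\equiv 2^{r-1}\bmod 2^{r}$ (Item~4); and, since $2^{r-1}\mid S\bullet\Lambda_G^{u,v}$ precludes $S\bullet\Lambda_G^{u,v}\equiv 2^{r-2}\bmod 2^{r-1}$ for $r>1$, the level-$(r{-}1)$ predicate is identically false (Item~5).

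With these in hand the items fall out. Item~1 is ``toggle each edge twice'': the predicate of $S$ on $G\star^r S$ equals the one on $G$ by the locality observation, so $\star^r S$ applied twice is the identity. Items~2 and~5 follow by verifying that the two graphs being compared have exactly the same set of toggled edges — empty, in the case of Item~5. Item~4's equality $G\star^{r+1}(S{+}S)=G\star^r S$ is checked the same way via the $2x$-identity; the intermediate expression $G\star^{r+1}S\star^{r+1}S$ merely records this transformation as two copies of $S$ at level $r{+}1$, an instance of Item~3 whenever $G\star^{r+1}S$ is itself valid. Item~3 is the substantive one: fix an edge $uv$ and split on whether an endpoint lies in $\supp(S_1{+}S_2)$. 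If it does, all relevant $\bullet\Lambda$ quantities vanish by independence and both sides leave $uv$ untouched. If it does not, the right-hand side toggles $uv$ once for each $i$ with the level-$r$ predicate of $S_i$ true — the second toggle being evaluated on $G\star^r S_1$ but, by the locality observation, equal to the one on $G$ — so its net effect is the exclusive-or of the two bits, which by the arithmetic observation is exactly the single predicate governing the left-hand side $G\star^r(S_1{+}S_2)$.

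I expect the main obstacle to be the $\mathrm{GF}(2)$-linearity used for Items~3 and~4: it is \emph{false} for general integers — ``$=2\bmod 4$'' is not additive — and holds only because $r$-incidence confines $S\bullet\Lambda_G^{u,v}$ to $\{0,2^{r-1}\}\bmod 2^{r}$. The care therefore goes into re-establishing $r$-incidence (via the locality observation) at each link of the chain of $\star$'s before appealing to additivity, and in keeping track of which endpoints lie in which supports; each individual verification is routine once the hypotheses are kept synchronised.
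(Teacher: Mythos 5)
Your proof is correct, but note that the paper itself offers no argument for this proposition: it is recalled verbatim from \cite{claudet2024local} with a citation, so there is nothing internal to compare against. Your two observations are exactly the right ones and are applied soundly: the locality fact (independence of $S$ forces $S\bullet\Lambda_G^{w,u}=0$ for $w\in\supp(S)$, so no edge meeting $\supp(S)$ is toggled, hence neighbourhoods of $\supp(S)$ -- and therefore all quantities $T\bullet\Lambda^K$ for $T$ supported on vertices non-adjacent to $\supp(S)$ -- are preserved), and the arithmetic fact that the $k=0$ incidence condition pins $S\bullet\Lambda_G^{u,v}$ to a multiple of $2^{r-1}$, turning the toggling predicate into a $\mathbb{F}_2$-linear functional of $S$ on the cone of $r$-incident multisets. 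You correctly re-verify $r$-incidence and independence at each intermediate graph before invoking additivity, which is where a careless version of this argument would break. The one genuine wrinkle is the one you flagged yourself: the middle expression $G\star^{r+1}S\star^{r+1}S$ in the fourth item presupposes that $S$ is $(r+1)$-incident, which does not follow from $r$-incidence; your reading of that term as conditional (or as mere notation for the decomposition of $S+S$ via the third item) is the only sensible one, and the defect lies in the statement as quoted rather than in your proof. A self-contained derivation like yours is arguably a useful addition, since the paper leans on all five items (e.g.\ in \cref{prop:decomposition} and \cref{lemma:omega}) without reproving them.
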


\subsection{1- and 2-local complementation} 

To illustrate how $r$-local complementation behaves, we consider the simple cases $r=1$ and $r=2$. First, any multiset $S$ is $1$-incident, and a 1-local complementation is nothing but a sequence of usual local complementations. 2-local complementations cannot always be decomposed into usual local complementations, it is however sufficient to consider $2$-local complementations over sets, rather than multisets (see \cref{fig:generalized_lc}):

\begin{figure}[h]
    \centering
    \begin{tikzpicture}[xscale = 0.4,yscale=0.3]
    
    \begin{scope}[every node/.style={circle,minimum size=15pt,thick,draw,fill=lipicsYellow, inner sep = 0pt}]
        \node (U1) at (-5,5) {$a$};
        \node (U2) at (0,5) {$b$};
        \node (U3) at (5,5) {$c$};
        \node (U4) at (-5,0) {$d$};
        \node (U5) at (0,0) {$e$};
        \node (U6) at (5,0) {$f$};
    \end{scope}
    \begin{scope}[every node/.style={},
                    every edge/.style={draw=darkgray,very thick}]                   
        \path [-] (U1) edge node {} (U4);
        \path [-] (U1) edge node {} (U5);
        \path [-] (U1) edge node {} (U6);
        \path [-] (U2) edge node {} (U5);
        \path [-] (U2) edge node {} (U6);
        \path [-] (U3) edge node {} (U5);
        \path [-] (U3) edge node {} (U6);
    
        \path [-] (U4) edge node {} (U5);
        \path [-] (U5) edge node {} (U6);
    \end{scope}
    
    \begin{scope}[shift={(23,0)},every node/.style={circle,minimum size=15pt,thick,draw,fill=lipicsYellow, inner sep = 0pt}]
        \node (U1) at (-5,5) {$a$};
        \node (U2) at (0,5) {$b$};
        \node (U3) at (5,5) {$c$};
        \node (U4) at (-5,0) {$d$};
        \node (U5) at (0,0) {$e$};
        \node (U6) at (5,0) {$f$};
    \end{scope}
    \begin{scope}[every node/.style={},
                    every edge/.style={draw=darkgray,very thick}]                   
        \path [-] (U1) edge node {} (U4);
        \path [-] (U1) edge node {} (U5);
        \path [-] (U1) edge node {} (U6);
        \path [-] (U2) edge node {} (U5);
        \path [-] (U2) edge node {} (U6);
        \path [-] (U3) edge node {} (U5);
        \path [-] (U3) edge node {} (U6);
    
        \path [-] (U4) edge[bend right] node {} (U6);
        \path [-] (U5) edge node {} (U6);

    \end{scope}
    
    \draw[-Stealth, line width=2pt] (8.5,2.5) -- (14,2.5);
    \node (t) at (11.25,3.5) {\small 2-local complementation};
    \node (t2) at (11.25,1.5) {\small  over $\{a,a,b,c\}$};

    \end{tikzpicture}
    \caption{Illustration of a 2-local complementation over the multiset $S = \{a,a,b,c\}$. $S$ is 2-incident: indeed $S\bullet \Lambda_G^{\{d,e,f\}} = 2$, which is a multiple of $2^{2-1-0} = 2$. Similarly, $S\bullet \Lambda_G^{\{d,e\}} = S\bullet \Lambda_G^{\{d,f\}} = 2$ and $S\bullet \Lambda_G^{\{e,f\}} = 4$. Edges $de$ and $df$ are toggled as $S\bullet \Lambda_G^{\{d,e\}} = S\bullet \Lambda_G^{\{d,f\}} = 2\bmod 4$, but not edge $ef$ as $S\bullet \Lambda_G^{\{e,f\}} = 0\bmod 4$. Following \cref{prop:decomposition}, the 2-local complementation over $S$ can be decomposed into a 2-local complementation over the set $\{b,c\}$ and a 1-local complementation over the set $\{a\}$.}
    \label{fig:generalized_lc}    
    \end{figure}
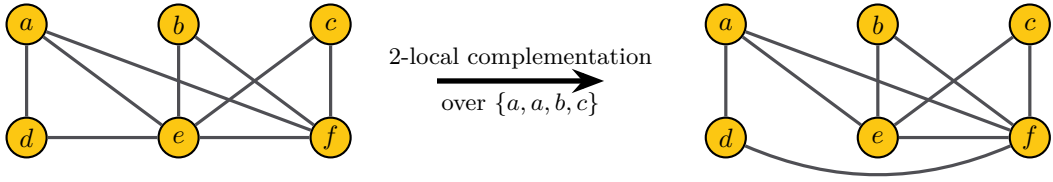

\begin{proposition} \label{prop:decomposition}
Any 2-local complementation can be decomposed into 1- and 2-local complementations over sets.
\end{proposition}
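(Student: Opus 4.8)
The plan is to reduce an arbitrary 2-local complementation $G\star^2 S$ to a composition of 2-local complementations over sets and 1-local complementations over sets, by peeling off the multiplicities in $S$ one at a time. Recall from the recalled proposition that multiplicities in $S$ can be taken modulo $2^2=4$, so for each vertex $u\in\supp(S)$ we have $S(u)\in\{1,2,3\}$. The key observation is a ``parity splitting'': write $S = S_{\mathrm{odd}} + S_{\mathrm{even}}$ where $S_{\mathrm{odd}}$ is the indicator set $\{u : S(u)\text{ odd}\}$ and $S_{\mathrm{even}}(u) = S(u)-S_{\mathrm{odd}}(u)\in\{0,2\}$, so $S_{\mathrm{even}} = T+T$ for the set $T=\{u:S(u)=2\text{ or }3\}$. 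By the penultimate bullet of the recalled proposition, a 2-local complementation over $T+T$ equals a 1-local complementation over $T$ (applied twice, i.e.\ $G\star^2(T+T)=G\star^1 T$), and the last bullet type of reasoning handles validity. So the crux is to show that $G\star^2 S = (G\star^1 T)\star^2 S_{\mathrm{odd}}$, with all intermediate steps valid, where now $S_{\mathrm{odd}}$ is a genuine \emph{set}. One then still has to turn a 2-local complementation over a set into 1- and 2-local complementations over sets — but that is already the goal form, so really the only work is the multiset-to-set reduction.

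First I would make precise the decomposition lemma underlying the ``$S_1+S_2$'' bullet: if $G\star^2 S_1$ is valid and $S_1+S_2$ is independent, and $(G\star^2 S_1)\star^2 S_2$ is valid, then $G\star^2(S_1+S_2) = (G\star^2 S_1)\star^2 S_2$. The recalled proposition gives this when both $G\star^r S_1$ and $G\star^r S_2$ are valid; I would check whether validity of $S_2$ on the transformed graph suffices, or alternatively argue validity directly. Taking $S_1 = T+T$ and $S_2 = S_{\mathrm{odd}}$: independence of $S$ in $G$ gives independence of $S_{\mathrm{odd}}$ and of $T+T$ and of their sum, since all supports lie inside $\supp(S)$. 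Then $G\star^2(T+T)$ is valid and equals $G\star^1 T = G\star T_1\star\cdots$ (a sequence of ordinary local complementations over the vertices of $T$, in some order), hence is a 1-local complementation over the set $T$. It remains to verify that $S_{\mathrm{odd}}$ is $2$-incident for the graph $G\star^1 T$; here I would use that $r$-incidence of $S$ in $G$ is about the quantities $S\bullet\Lambda^K$, and that $S_{\mathrm{odd}}\bullet\Lambda_{G\star^1 T}^K$ can be related back to $S\bullet\Lambda_G^{K'}$ via how ordinary local complementation acts on common neighbourhoods — modulo $4$ the even part $T+T$ contributes $0$, so the incidence constraint on $S$ modulo $2^{2-k-\delta(k)}$ transfers to $S_{\mathrm{odd}}$.

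The main obstacle I expect is precisely this last point: controlling how $r$-incidence and the edge-toggling condition $S\bullet\Lambda_G^{u,v}=2^{r-1}\bmod 2^r$ behave under an intervening ordinary local complementation $G\mapsto G\star t$. The neighbourhood of a vertex, and hence $\Lambda_G^{u,v}$ and the common-neighbour counts $S\bullet\Lambda_G^K$, change in a controlled but fiddly way when we local-complement at $t\in T$, especially because $t$ itself lies in $\supp(S)$. The cleanest route is probably to argue at the level of graph states rather than graphs: $\ket{G\star^2 S}$ is obtained from $\ket G$ by the local unitary prescribed in \cite{claudet2024local} (some product of $Z(\pm\pi/2)$ and $Z(\pm\pi/4)$ type gates indexed by $S$ and the $\Lambda^D$'s); the even multiplicities contribute $Z(\pm\pi/2)$ (Clifford) factors that amalgamate into the ordinary-local-complementation unitaries, and what is left is the set-valued 2-local complementation unitary. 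Reading this factorisation back through the characterisation of \cite{claudet2024local} gives the graph identity without any neighbourhood bookkeeping. I would present the proof in this operator form, falling back on the combinatorial bullets of the recalled proposition to justify each factor corresponds to a valid transformation, and only doing the neighbourhood computation if the operator argument needs it to confirm validity of $S_{\mathrm{odd}}$ on the intermediate graph.
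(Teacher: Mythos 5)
Your decomposition is, up to relabelling, exactly the paper's: your $T$ is the paper's $S_1=\{u : S(u)\in\{2,3\} \bmod 4\}$ and your $S_{\mathrm{odd}}$ is the paper's set $\supp(S+S_1+S_1)$. The difference is the order of composition, and that order is precisely what creates the one obstacle you then do not clear. The paper writes $G\star^2 S = G\star^2(S+S_1+S_1)\star^1 S_1$: the $2$-local complementation over a set is applied to the \emph{original} graph $G$, where its validity is immediate ($S+S_1+S_1$ is a sum of $2$-incident multisets, hence $2$-incident), and the only operation applied to a transformed graph is the trailing $1$-local complementation over $S_1$, which needs no incidence condition at all. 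Your version $(G\star^1 T)\star^2 S_{\mathrm{odd}}$ puts the $2$-local complementation last, so you are left having to check that $S_{\mathrm{odd}}$ is $2$-incident in $G\star^1 T$; you correctly flag this as the main obstacle but then only gesture at a graph-state/operator argument without carrying it out, so as written the proof is incomplete.

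The gap is real but easily closed, in either of two ways. (a) Reverse the order, as the paper does: append $\star^1 S_1\star^1 S_1$ (the identity), rewrite the first copy as $\star^2(S_1+S_1)$, and merge it into $S$ by additivity; every validity check then takes place on $G$. (b) Keep your order, but use the recalled additivity property as stated rather than the weakened variant you propose to re-derive: for $r=2$ every incidence condition is a parity condition (the required modulus $2^{2-k-\delta(k)}$ equals $2$ for $|K|\in\{2,3\}$), and $S_{\mathrm{odd}}\bullet\Lambda_G^K = S\bullet\Lambda_G^K - 2\,(T\bullet\Lambda_G^K) = S\bullet\Lambda_G^K \bmod 2$, so $S_{\mathrm{odd}}$ is already $2$-incident in $G$ itself (for the extra sets $K$ meeting $\supp(S)\setminus S_{\mathrm{odd}}$ the count is $0$ by independence of $S$). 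Since $G\star^2(T+T)$ and $G\star^2 S_{\mathrm{odd}}$ are both valid on $G$ and their sum $S$ is independent, the additivity property yields $G\star^2 S = (G\star^2(T+T))\star^2 S_{\mathrm{odd}} = (G\star^1 T)\star^2 S_{\mathrm{odd}}$ directly, with the well-definedness of the composite being part of its conclusion. No neighbourhood bookkeeping on $G\star^1 T$, and no graph-state detour, is needed.
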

\begin{proof}
    Given a 2-local complementation over a multiset $S$, we assume without loss of generality that multiplicities are defined modulo $2^2 = 4$. Let  $S_1$ be the set of vertices that have multiplicity $2$ or $3$ in $S$. Notice that $G\star^2 S = G\star^2 S\star^1 S_1\star^1 S_1$ as $S_1$ is an independent set, $S_1$ is 1-incident and generalised local complementations are self inverse. So $G\star^2 S = G\star^2 S\star^2 (S_1 + S_1)\star^1 S_1 = G\star^2(S+ S_1 + S_1)\star^1 S_1$, where the multiplicity in %\footnote{for any vertex $u$, $(S\sqcup S')(u) = S(u)+S'(u)$.} 
    $S+ S_1 + S_1$ is either $0$ or $1$ modulo $4$. Thus the $2$-local complementation over the multiset $S$ can be decomposed into 2- and 1-local complementations over sets $\supp(S+ S_1+S_1)$ and $S_1$ respectively. 
\end{proof}

The 2-incidence condition can be rephrased as follows when $S$ is a set: for any subset $K$ of $V\setminus S$ of size $2$ or $3$, there is an even number of common neighbours in $S$: $|S\cap \Lambda_G^K|=0\mod 2$. In other words, the cut matrix describing the edges between $S$ and $V\setminus S$ is tri-orthogonal \cite{bravyi2012magic, shi2024, nezami2022}.  

\subsection{LU-equivalence and generalised local complementation}
 
While local complementation can be implemented on graph states by means of local Clifford unitaries, $r$-local complementations can be implemented on graph states with local unitaries generated by $H$ and $Z\left(\frac \pi {2^r}\right)$:

$$\ket{G\star^r S} = \bigotimes_{u\in V}X\left(\frac {S(u)\pi}{2^r}\right)\bigotimes_{v\in V}Z\left(-\frac {\pi}{2^r}\sum_{u \in N_G(v)}S(u)\right)\ket{G}$$
Conversely, if two graph states are related by local unitaries generated by $H$ and $Z\left(\frac \pi {2^r}\right)$, the corresponding graphs are related by $r$-local complementations \cite{claudet2024local}. In other words two graphs are LC$_r$-equivalent if and only if there is a series of $r$-local complementations transforming one into another. Two LC$_r$-equivalent graphs are also LC$_{r+1}$-equivalent, however the converse does not hold, resulting in an infinite strict hierarchy of local equivalences \cite{claudet2024local}. 
Most importantly, generalised local complementations capture the LU-equivalence of graphs:

\begin{theorem}{\bf \cite{claudet2024local}}\label{thm:LU_imply_LCr}
    If $G_1$ and $G_2$ are LU-equivalent, then $G_1$ and $G_2$ are LC$_{\lfloor n/2 \rfloor-1}$ equivalent, where $n$ is the order of the graphs, i.e.~there exists a sequence of $({\lfloor n/2 \rfloor-1})$-local complementations transforming $G_1$ into $G_2$.
\end{theorem}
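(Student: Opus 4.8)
The plan is to reduce everything to a quantitative control of the angles occurring in a local unitary relating the two graph states. So suppose $U=\bigotimes_{i\in V}U_i$ satisfies $U\ket{G_1}=\ket{G_2}$. By the converse half of the $r$-local complementation correspondence recalled just before the theorem (a local unitary generated by $H$ and $Z(\pi/2^{r})$ that maps $\ket{G_1}$ to $\ket{G_2}$ witnesses LC$_r$-equivalence), it suffices to produce such a $U$ that is generated by $H$ and $Z(\pi/2^{r})$ for some $r\ls\lfloor n/2\rfloor-1$. Here one is free to pre- and post-compose $U$ with the Clifford unitaries implementing ordinary local complementations on $G_1$ and $G_2$: this only replaces $G_1,G_2$ by LC-equivalent graphs, and since ordinary local complementation is a $1$-local — hence a fortiori an $r$-local — complementation, the final bound on $r$ is unaffected. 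So the task is to normalise the Clifford part of an arbitrary $U$ away and bound the ``denominators'' of what remains.

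First I would pin $U$ down qubit by qubit using the stabiliser formalism. The group $U\,\mathcal S(G_1)\,U^{\dagger}$ is an abelian group of unitaries fixing $\ket{G_2}$; crucially it need \emph{not} lie inside the Pauli group — this slack is exactly why LU and LC differ. Let $\mathcal T\le\mathcal S(G_1)$ be the subgroup of stabilisers that $U$ still conjugates into a Pauli operator. Any element of $\mathcal T$ of the form $\prod_{v\in D}K_v^{G_1}$ is a Pauli supported on the local set $A$ consisting of $D$ together with its odd neighbourhood in $G_1$; matching it with a Pauli stabiliser of $\ket{G_2}$ — legitimate since (minimal) local sets are LU-invariant and cover every vertex — forces, up to a local Clifford correction on $A$ that we absorb into $G_1,G_2$, each $U_i$ with $i\in A$ into the normal form $X_i(\beta_i)Z_i(\alpha_i)$, with the vector $(\alpha_i,\beta_i)_{i\in A}$ lying in the solution set over $\mathbb R/2\pi\mathbb Z$ of a homogeneous linear system whose matrix is the cut (bipartite adjacency) matrix of $A$ in $G_1$. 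Letting $A$ range over all minimal local sets then describes $U$, after the corrections, by an explicit family of linear constraints on its angles.

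The heart of the matter — and the step I expect to be the main obstacle — is to show that these constraints admit only solutions whose entries are integer multiples of $\pi/2^{r}$ with $r\ls\lfloor n/2\rfloor-1$. A single local set of size $m$ only bounds the denominators by something like $2^{m}$, which is far too weak; one must combine the \emph{simultaneous} constraints coming from all local sets through a vertex — equivalently, the action of $U$ on all of $\mathcal S(G_1)$ at once rather than on one element at a time — with a genuinely $2$-adic, not merely $\mathbb F_2$, analysis of the cut matrices involved. The exponent $\lfloor n/2\rfloor$ is meant to enter through the elementary bound that the rank of the cut matrix of a set $A$ is at most $\min(|A|,n-|A|)\ls\lfloor n/2\rfloor$, refined to the $2$-adic setting, with the extra $-1$ coming from the fact that one qubit of each relevant local set can be normalised away. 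Concretely, one would argue that the group generated by all admissible angle vectors is finite with exponent dividing $2^{\lfloor n/2\rfloor-1}$, so that the corrected $U$ is generated by $H$ and $Z(\pi/2^{\lfloor n/2\rfloor-1})$.

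Given this quantisation, the conclusion is immediate from the cited converse: $G_1$ and $G_2$ (after the absorbed local complementations, hence also before them) are related by a sequence of $(\lfloor n/2\rfloor-1)$-local complementations, i.e. LC$_{\lfloor n/2\rfloor-1}$-equivalent. In short, the two real pieces of work are (i) turning the single global condition ``$U$ sends $\ket{G_1}$ to a graph state'' into a qubit-local normal form with linearly constrained angles — a stabiliser-formalism argument in the spirit of the known rigidity theorems for local unitaries on stabiliser states, but requiring care because $U\mathcal S(G_1)U^\dagger$ is not Pauli — and (ii) the $2$-adic rank estimate that collapses the denominator exponent from the naive $O(n)$ down to exactly $\lfloor n/2\rfloor-1$; it is (ii) where the difficulty concentrates.
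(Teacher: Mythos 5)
First, note that the paper does not actually prove this theorem: it is imported from \cite{claudet2024local}, so the comparison is with the proof given there. Your overall skeleton --- normalise away the Clifford part, reduce to showing that the remaining rotation angles are integer multiples of $\pi/2^{\lfloor n/2\rfloor-1}$, and conclude via the converse direction of the LC$_r$ correspondence --- is the right one and matches the cited argument. But there is a genuine gap exactly where you yourself say the difficulty concentrates: the quantisation of the angles is never established, only described as something ``one would argue''. Worse, the mechanism you propose for it is not the one that works. You dismiss the per-local-set bound (``a single local set of size $m$ only bounds the denominators by something like $2^{m}$, which is far too weak'') and replace it with an unformulated ``$2$-adic rank estimate'' on cut matrices combining all local sets through a vertex. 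The missing ingredient is the combinatorial fact that \emph{minimal} local sets have size at most $\lfloor n/2\rfloor+1$: for any $K\subseteq V$ with $|K|>n-|K|$, the $\mathbb F_2$-linear map $D\mapsto (D\cup Odd_G(D))\setminus K$ has nontrivial kernel, so $K$ contains a local set (this is exactly the bound invoked in the remark of \cref{app:standardform}). Combined with the covering result that every vertex lies in some minimal local set, a \emph{single} minimal local set of size $m$ containing a vertex $u$ already quantises the angle at $u$ with denominator exponent $m-2$, and the theorem's exponent is just $(\lfloor n/2\rfloor+1)-2=\lfloor n/2\rfloor-1$. No global $2$-adic rank analysis is needed, and the bound $\min(|A|,n-|A|)$ on the rank of a cut matrix is not where $\lfloor n/2\rfloor$ comes from.

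There is a second, smaller gap. You absorb ``a local Clifford correction on $A$'' separately for each minimal local set $A$, but distinct minimal local sets necessarily overlap (each has size at most $\lfloor n/2\rfloor+1$, yet together they must cover $V$), so these local corrections have to be shown mutually consistent before you can speak of a single corrected $U$ in normal form. This consistency is precisely what the type and standard-form machinery of \cite{claudet2024local} (reproduced in \cref{subsec:mls}) is for, and it cannot be waved away. As written, then, the proposal is a plausible plan whose two load-bearing steps --- the coherence of the qubit-local normal form across overlapping local sets, and the denominator bound --- are both missing, and the route sketched for the second one points away from the argument that actually closes it.
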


Finally, a peculiar property is that for any pair of LU-equivalent graphs, a single generalised complementation, together with usual local complementations, is sufficient to transform one graph into the other:

\begin{proposition}{\bf \cite{claudet2024local}} \label{prop:LCr_lc}
 If $G_1$ and $G_2$ are LC$_r$-equivalent, then $G_1$ and $G_2$ are related by a sequence of generalised local complementations, such that a single one is of level $r$, all the others are usual local complementations (i.e.~level $1$).

\end{proposition}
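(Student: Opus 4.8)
My plan is to descend from an arbitrary witnessing sequence of $r$-local complementations to one containing a single genuinely level-$r$ step, and there are two natural routes; I would favour the first, which works at the level of the implementing local unitary, keeping the second, purely combinatorial one as a fallback.

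\emph{Unitary route.} By the implementation formulas, $G_1$ and $G_2$ being LC$_r$-equivalent means $\ket{G_2}=U\ket{G_1}$ for a local unitary $U=\bigotimes_{u\in V}U_u$ with each $U_u$ in the group $\mathcal G_r:=\langle H, Z(\pi/2^r)\rangle$. I would first put $U$ in a normal form using only local Cliffords, which on the graph side are usual local complementations: after replacing $G_1$ and $G_2$ by LC-equivalent graphs $\tilde G_1$ and $\tilde G_2$, one can assume that $U$ acts as an $X$-rotation by a multiple of $\pi/2^r$ on each vertex of a set $S$ that is independent in $\tilde G_1$, as a $Z$-rotation by a multiple of $\pi/2^r$ on every other vertex, and moreover with the $Z$-angle on a vertex $v$ equal to $-\tfrac{\pi}{2^r}\sum_{u\in N_{\tilde G_1}(v)}S(u)$. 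This is exactly the shape of the implementation formula for $\tilde G_1\star^r S$, and it is what the analysis underpinning the characterisation of LU-equivalence by generalised local complementation extracts from an LU between graph states \cite{claudet2024local}: the minimal-local-set invariants of \cite{Perdrix06,claudet2024local} constrain the action of $U$ around each vertex and force each $U_u$, up to local Cliffords, to be a rotation about a fixed axis determined by whether the vertex lies inside or on the odd boundary of a minimal local set; the relation between the $X$- and $Z$-angles is then forced because a purely diagonal local unitary fixing graph-state-ness is trivial up to global phase (comparing the quadratic form $|G[x]|$ in the amplitudes with a linear function of $x$ kills the linear part). Hence $U\ket{\tilde G_1}=\ket{\tilde G_1\star^r S}$, the multiset $S$ being independent and $r$-incident in $\tilde G_1$ because this $r$-local complementation is valid. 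Chaining the three stages — a sequence of usual local complementations from $G_1$ to $\tilde G_1$, the single step $\tilde G_1\star^r S=\tilde G_2$, and a sequence of usual local complementations from $\tilde G_2$ to $G_2$ — yields the claimed sequence.

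\emph{Combinatorial route.} Alternatively, start from a witnessing sequence $G_1=H_0,H_1,\dots,H_k=G_2$ of $r$-local complementations and induct on the number of genuinely level-$r$ steps. The inductive step is the lemma that two such steps, with usual local complementations possibly interleaved, collapse to one modulo usual local complementations. I would prove it by first commuting the interleaved level-1 steps past one of the two level-$r$ steps — conjugating an $r$-local complementation by a local Clifford is again an $r$-local complementation once the multiset is transported — so that the two level-$r$ steps become consecutive, say $H'\star^r S_1\star^r S_2$; then I would transport $S_2$ back across $\star^r S_1$ and use the combination property of generalised local complementations to merge into a single $\star^r S$ on $H'$, restoring independence by a multiset-to-set decomposition in the spirit of \cref{prop:decomposition} whenever the transported sum fails to be independent.

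The hard part, in either route, is the same bookkeeping. In the unitary route it is the normal form: one must simultaneously arrange that the vertices carrying a genuine fractional rotation form an independent set of $\tilde G_1$ — otherwise the combination property does not apply and the identification with a single $r$-local complementation breaks — and show that the residual single-qubit mismatch on all remaining vertices is Clifford. In the combinatorial route the same difficulty resurfaces as having to keep the transported multiset independent and $r$-incident across the first complementation, whose neighbourhood changes affect precisely the common-neighbour counts $S\bullet\Lambda_{G}^{K}$ that govern both conditions.
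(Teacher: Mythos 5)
The paper does not prove this proposition: it is imported verbatim from \cite{claudet2024local} (note the citation in the statement), and the surrounding machinery of Section 3.2 --- MLS covers, standard forms, and \cref{lemma:standardform_LCr_lc} --- is the form in which that prior analysis is actually used here. So there is no in-paper proof to compare against; what can be assessed is whether your attempt stands on its own, and it does not. Both of your routes are plans that explicitly defer the decisive step. In the unitary route, the ``normal form'' --- that after local Cliffords the residual unitary is an $X$-rotation by multiples of $\pi/2^r$ on an independent set $S$ with the matching $Z$-angles $-\tfrac{\pi}{2^r}\sum_{u\in N_{\tilde G_1}(v)}S(u)$ on the other vertices --- is not a preprocessing step; it \emph{is} the theorem. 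The minimal-local-set invariants only pin down the rotation axis of each $U_u$ up to Clifford; forcing the angles into the $\pi/2^r$ lattice, arranging independence of the support, and matching the $Z$-angles to the neighbourhood sums is exactly the content of the standard-form analysis in \cite{claudet2024local}, and you acknowledge rather than supply it.

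The combinatorial route has a sharper problem. The merging property quoted in the paper requires $G\star^r S_1$ and $G\star^r S_2$ to both be valid \emph{on the same graph} $G$ with $S_1+S_2$ independent \emph{in $G$}; in your situation $S_2$ is only known to be $r$-incident and independent in $H'\star^r S_1$, and transporting it back across $\star^r S_1$ is not covered by any stated property --- the quantities $S_2\bullet\Lambda^K$ that govern both $r$-incidence and the toggled edges change when the first complementation rewires $\Lambda^K$. Likewise, ``conjugating an $r$-local complementation by a local Clifford is again an $r$-local complementation once the multiset is transported'' is asserted, not proved, and is false as literally stated when the local complementation is applied at a vertex of $\supp(S)$ or its neighbourhood without further correction. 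Finally, the proposed repair of failed independence ``in the spirit of \cref{prop:decomposition}'' leans on a statement the paper only establishes for $r=2$. None of these gaps is obviously unfixable, but each is a genuine missing argument rather than bookkeeping, so the proposal does not constitute a proof.
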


\section{Algorithms for LC$_r$- and LU-equivalences}

In this section, we address the problem of deciding whether two given graphs are LU-equivalent. Additionally, we  consider a variant of this problem consisting in deciding whether two graphs are LC$_r$-equivalent for a fixed $r$.  Since LU-equivalent graphs are necessarily LC$_r$-equivalent for some $r$, the difference lies in whether the level $r$ is fixed or not. 

Thanks to \cref{prop:LCr_lc}, if $G_1$ is  LC$_r$-equivalent to $G_2$, there exists a single $r$-local complementation, together with usual local complementations, that transforms $G_1$ into $G_2$. We introduce an algorithm that builds such a sequence of generalised local complementations, in essentially four stages: 
\begin{itemize}
\item[(i)] Both $G_1$ and $G_2$ are turned in \emph{standard forms} $G_1'$ and $G'_2$ by means of (usual) local complementations. These transformations are driven by a so-called  minimal local set cover which can be efficiently computed.  
\item[(ii)] We then focus on the single $r$-local complementation: all the possible actions of a single $r$-local complementation on $G'_1$ are described as a vector space, for which we compute a basis $\mathcal B$.  
\item[(iii)] It remains to find, if it exists, the $r$-local complementation to apply on $G'_1$ that leads to $G'_2$ up to some additional usual local complementations. With an appropriate construction depending on $G_1'$, $G'_2$ and $\mathcal B$, we reduce this problem to deciding whether two graphs are LC-equivalent under some additional requirements  on the sequence of local complementations to apply. These requirements can be expressed as linear constraints.
\item [(iv)] Finally, to find such a sequence of local complementations, we apply a variant of Bouchet's algorithm, generalised to accommodate the additional linear constraints. 
\end{itemize}

Stages $(i)$, $(iii)$ and $(iv)$ can be performed in polynomial time in the order $n$ of the graphs. Stage $(ii)$ has essentially a $O(n^r)$ time complexity, thus deciding LC$_r$-equivalence for a fixed $r$ can be done in polynomial time. Regarding LU-equivalence, \cref{thm:LU_imply_LCr} implies $r\le \frac n2$. We improve this upperbound and show that $r$ is at most logarithmic in $n$, leading to a quasi-polynomial time algorithm for LU-equivalence.

The rest of this section is dedicated to the description of the algorithm, its correctness and complexity, beginning with the generalisation of Bouchet's algorithm to decide, in polynomial time, LC-equivalence with additional constraints.

\subsection{Bouchet algorithm, revisited} \label{subsec:bouchet}

LC-equivalence can be efficiently decided thanks to the famous Bouchet's algorithm \cite{Bouchet1991}. Bouchet proved that LC-equivalence of two given graphs, defined on the same vertex set $V$, reduces to the existence of subsets of vertices satisfying the following two equations:

\begin{proposition}{\bf \cite{Bouchet1991}} \label{prop:Bouchet}
Two graphs $G$, $G'$ are LC-equivalent if and only if there exist $A,B,C,D\subseteq V$ such that
\begin{itemize}
\item[(i)] $\forall u,v \in V$,\\
$|B\cap N_G(u)\cap N_{G'}(v)| + |A\cap N_G(u)\cap \{v\}| +  |D\cap \{u\}\cap N_{G'}(v)| + |C\cap \{u\}\cap \{v\}| = 0\bmod 2$
\item[(ii)]
$(A\cap D)~\Delta~ (B\cap C) = V$ 
\end{itemize}

\end{proposition}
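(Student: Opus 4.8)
The plan is to route the statement through the binary-symplectic (stabilizer) description of graph states, where local Clifford equivalence becomes a concrete $\mathbb F_2$-matrix identity, and then to unfold that identity entry by entry. Recall that $\ket G$ on $n$ qubits has stabilizer generated by the operators $X_u\prod_{v\in N_G(u)}Z_v$, whose binary-symplectic representatives are the columns of $M_G=\binom{I}{\Gamma}$, with $\Gamma$ the $\mathbb F_2$-adjacency matrix of $G$, and similarly for $G'$ and $\Gamma'$. By the graphical description of the action of local Cliffords on graph states (\cite{VandenNest04}; this is precisely the correspondence invoked in the preliminaries to equate LC-equivalence of graph states with local-complementation equivalence), $G$ and $G'$ are LC-equivalent if and only if there are diagonal matrices $A,B,C,D$ over $\mathbb F_2$ with $AD+BC=I$ — so that $Q:=\begin{pmatrix}A&B\\ C&D\end{pmatrix}$ is a local symplectic transformation, invertible with inverse obtained by swapping $A$ and $D$ — together with an invertible $R\in GL_n(\mathbb F_2)$ encoding a change of stabilizer generators, such that $QM_{G'}=M_GR$. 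The sign/phase bookkeeping, handled by an extra local Pauli which is itself a local Clifford, is routine and I would merely cite it.

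Next I would eliminate $R$. Expanding, $QM_{G'}=\binom{A+B\Gamma'}{C+D\Gamma'}$ while $M_GR=\binom{R}{\Gamma R}$, so matching the two blocks forces $R=A+B\Gamma'$ and $C+D\Gamma'=\Gamma(A+B\Gamma')$; equivalently, such a pair $(Q,R)$ exists if and only if there are diagonal $A,B,C,D$ with $AD+BC=I$, with $A+B\Gamma'$ invertible, and with
$$\Gamma B\Gamma'+\Gamma A+D\Gamma'+C=0\pmod 2.$$
The one step here that is not a pure rewriting is that the invertibility of $R=A+B\Gamma'$, absent from the target statement, is in fact automatic and hence redundant: $Q$ is invertible and $M_{G'}$ has full column rank $n$, so $QM_{G'}$ has rank $n$; but the displayed equation says precisely $QM_{G'}=M_GR$ with $R:=A+B\Gamma'$, hence $\mathrm{colsp}(QM_{G'})\subseteq\mathrm{colsp}(M_G)$, a subspace of dimension $n$, and equality of dimensions forces $\mathrm{rank}(R)=n$. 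I expect this little rank argument to be the only genuinely non-mechanical point of the proof.

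Finally I would translate the two surviving conditions into the set language of the statement. Identifying a diagonal $0/1$ matrix with the subset of $V$ it indicates, $AD+BC=I$ says that for every vertex $v$ exactly one of $v\in A\cap D$ and $v\in B\cap C$ holds, which is exactly $(A\cap D)\,\Delta\,(B\cap C)=V$, i.e.\ (ii). For (i), reading off the $(u,v)$ entry of $\Gamma B\Gamma'+\Gamma A+D\Gamma'+C$ and using $\Gamma_{uw}=[\,w\in N_G(u)\,]$, $\Gamma'_{wv}=[\,w\in N_{G'}(v)\,]$, $A_{ww}=[\,w\in A\,]$, and so on, the four summands become respectively $|B\cap N_G(u)\cap N_{G'}(v)|$, $|A\cap N_G(u)\cap\{v\}|$, $|D\cap\{u\}\cap N_{G'}(v)|$ and $|C\cap\{u\}\cap\{v\}|$, so the matrix identity holding mod $2$ is precisely (i). Combining the three steps yields the equivalence. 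It is worth noting that the direction chosen for the Clifford — whether one writes $QM_{G'}=M_GR$ or $QM_G=M_{G'}R$ — is immaterial, since replacing $Q$ by $Q^{-1}$, again a local symplectic, interchanges the two; and while the proposition is attributed to Bouchet's isotropic-systems framework, the stabilizer formalism gives the most direct route.
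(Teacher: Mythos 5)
Your proposal is correct, but it takes a genuinely different route from the paper's own proof. You pass through the binary-symplectic picture: LC-equivalence of the graphs is first traded (via Van den Nest's theorem, cited in the preliminaries) for local Clifford equivalence of the graph states, which becomes the existence of a local symplectic $Q=\left(\begin{smallmatrix}A&B\\C&D\end{smallmatrix}\right)$ with $AD+BC=I$ and an invertible $R$ satisfying $QM_{G'}=M_GR$; eliminating $R$ yields $\Gamma B\Gamma'+\Gamma A+D\Gamma'+C=0$, and unfolding entries gives (i) while the symplectic condition gives (ii). Your rank argument showing that the invertibility of $R=A+B\Gamma'$ is automatic is correct and is indeed the one non-mechanical step. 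This is essentially the Van den Nest--Dehaene--De Moor derivation, which the paper itself alludes to when it remarks that (i) and (ii) are equivalent to these two matrix equations. The paper's Appendix A proof is instead purely graph-theoretic and self-contained: it shows by induction that the explicit update rules $A'=A\,\Delta\,(\{w\}\cap C)$, etc., preserve (i) and (ii) under local complementation at $w$, and conversely gives an algorithm that converts any solution $(A,B,C,D)$ into an explicit sequence of local complementations via a six-case classification of vertices. What your route buys is brevity and conceptual transparency, at the cost of importing Van den Nest's theorem and the stabilizer machinery. What the paper's route buys — and this matters downstream — is the explicit dictionary between solutions $(A,B,C,D)$ and sequences of local complementations: this dictionary is precisely what lets the paper translate additional linear constraints on $(A,B,C,D)$ into constraints on which vertices are complemented (Example~\ref{ex:V0} and Lemma~\ref{lemma:LC_new_graphs}). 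Your proof establishes the equivalence but would not by itself supply that translation, so it could not simply be substituted for the appendix without also reproving the correspondence between solutions and local-complementation sequences.
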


While the original proof involves isotropic systems~\cite{bouchet1987}, we provide an alternative, self-contained proof in \cref{app:Bouchet}, that we believe to be more accessible than the original one.  

Notice that Equation $(i)$ is  actually a linear equation: the set $\mathcal S\subseteq V^4$ of solutions to  $(i)$ is a vector space, indeed given two solutions $S=(A,B,C,D)$ and $S'=(A',B',C',D')$ of $(i)$, so is $S+S'=(A\Delta A', B\Delta B', C\Delta C', D\Delta D')$. The linearity of equation $(i)$ can be emphasised using the following encoding: 

A set $A\in V$ can be represented by $n$ binary variables $a_1, \ldots, a_n\in \mathbb F_2$ s.t. $a_v=1 \Leftrightarrow v\in A$, moreover, with a slight abuse of notations, we identify any set $A\subseteq V$ with the corresponding diagonal $\mathbb F_2$ matrix of dimension $n\times n$ in which diagonal elements are the $(a_v)_{v\in V}$. Following \cite{VdnEfficientLC,Hein06}, equation $(i)$ is equivalent to 
\begin{equation}
\Gamma  B \Gamma' + \Gamma  A+  D\Gamma' +  C = 0
\end{equation}
and equation $(ii)$ to 
\begin{equation}
AD+BC=I
\end{equation}
where $\Gamma$ and $\Gamma'$ are the adjacency matrices of $G$ and $G'$ respectively.

In this section, we consider an extension of Bouchet's algorithm where an additional set of linear constraints on $A,B,C$ and $D$ is added as input of the problem. Such additional linear equations can reflect constraints on the applied  local complementations, e.g.~deciding whether two graphs are LC-equivalent under the additional constraint that all local complementations are applied on a fixed set $V_0$ of vertices (see \cref{ex:V0}).

While solving linear equations is easy, equation $(ii)$ is not linear, and the tour de force of Bouchet's algorithm is to point out the fundamental properties of the solutions to both equations $(i)$ and $(ii)$ that allow to decide efficiently the LC-equivalence of graphs.  In particular, Bouchet showed that a set of solutions $\mathcal C\subseteq \mathcal S$ that satisfies both $(i)$ and $(ii)$,  is either small or it contains an affine subspace of $\mathcal S$ of small co-dimension. In  the latter case, the entire set $\mathcal S$ is  actually an affine space except for  some particular cases that can be avoided by assuming that the graphs contain vertices of even degree. We extend this result as follows:

\begin{lemma}\label{lemma:codim2}
Given $G$, $G'$ two connected graphs with at least one vertex of even degree, and a set $L$  of linear constraints on $V^4$,
 then either the set $\mathcal S_L$ of solutions to both $ L$ and $(i)$ is of dimension at most $4$, 
 or the set  $\mathcal C_L\subseteq \mathcal S_L$ that additionally satisfies  $(ii)$ is either empty or an affine subspace of $\mathcal S_L$ of codimension at most $2$.   
\end{lemma}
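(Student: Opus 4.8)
The plan is to follow Bouchet's original strategy closely, tracking the extra linear constraints $L$ throughout and showing they only shrink the ambient space without destroying the key structural dichotomy. Write $\mathcal S$ for the solution space of $(i)$ alone (a subspace of $\mathbb F_2^{4n}$, as observed in the text), $\mathcal S_L = \mathcal S \cap L$, and $\mathcal C_L \subseteq \mathcal S_L$ for the subset additionally satisfying the quadratic equation $(ii)$, i.e.\ $AD+BC=I$. First I would recall (or re-derive, following the self-contained proof promised in \cref{app:Bouchet}) the decomposition of $\mathcal S$ coming from connectedness: since $G$ is connected, the pair $(A,C)$ determines $(B,D)$ up to an element of a small ``kernel'' space, and symmetrically for $(B,D)$; concretely $(i)$ forces $\Gamma B \Gamma' + \Gamma A + D\Gamma' + C = 0$, and connectedness of $G$, $G'$ together with the presence of an even-degree vertex pins down the diagonal entries, so the fibre of the projection $\mathcal S \to \{(A,C)\}$ has dimension at most $2$, and likewise the fibre of $\mathcal S \to \{(B,D)\}$.

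Next I would analyse how $(ii)$ cuts down $\mathcal S_L$. The equation $AD+BC=I$ is diagonal, hence it is $n$ scalar equations $a_v d_v + b_v c_v = 1$, one per vertex $v$. On a single coordinate $v$, the affine variety $\{a d + b c = 1\} \subseteq \mathbb F_2^4$ has $8$ points; the crucial observation (Bouchet's) is that whenever a linear subspace $W \subseteq \mathbb F_2^4$ meets this variety, $W \cap \{ad+bc=1\}$ is an affine subspace of $W$ of codimension at most $2$ in $W$ (one checks: either $W$ lies inside the variety's affine hull in a way forcing codimension $\le 1$, or the quadratic form $ad+bc$ restricted to $W$ is nondegenerate of rank $\le 2$). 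Applying this per coordinate and intersecting is not automatic, so instead I would run Bouchet's argument globally: consider the set $\mathcal C_L$; if it is nonempty pick a base point $S_0 \in \mathcal C_L$ and translate so that $0 \in \mathcal C_L - S_0$. The claim becomes that $\mathcal C_L - S_0$ either has the whole of $\mathcal S_L$ collapse to dimension $\le 4$, or $\mathcal C_L - S_0$ contains a subspace of $\mathcal S_L$ of codimension $\le 2$. Here is where the even-degree-vertex hypothesis enters, exactly as in Bouchet: it rules out the degenerate ``all solutions to $(i)$ already satisfy $(ii)$'' pathologies, by guaranteeing that the relevant symplectic/quadratic form on $\mathcal S$ (resp.\ $\mathcal S_L$) is not identically zero on a too-large subspace.

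The main technical step — and the one I expect to be the real obstacle — is showing that adding the constraints $L$ does not break the codimension bound: a priori, intersecting an affine subspace $\mathcal C \subseteq \mathcal S$ of codimension $2$ with a linear space $L$ could produce something of much larger codimension inside $\mathcal S_L = \mathcal S \cap L$, or could produce the ``small'' case in a new way. The resolution is that the dichotomy is proved not by exhibiting a fixed large affine subspace of $\mathcal C$, but by a rank argument on the quadratic form $q(S) = $ (the obstruction to $S$ satisfying $(ii)$, viewed relative to the base point $S_0$) restricted to the linear space $\mathcal S_L$: one shows $q|_{\mathcal S_L}$ has the property that either $\mathcal S_L$ itself is small (dimension $\le 4$, absorbing all the ``short'' subcases, including those newly created by $L$), or the radical of $q|_{\mathcal S_L}$ has codimension $\le 2$ in $\mathcal S_L$ and is contained in $\mathcal C_L - S_0$. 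Since passing from $\mathcal S$ to a subspace $\mathcal S_L$ can only decrease the rank of a quadratic form — it cannot increase the codimension of the radical \emph{beyond} what the dimension drop allows — the only way the bound could fail is if $\mathcal S_L$ became so small that the rank-$\le 2$ conclusion is vacuous, and that is precisely the ``dimension at most $4$'' alternative we allow. I would therefore: (1) set up $q$ carefully as a quadratic form on $\mathbb F_2^{4n}$ restricted to $\mathcal S_L$, isolating its bilinear part $b_q$; (2) show $\dim(\mathcal S_L / \mathrm{rad}(b_q)) \le 2$ unless $\dim \mathcal S_L \le 4$, using the connectedness-derived fibre bounds to control $\dim \mathcal S_L$ in terms of the projections to $(A,C)$ and $(B,D)$ coordinates and the even-degree vertex to prevent total degeneracy; (3) check $\mathrm{rad}(b_q) \cap q^{-1}(0)$, once nonempty, is an affine subspace of $\mathcal S_L$ of codimension $\le 1$ inside $\mathrm{rad}(b_q)$, hence of codimension $\le 2$ in $\mathcal S_L$, and is contained in $\mathcal C_L$; (4) assemble these into the stated alternative. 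The bookkeeping in step (2) — relating $\dim \mathcal S_L$ to the two projections in the presence of arbitrary $L$ — is the part requiring the most care.
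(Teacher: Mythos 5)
There is a genuine gap, and it stems from a mis-modelling of condition $(ii)$. Relative to a base point $S_0=(A_0,B_0,C_0,D_0)\in\mathcal C_L$, the condition $AD+BC=I$ does not become the vanishing of a single quadratic form $q$ on $\mathcal S_L$: it is a system of $n$ coordinate-wise quadratic equations $a_{0,v}\delta_v+\alpha_v d_{0,v}+\alpha_v\delta_v+b_{0,v}\gamma_v+\beta_v c_{0,v}+\beta_v\gamma_v=0$, one per vertex. So $\mathcal C_L-S_0$ is the common zero locus of $n$ quadratics intersected with $\mathcal S_L$, and your steps (1)--(3) --- isolating the bilinear part $b_q$ of one form, bounding the codimension of its radical, and showing $\mathrm{rad}(b_q)\cap q^{-1}(0)\subseteq\mathcal C_L-S_0$ --- do not apply as stated. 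The fact that this common zero locus, restricted to the solution space of $(i)$, is an affine subspace of codimension at most $2$ is not a generic fact about quadratic forms; it is the hard structural content of Bouchet's theorem, which your step (2) essentially restates (now with the constraints $L$ woven in) and defers rather than proves. As written, the proposal is a plan to re-derive Bouchet's theorem in the constrained setting, with the decisive step left as an assertion.

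Your guiding intuition --- that intersecting with $L$ can only shrink things and should not break the codimension bound --- is exactly right, but it can be cashed in far more cheaply than you propose, and this is what the paper does. Assume $\dim(\mathcal S_L)>4$ and $\mathcal C_L\neq\emptyset$; then $\dim(\mathcal S)>4$, and Bouchet's theorem (for connected graphs with an even-degree vertex, i.e.\ outside Class $\alpha$) applies verbatim to the \emph{unconstrained} problem: $\mathcal C=a+\mathcal N$ for some subspace $\mathcal N\leqslant\mathcal S$ of codimension at most $2$, where one may take $a\in\mathcal C_L\subseteq\mathcal S_L$. Since $\mathcal C_L=\mathcal C\cap\mathcal S_L=a+(\mathcal N\cap\mathcal S_L)$, it is automatically an affine subspace of $\mathcal S_L$, and the dimension formula gives $\dim(\mathcal N\cap\mathcal S_L)\geqslant\dim(\mathcal N)+\dim(\mathcal S_L)-\dim(\mathcal S)\geqslant\dim(\mathcal S_L)-2$. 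No re-examination of the quadratic structure, of the fibre bounds from connectedness, or of where the even-degree hypothesis enters is needed; it is all absorbed into the black-box use of Bouchet's result. I recommend replacing your outline with this two-line intersection argument.
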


\begin{proof} It is enough to consider the case $\dim(\mathcal S_L)>4$ and $\mathcal C_L\neq \emptyset$. The set  $\mathcal S$ of solutions to $(i)$ contains $\mathcal S_L$, so $\dim(\mathcal S)>4$. According to \cite{Bouchet1991}, $\mathcal C$, the solutions to $(i)$ and $(ii)$, is an affine subspace of $\mathcal S$ of codimension at most $2$. For any  $a\in \mathcal C_L$, $\mathcal C = a+ \mathcal N$ where $ \mathcal N$ is a subvector space of $\mathcal S$. Notice that $\mathcal C_L = \mathcal C\cap \mathcal S_L = a+ \mathcal N\cap \mathcal S_L $. We have $\dim(\mathcal C_L)=\dim(\mathcal N\cap \mathcal S_L ) = \dim(\mathcal N)+\dim(\mathcal S_L )-\dim(\mathcal N + \mathcal S_L )\ge \dim(\mathcal N)+\dim(\mathcal S_L )-\dim(\mathcal S)\ge \dim(\mathcal S_L)-2$ as $\mathcal N$ is of codimension at most $2$ in $\mathcal S$. 
\end{proof}

\begin{remark} \cref{lemma:codim2} holds actually for any graph that is not in the so-called `Class $\alpha$' of graphs with only odd-degree vertices together with a few additional properties\footnote{(a) any pair of non adjacent vertices should have an even number of common neighbours ; (b) for any cycle $C$, the number of triangles having a edge in $C$ is equal to the size of $C$ modulo $2$.}.  When the graphs are in  `Class $\alpha$', and in the absence of additional constraints, Bouchet proved that there is at most $2$ solutions in $\mathcal C$ that do not belong to the affine subspace of small codimension, and these two solutions can be easily computed (see \cite{Bouchet1991}, section 7).  We leave as an open question the description of the set of solutions for graphs in `Class $\alpha$', in particular when the two particular solutions pointed out by Bouchet do not satisfy $L$, the set of additional constraints. 
\end{remark}
From an algorithmic point of view, \cref{lemma:codim2} leads to a straightforward generalisation of Bouchet's algorithm to  efficiently decide LC-equivalence of graphs, under a set of additional linear constraints: 

\begin{proposition}\label{prop:extendedBouchet}
Given $G$, $G'$ two connected graphs of order $n$ with an even-degree vertex, and a set $L$ of $\ell$ linear constraints on $V^4$, one can compute a solution to both $(i)$, $(ii)$ and $L$ when it exists, or claim there is no solution, in runtime $O((n^2+\ell)n^2)$.
\end{proposition}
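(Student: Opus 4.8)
The plan is to combine Lemma~\ref{lemma:codim2} with the standard Gaussian-elimination machinery used in Bouchet's algorithm. First I would set up the linear algebra over $\mathbb{F}_2$: equation $(i)$ together with the $\ell$ constraints in $L$ form a homogeneous linear system in the $4n$ unknowns encoding $(A,B,C,D)$; computing a basis of the solution space $\mathcal S_L$ is Gaussian elimination on a matrix with $4n$ columns and $O(n^2+\ell)$ rows, which costs $O((n^2+\ell)\cdot n^2)$ field operations. Let $d=\dim(\mathcal S_L)$ and fix a basis $b_1,\dots,b_d$ of $\mathcal S_L$.

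Next I would dispatch on the two cases of Lemma~\ref{lemma:codim2}. If $d\le 4$, then $|\mathcal S_L|\le 16$, so one simply enumerates all elements of $\mathcal S_L$ and checks the quadratic condition $(ii)$, i.e. $AD+BC=I$, directly; each check is $O(n)$ (it only involves the diagonal matrices), so this branch is negligible. The interesting case is $d>4$: here Lemma~\ref{lemma:codim2} guarantees that $\mathcal C_L$, if non-empty, is an affine subspace of $\mathcal S_L$ of codimension at most $2$. The key observation is that an affine subspace of codimension $\le 2$ in a $d$-dimensional space is determined by, and reachable from, very few sample points: if I pick a generic/structured small collection of points of $\mathcal S_L$ and project onto the constraint $(ii)$, then a solution to $(ii)$ among them either exists in this small set or can be produced by taking affine combinations. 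Concretely I would mimic Bouchet's argument: evaluate the quadratic form defined by $(ii)$ on the basis vectors $b_i$ and on the pairwise sums $b_i+b_j$; since $\mathcal C_L$ has codimension $\le 2$, among $O(d)\le O(n)$ such candidate points at least a bounded number must lie in $\mathcal C_L$ whenever $\mathcal C_L\neq\emptyset$, and from any one of them we recover the whole affine space, or we detect emptiness. Each evaluation of $(ii)$ at a point costs $O(n)$, so this stage costs $O(n^2)$, well within the claimed bound.

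The main obstacle I anticipate is not the complexity bookkeeping but making the case-$d>4$ search provably exhaustive: I must argue that restricting attention to the basis vectors and their pairwise sums (or some similarly small, explicitly computable set) really does hit $\mathcal C_L$ whenever it is non-empty. This is exactly the combinatorial heart of Bouchet's original argument, now transported to the subspace $\mathcal S_L$ instead of $\mathcal S$; Lemma~\ref{lemma:codim2} is precisely what licenses the transport, because it says the affine structure of $\mathcal C$ inside $\mathcal S$ is inherited by $\mathcal C_L$ inside $\mathcal S_L$ with codimension still at most $2$. I would therefore phrase the search so that it operates intrinsically inside $\mathcal S_L$ using the computed basis, invoke Lemma~\ref{lemma:codim2} for the codimension guarantee, and reuse verbatim Bouchet's finite-candidate-set lemma (Section~7 of~\cite{Bouchet1991}) applied to the quadratic map $S\mapsto AD+BC$ restricted to $\mathcal S_L$.

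Finally I would assemble the runtime: the dominant term is the Gaussian elimination to obtain $\mathcal S_L$, namely $O((n^2+\ell)n^2)$, and every subsequent step (enumerating $\le 16$ points, or evaluating $(ii)$ on $O(n)$ candidates, each in $O(n)$ time, then outputting an affine description of $\mathcal C_L$ or reporting infeasibility) is $O(n^2)$ or less. This yields the stated $O((n^2+\ell)n^2)$ bound and either returns an explicit solution $(A,B,C,D)$ satisfying $(i)$, $(ii)$ and $L$, or certifies that none exists.
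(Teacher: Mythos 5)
Your proposal follows essentially the same route as the paper: Gaussian elimination on the system formed by $(i)$ and $L$ to obtain a basis of $\mathcal S_L$ in $O((n^2+\ell)n^2)$ operations, direct enumeration of the at most $16$ points when $\dim(\mathcal S_L)\le 4$, and otherwise invoking Lemma~\ref{lemma:codim2} together with Bouchet's lemma (Lemma 4.4 of \cite{Bouchet1991}) that, when $\mathcal C_L\neq\emptyset$, some pairwise sum of basis elements lies in $\mathcal C_L$. The only quibble is a counting slip: the pairwise sums give $O(n^2)$ candidates rather than $O(n)$, so that stage costs $O(n^3)$, not $O(n^2)$ --- still comfortably within the claimed bound.
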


\begin{proof} 
A Gaussian elimination can be used to compute a basis $\mathcal B = \{S_0, \ldots , S_k\}$ of $\mathcal S_L$, with $k<n$, in $O((n^2+\ell)n^2)$ operations as there are $n^2$ equations in $(i)$. If the dimension of $\mathcal S_L$ is at most $4$ (so $|\mathcal S_L|\le 16$), we check in $O(n)$ operations, for each element of $\mathcal S_L$ whether equation $(ii)$ is satisfied. Otherwise, when $\dim(\mathcal S_L)>4$, if $\mathcal C_L$ is non empty, at least one element of $\mathcal C_L$ is the sum $S_i +  S_j$ of two  elements of $\mathcal B$ (see Lemma 4.4 in \cite{Bouchet1991}). For each of the $O(n^2)$ candidates we check  whether condition $(ii)$ is satisfied.  If no solution is found, it implies that $\mathcal C_L=\emptyset$. 
\end{proof}

As the algorithm described in \cref{app:Bouchet} translates a solution to (i) and (ii), into a sequence of local complementations relating two graphs, some constraints on the sequence of local complementations may be encoded as additional linear constraints. We give a fairly simple example below. A more intricate example is presented in \cref{lemma:LC_new_graphs} (in \cref{subsec:algorithm_lcr}).

\begin{example} \label{ex:V0}
    Let $G$, $G'$ be two connected graphs of order $n$ with an even-degree vertex, and $V_0$ a set of vertices. One can decide in runtime $O(n^4)$ whether there exists a sequence of (possibly repeating) vertices $a_1, \cdots, a_m\in V_0 $ such that $G' = G \star a_1 \star \cdots \star a_m$. Roughly speaking, the idea is to consider the linear constraint $b_u=0$ (i.e.~$u\in   \overline B$) for any $u\notin V_0$, to reflect the constraints that local complementations should not be applied outside of $V_0$.
\end{example}

An interpretation of the possible additional constraints of the extended Bouchet algorithm in terms of local Clifford operators over graph states is given in \cref{app:interpretationClifford}.

\subsection{Minimal local sets and standard form} \label{subsec:mls}

We consider in this section the first stage of the LU-equivalence algorithm, consisting in putting the two input graphs into a particular shape called standard form by means of local complementations. We adapt a transformation introduced in \cite{claudet2024local}, which is based on the so-called minimal local sets, 
and turn it into an efficient procedure.

\begin{definition} A \emph{local set} $L$ is a non-empty subset of $V$ of the form $L = D \cup Odd_G(D)$ for some $D \se V$ called a \emph{generator}. A minimal local set is a local set that is minimal by inclusion.
\end{definition}

A key property of local sets is that they are invariant under LU-equivalence: Two LU-equivalent graphs $G_1$, $G_2$ have the same local sets, but not necessarily with the same generators. Moreover, the way the generators of a minimal local set differ in $G_1$ and $G_2$, 
provides some information on the sequence of generalised local complementations that transforms $G_1$ into $G_2$. 
It is thus important to cover all vertices of a graph with at least one minimal local set. Fortunately, any graph admits a minimal local set cover (MLS cover for short), and an MLS cover can be computed efficiently, within   $O(n^{6.38})$ operations\footnote{The algorithm presented in \cite{claudet2024covering} computes a MLS cover in $O(n^4)$ evaluations of the so-called cut-rank function, which itself can be computed in $O(n^\omega)$ field operations where $\omega < 2.38$.} where $n$ is the order of the graph \cite{claudet2024covering}. The information that an MLS cover provides on each vertex, is reflected by a type X, Y, Z or $\bot$, defined as follows:

\begin{definition} \label{def:type}Given a graph $G$, a vertex $u$ is of type P $\in$ \{X, Y, Z, $\bot$\} with respect to a MLS cover $\mathcal M$, where P is
    \begin{itemize}
        \item  X if for any generator $D$ of a minimal local set of $\mathcal M$ containing $u$, $u\in D \sm Odd(D)$,
        \item  Y if for any generator $D$ of a minimal local set of $\mathcal M$ containing $u$, $u\in D \cap Odd(D)$, 
        \item Z if for any generator $D$ of a minimal local of $\mathcal M$ set containing $u$, $u\in Odd(D) \sm D$, 
        \item $\bot$ otherwise. 
    \end{itemize}
    \end{definition}
        
When a local complementation is applied on a vertex $u$, its type remains unchanged if it is X or $\bot$, while types Y and Z are swapped. For the neighbours of $u$, types Z and  $\bot$ remain unchanged, whereas  X and  Y are exchanged. This leads to a notion of standard form:

\begin{definition}
    A graph $G$ is in standard form with respect to a MLS cover $\mathcal M$ if
    \begin{itemize}
        \item There are no vertices of type Y with respect to $\mathcal M$,
        \item For every vertex $u$ of type X with respect to $\mathcal M$, any neighbour $v$ of $u$ is of type Z with respect to $\mathcal M$ and satisfies $u\prec v$, in particular the vertices of type X with respect to $\mathcal M$ form an independent set,
        \item For every vertex $u$ of type X with respect to $\mathcal M$, $\{u\} \cup N_G(u) \in \mathcal M$.
    \end{itemize}
\end{definition}

\begin{remark}This notion of standard form is a generalisation of the one introduced in \cite{claudet2024local}, where the MLS cover considered consists of every minimal local set of the graph: $\mathcal M_\text{max} \defeq \{L \se V ~|~ L \text{~is a minimal local set}\}$. Since there can be exponentially many minimal local sets, using $\mathcal M_\text{max}$ does not lead to an efficient procedure, for instance when computing the type of each vertex.
\end{remark}

Given a pair of LU-equivalent graphs, one can efficiently compute a (common) MLS cover and put both graphs in standard form by means of local complementations:

\begin{restatable}{lemma}{standardform} \label{lemma:standardform}
    There exists an efficient algorithm that takes as inputs two graphs $G_1$ and $G_2$ of order $n$, and either claim that they are not LU-equivalent, or compute an MLS cover $\mathcal M$ and two graphs $G'_1$ and $G'_2$ LC-equivalent to $G_1$ and $G_2$ respectively, such that $G'_1$ and $G'_2$ are both in standard form with respect to $\mathcal M$, in runtime $O(n^{6.38})$.
\end{restatable}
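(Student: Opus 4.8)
The plan is to build the algorithm in two phases, matching the two parts of the claimed output. Phase one: compute an MLS cover $\mathcal M$ common to $G_1$ and $G_2$, and check the necessary condition that $G_1$ and $G_2$ share the same minimal local sets. Since minimal local sets are invariant under LU-equivalence, running the MLS-cover algorithm of \cite{claudet2024covering} on $G_1$ in $O(n^{6.38})$ gives a family $\mathcal M$ of minimal local sets covering all vertices; I would then verify that each $L\in\mathcal M$ is still a minimal local set of $G_2$ (computing generators via the cut-rank / linear algebra over $\mathbb F_2$, which is polynomial). If this fails, return ``not LU-equivalent''. Otherwise $\mathcal M$ is a common MLS cover and we proceed.

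Phase two: independently transform $G_1$ into a standard form $G_1'$ with respect to $\mathcal M$ using local complementations, and likewise $G_2$ into $G_2'$. The key observation — already flagged in the excerpt — is how vertex types evolve under $G\star u$: the type of $u$ is fixed when it is X or $\bot$ and Y$\leftrightarrow$Z otherwise, while for $v\in N_G(u)$ the type is fixed when Z or $\bot$ and X$\leftrightarrow$Y otherwise. So I would process the graph as follows. First eliminate all type-Y vertices: as long as some vertex $u$ has type Y, apply $G\star u$, which turns $u$ into type Z; I must argue this terminates, for instance by ordering vertices so that each newly-created Y vertex (a former type-X neighbour of $u$) is strictly larger for $\prec$ than $u$, or by a potential-function argument on the set of Y-vertices — establishing this termination/progress measure is the step I expect to be the main obstacle. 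Once there are no type-Y vertices, the type-X vertices form (or can be made to form) an independent set, since two adjacent X vertices would force one into the neighbourhood of the other and hence, combined with the absence of Y, into a contradiction with the type definitions; if two X vertices $u\prec v$ are adjacent, local-complementing at the larger one repairs this. Then, scanning type-X vertices in $\prec$-increasing order, for each such $u$ whose neighbourhood is not entirely type-Z with $u\prec v$, or for which $\{u\}\cup N_G(u)\notin\mathcal M$, apply a correcting local complementation at a suitable vertex: complementing at $u$ toggles $u$'s neighbourhood and, because $u$ is type X, leaves $u$'s type unchanged; one checks this can be arranged to make $\{u\}\cup N_G(u)$ coincide with the minimal local set of $\mathcal M$ generated by $\{u\}$ (such a set exists and is $\{u\}\cup N_G(u)$ precisely in standard form), while only affecting vertices already processed in a benign way.

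For correctness I would verify the three standard-form conditions hold at the end, and that none of the later steps re-introduces a type-Y vertex or destroys the independence of the type-X set — this is the bookkeeping that the $\prec$-ordering is designed to handle, ensuring each step only acts on vertices not yet finalised or acts in a way that preserves what was already achieved. Since both $G_1'$ and $G_2'$ are obtained from $G_1$ and $G_2$ by local complementations, they are LC-equivalent to $G_1$ and $G_2$ respectively; LU-equivalence is preserved throughout, so $G_1'$ and $G_2'$ are in standard form with respect to the \emph{same} cover $\mathcal M$. For complexity, the MLS-cover computation costs $O(n^{6.38})$ and dominates; computing all vertex types from $\mathcal M$, and each of the $O(n)$ local-complementation rounds with their $O(n)$ type updates, are polynomial of lower order, so the total runtime is $O(n^{6.38})$. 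This is essentially an adaptation of the standard-form construction of \cite{claudet2024local} from $\mathcal M_{\mathrm{max}}$ to an efficiently computable cover $\mathcal M$, so the conceptual content is mild; the real care is in the termination argument and in checking that the type-update rules make the greedy procedure well-defined.
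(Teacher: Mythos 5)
Your phase one (compute an MLS cover of $G_1$, verify it is an MLS cover of $G_2$, reject otherwise) matches the paper. Phase two, however, has several genuine gaps beyond the one you flag. First, the termination of the Y-elimination loop, which you correctly identify as the main obstacle, is not an afterthought: naively local-complementing at a type-Y vertex turns its type-X neighbours into type Y, so the set of Y-vertices can grow. The paper's fix is a specific ordering with the potential function $2|V_Y|+|V_X|$: first destroy all XX-edges by \emph{pivoting} (both endpoints become Z), then handle XY-edges by complementing at the X endpoint (the Y neighbour becomes X, and no new Y appears because there are no XX-edges), and only then complement at the remaining isolated Y-vertices. Second, your claim that two adjacent type-X vertices yield a contradiction is false, and your proposed repair (complementing at the $\prec$-larger one) does not work: a local complementation at $v$ does not remove the edge $uv$ and turns the X-neighbour $u$ into type Y, making things worse. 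The correct tool is again pivoting on the XX-edge. Similarly, X$\bot$-edges must be removed by pivoting (turning the X endpoint into Z), which your sketch does not address.

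Third, and most importantly, your treatment of the condition $\{u\}\cup N_G(u)\in\mathcal M$ cannot work as described: a local complementation at $u$ leaves $N_G(u)$ unchanged (it only complements the subgraph induced by $N_G(u)$), so no sequence of complementations at $u$ can ``arrange'' $\{u\}\cup N_G(u)$ to coincide with a prescribed set. The paper instead \emph{augments} $\mathcal M$: if $\{u\}\cup N_G(u)$ is not already a minimal local set of dimension 1, it finds a minimal local set $M$ strictly inside it, checks that $M$ is a minimal local set of the same dimension in the other graph (another point at which ``not LU-equivalent'' may be output), adds $M$ to $\mathcal M$ (which demotes some vertices to type $\bot$ and forces a partial restart), and finally adds $\{u\}\cup N_G(u)$ itself to $\mathcal M$ for each surviving type-X vertex $u$. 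This augmentation step, together with its own termination argument (each added set converts at least one non-$\bot$ vertex to $\bot$) and the $\prec$-decreasing argument for orienting XZ-edges, is essential to the proof and is absent from your proposal.
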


The algorithm is fairly similar to the one presented in the proof of Proposition 24 in \cite{claudet2024local} (the main difference being that we may now add minimal local sets to the MLS cover) and can be found in \cref{app:standardform}. Notice the most computationally expensive step of the algorithm is the computation of the MLS cover, hence the runtime $O(n^{6.38})$. Standard forms with respect to a common MLS cover implies some strong similarities in the structure of graphs:

\begin{lemma} \label{lemma:same_types}
    If two graphs $G_1$ and $G_2$ are LU-equivalent and in standard form with respect to an MLS cover $\mathcal M$, then every vertex has the same type in $G_1$ and $G_2$, and  every vertex $u$ of type $X$ satisfies $N_{G_1}(u) = N_{G_2}(u)$. 
\end{lemma}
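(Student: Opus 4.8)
The plan is to combine the \textbf{LU-invariance of local sets} with a stabiliser reading of the vertex ``type'', and then exploit the rigidity that standard form imposes. Since LU-equivalent graphs have the same local sets and minimality is an inclusion property, $\mathcal M$ is simultaneously an MLS cover of $G_1$ and of $G_2$: every $L\in\mathcal M$ is a minimal local set of both, though with possibly different generators. Fix $L\in\mathcal M$ and $u\in L$. For any graph $G$ having $L$ as a local set, the map sending a generator $D$ of $L$ to the pair recording whether $u\in D$ and whether $u\in Odd_G(D)$ is $\mathbb F_2$-linear (the odd-neighbourhood is linear in $D$), so its image $\pi_G(u,L)\se\mathbb F_2^2$ is a subspace, and it is nonzero since $u\in L$. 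Then $u$ is of type $X$ (resp.\ $Y$, $Z$) with respect to $\mathcal M$ iff $\pi_G(u,L)=\langle(1,0)\rangle$ (resp.\ $\langle(1,1)\rangle$, $\langle(0,1)\rangle$) for every $L\in\mathcal M$ containing $u$, and of type $\bot$ otherwise. Equivalently, via stabilisers: the generators of $L$ index exactly the elements of $\mathrm{Stab}(\ket G)$ supported within $L$ (each supported exactly on $L$, by minimality), and $\pi_G(u,L)$ records which single-qubit Paulis, up to phase, occur at position $u$ among them.

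Next I would show that $\bot$-ness, hence also definiteness of type, is LU-invariant. Write $\ket{G_1}=(\bigotimes_{w}U_w)\ket{G_2}$. Conjugation by $U=\bigotimes_w U_w$ is a support-preserving isomorphism $\mathrm{Stab}(\ket{G_2})\to\mathrm{Stab}(\ket{G_1})$, hence restricts to an isomorphism between the subgroups supported within $L$; projecting onto qubit $u$, and using that both subgroups consist of signed Pauli strings, $U_u$ must conjugate $\pi_{G_2}(u,L)$ onto $\pi_{G_1}(u,L)$ inside the projective single-qubit Pauli group. A single-qubit unitary that sends two distinct nontrivial projective Paulis to Paulis is Clifford; so if some $\pi_{G_2}(u,L)=\mathbb F_2^2$, or if two of the $\pi_{G_2}(u,L)$ differ, then $U_u$ is Clifford and the same features hold of the $\pi_{G_1}(u,L)$, whence $u$ is of type $\bot$ in $G_1$ iff in $G_2$. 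Otherwise $u$ has a definite type $P\in\{X,Y,Z\}$ in $G_2$, $U_u$ conjugates $P$ to a single Pauli $Q$, and $u$ has definite type $Q$ in $G_1$. Either way $u$ is of definite type in $G_1$ iff in $G_2$; since standard form forbids type $Y$, it only remains to rule out $u$ being of type $X$ in one graph and of type $Z$ in the other, and to prove $N_{G_1}(u)=N_{G_2}(u)$ for type-$X$ vertices.

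To finish I would use standard form to collapse both remaining points into one. In any standard-form graph $G$, if $\{u\}\cup N_G(u)\in\mathcal M$ then $\{u\}$ is a generator of it with $u$ in the $X$-role, so $(1,0)\in\pi_G(u,\{u\}\cup N_G(u))$ and $u$ cannot be of type $Z$; combined with the definition of standard form, a definite vertex $u$ is of type $X$ iff $\{u\}\cup N_G(u)\in\mathcal M$. Hence the whole statement reduces to the \emph{neighbourhood claim}: if $u$ is of type $X$ in $G_1$ then $N_{G_1}(u)=N_{G_2}(u)$. Granting it, $\{u\}\cup N_{G_2}(u)=\{u\}\cup N_{G_1}(u)\in\mathcal M$, so $u$ (definite in $G_2$ by the previous step) is of type $X$ in $G_2$ too; by symmetry $G_1$ and $G_2$ share the same type-$X$ vertices with the same neighbourhoods, then the same type-$Z$ vertices (the definite, non-$X$ ones), then the same $\bot$-vertices. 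To prove the neighbourhood claim, set $L_u:=\{u\}\cup N_{G_1}(u)\in\mathcal M$, a minimal local set of $G_2$ containing $u$: conjugating $K_u^{G_1}=X_u\bigotimes_{v\in N_{G_1}(u)}Z_v$ by $U^\dagger$ produces an element of $\mathrm{Stab}(\ket{G_2})$ supported exactly on $L_u$, hence equal to $\pm\prod_{w\in D}K_w^{G_2}$ for a unique generator $D$ of $L_u$ in $G_2$; unfolding this equality against the constraints of standard form -- all neighbours of $u$ are of type $Z$, the type-$X$ vertices are independent, and $u\prec v$ for every neighbour $v$ -- should force $D=\{u\}$, and then $N_{G_2}(u)=Odd_{G_2}(\{u\})=L_u\sm\{u\}=N_{G_1}(u)$, with $u$ in the $X$-role and so of type $X$ in $G_2$.

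The substantive obstacle is exactly this last step: the invariance established above only constrains each $U_u$ in isolation and a priori allows $U_u$ to interchange the $X$- and $Z$-roles, so the required contradiction must come from the global coherence of the factorisation $U=\bigotimes_w U_w$ across $L_u$ and the minimal local sets covering the neighbours of $u$, using that standard form fixes the shape of $L_u$, makes type-$X$ vertices independent, and orders each type-$X$ vertex before its (type-$Z$) neighbours. Making this bookkeeping precise -- plausibly by inducting along $\prec$ and peeling off type-$X$ vertices one at a time -- is where the real work lies; everything else is $\mathbb F_2$-linear algebra together with the already-known LU-invariance of local sets.
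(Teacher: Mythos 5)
Your overall strategy matches the intended one: the paper does not reprove this lemma but defers to \cite{claudet2024local}, and the one ingredient it singles out --- that LU-equivalent graphs have the same vertices of type $\bot$ with respect to any MLS cover --- is exactly what your middle step establishes. Two points, however, need attention. First, the assertion that conjugation by $U=\bigotimes_w U_w$ is a support-preserving isomorphism $\mathrm{Stab}(\ket{G_2})\to\mathrm{Stab}(\ket{G_1})$ is false for a general local unitary: for a connected graph both $X$ and $Z$ occur at every qubit among the stabilizer elements, so if $U$ mapped one Pauli stabilizer group onto the other, every $U_w$ would normalise the single-qubit Pauli group and hence be Clifford --- this is precisely the LU$=$LC fallacy. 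The consequence you actually need is recoverable, but only via the reduced states: for a minimal local set $L$ with a single independent generator, $\rho_L^{G_1}=U_L\,\rho_L^{G_2}\,U_L^\dagger$ forces the unique nontrivial stabilizer element supported on $L$ in $G_1$ to be the $U_L$-conjugate of the one in $G_2$ (up to signs per factor), while a minimal local set with two independent generators exhibits all of $X,Y,Z$ at each of its vertices in \emph{any} graph, so those vertices are $\bot$ on both sides with no conjugation argument needed. With that repair, your $\bot$-invariance and the reduction to the neighbourhood claim are correct.

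Second, and more seriously, the substantive half of the lemma --- that a type-$X$ vertex of $G_1$ cannot become type $Z$ in $G_2$, and that $N_{G_1}(u)=N_{G_2}(u)$ --- is only reduced to, not proved; you say so yourself. The missing idea is concrete and is exactly the ordering clause $u\prec v$ of the standard form, which you correctly suspect but do not exploit. Suppose $u$ is type $X$ in $G_1$ and type $Z$ in $G_2$, and let $D$ generate $L_u=\{u\}\cup N_{G_1}(u)\in\mathcal M$ in $G_2$. Then $u\notin D$, so $D\subseteq N_{G_1}(u)$; each $w\in D$ carries Pauli $X$ or $Y$ in the corresponding stabilizer element and, being definite (it is type $Z$ in $G_1$) and non-$Y$ (standard form of $G_2$), must be type $X$ in $G_2$. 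Since $u\in Odd_{G_2}(D)$, $u$ is a $G_2$-neighbour of some $w\in D$, so $w\prec u$ by standard form of $G_2$, while $u\prec w$ by standard form of $G_1$ because $w\in N_{G_1}(u)$ --- a contradiction. Applied to every vertex (and symmetrically in $G_1,G_2$) this shows no vertex swaps $X$ and $Z$; consequently every $v\in N_{G_1}(u)$ is type $Z$ in $G_2$ and $u$ is type $X$ in $G_2$, which forces the generator of $L_u$ in $G_2$ to be $\{u\}$ itself and hence $N_{G_2}(u)=L_u\setminus\{u\}=N_{G_1}(u)$. Without this (or an equivalent) argument the proof is incomplete at its central claim.
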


\cref{lemma:same_types} was proved in \cite{claudet2024local} for the maximal MLS cover $\mathcal M_\text{max}$, but the mathematical arguments hold for any arbitrary MLS cover. A key argument is that two LU-equivalent graphs have the same vertices of type $\bot$ with respect to any arbitrary MLS cover.

After performing the algorithm described in \cref{lemma:standardform}, one can check in quadratic time\footnote{In the order of the graphs, assuming the information of the types of the vertices with respect to the MLS cover is conserved.} whether each vertex has the same type  in $G_1$ and $G_2$, and whether every vertex of type X has the same neighbourhood in both graphs. If either condition is not met, the graphs are not LU-equivalent.

Finally, thanks to standard form, deciding LC$_r$-equivalence of graphs reduces to determining whether they are related by a single $r$-local complementation along with some usual local complementations:
\begin{lemma} \label{lemma:standardform_LCr_lc}
    If $G_1$ and $G_2$ are LC$_r$-equivalent and are both in standard form with respect to an MLS cover $\mathcal M$, then $G_1$ and $G_2$ are related by a sequence of local complementations on the vertices of type $\bot$ along with a single $r$-local complementation over the vertices of type X. %with respect to $\mathcal M$.
\end{lemma}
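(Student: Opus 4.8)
The plan is to begin from \cref{prop:LCr_lc}, which yields a sequence of generalised local complementations transforming $G_1$ into $G_2$ in which exactly one, say $\star^r S$ for some $r$-incident independent multiset $S$, has level $r$ and all others are usual local complementations. The first move is to slide the single $r$-local complementation to the very beginning of the sequence: a usual local complementation conjugates a generalised local complementation (this follows from how generalised and usual local complementations commute, cf.\ \cite{claudet2024local}), so each usual local complementation occurring before $\star^r S$ can be pushed past it at the cost of replacing $S$ by another $r$-incident independent multiset. This yields $G_2 = G_1 \star^r S_0 \star b_1 \star \cdots \star b_\ell$; in particular $G_1 \star^r S_0$ is LC-equivalent to $G_2$, hence LU-equivalent to both $G_1$ and $G_2$. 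Since $G_1$ and $G_2$ are in standard form with respect to $\mathcal M$, \cref{lemma:same_types} applies: they have the same types, and every type-X vertex has the same neighbourhood in both.

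The core of the argument is to pin down $S_0$. First, by the basic properties of generalised local complementation (reducing multiplicities modulo $2^r$, doubling, and the self-inverse property), one may, exactly as in the proof of \cref{prop:decomposition}, peel off the even-multiplicity part of $S_0$ as a sequence of lower-level (ultimately usual) local complementations, so it suffices to treat $S_0$ as a set. Now use that (minimal) local sets are invariant under LU-equivalence and that, in standard form, for every type-X vertex $u$ the minimal local set $\{u\}\cup N_{G_1}(u)$ belongs to $\mathcal M$. Tracking, along the whole sequence $G_1 \star^r S_0 \star b_1 \star \cdots \star b_\ell$, how the generators of the minimal local sets of $\mathcal M$ evolve, one shows that $\supp(S_0)$ must be exactly the set $X$ of type-X vertices: a type-X vertex absent from $S_0$ would force a change in some type-X neighbourhood, contradicting \cref{lemma:same_types}, whereas a non-type-X vertex present in $S_0$ would create a local set not present in $G_1$. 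Consequently $X$ is $r$-incident in $G_1$, the transformation $G_1 \star^r X$ is valid, and $G_1$ and $G_2$ are related by the $r$-local complementation over $X$ followed by the usual local complementations $b_1, \ldots, b_\ell$.

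It remains to restrict these usual local complementations to type-$\bot$ vertices. A useful observation is that in standard form no type-X vertex is adjacent to a type-$\bot$ vertex (all neighbours of a type-X vertex are of type Z), so a local complementation on a type-$\bot$ vertex leaves every type unchanged, leaves every type-X neighbourhood unchanged, and hence preserves standard form. On the other hand, $\star^r X$ does not change the neighbourhood of any vertex of $X$, so $G_1 \star^r X$ and $G_2$ are two graphs in standard form with respect to $\mathcal M$ that are LC-equivalent and already agree on all type-X neighbourhoods; replaying the local-set analysis of the previous paragraph for this LC-equivalence (now with no level-$r$ step) shows that the residual sequence of local complementations can be taken supported on the type-$\bot$ vertices, which finishes the proof.

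The step I expect to be the main obstacle is controlling $\supp(S_0)$ in the second paragraph: both sliding a usual local complementation past $\star^r S$ and, above all, proving that the support of the resulting multiset must coincide with the type-X vertices demand a careful bookkeeping of how the generators of the minimal local sets in $\mathcal M$ are transformed along the entire sequence — it is precisely the standard-form hypotheses together with the LU-invariance of (minimal) local sets that make this bookkeeping tractable.
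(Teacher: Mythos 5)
The paper does not actually reprove this lemma: it notes that the proof given in \cite{claudet2024local} for the maximal MLS cover $\mathcal M_\text{max}$ carries over verbatim to an arbitrary MLS cover, so any self-contained argument has to redo that work. Your outline is the right one (normalise via \cref{prop:LCr_lc}, then use LU-invariance of minimal local sets and the standard-form hypotheses to constrain the level-$r$ step), but the pivotal claim of your second paragraph is wrong. You assert that $\supp(S_0)$ must \emph{equal} the set $V_X$ of type-X vertices, justified by ``a type-X vertex absent from $S_0$ would force a change in some type-X neighbourhood''. An $r$-local complementation over a multiset supported on an independent set never toggles an edge incident to that support, and in standard form $V_X$ is independent with all neighbours of type Z, so \emph{no} $r$-local complementation supported in $V_X$ changes any type-X neighbourhood, whether or not a given type-X vertex lies in the support. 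The equality is also simply false: take $G_1=G_2$, where the empty multiset works and $V_X$ need not be $r$-incident at all, so your subsequent assertion that ``$G_1\star^r X$ is valid'' has no basis. What the lemma needs (and what \cite{claudet2024local} establishes) is only the containment $\supp(S_0)\subseteq V_X$, obtained by tracking how the generators of the minimal local sets of $\mathcal M$ transform.

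Two further steps are asserted rather than proved. First, sliding $\star^r S$ to the front of the sequence past the usual local complementations is nontrivial precisely when a local complementation is applied at a vertex adjacent to (or inside) $\supp(S)$; that conjugation by a local complementation again yields a level-$r$ generalised local complementation on some $r$-incident multiset is a substantial piece of the machinery of \cite{claudet2024local}, not a one-line remark. Second, reducing $S_0$ to a \emph{set} by peeling off even multiplicities in the style of \cref{prop:decomposition} produces generalised local complementations of every intermediate level $1,\dots,r-1$ when $r>2$, which do not fit the target form of one level-$r$ step plus usual local complementations; the reduction is unnecessary (the lemma allows a multiset) and should be dropped. Finally, your last paragraph still owes an argument that the residual LC-equivalence between $G_1\star^r S_0$ and $G_2$ can be realised on type-$\bot$ vertices only: this is essentially the $r=1$ instance of the lemma and does not follow from ``replaying'' the previous analysis, since it is not even clear that $G_1\star^r S_0$ is in standard form with respect to $\mathcal M$.
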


\cref{lemma:standardform_LCr_lc} was proved in \cite{claudet2024local} for the maximal MLS cover $\mathcal M_\text{max}$, but the mathematical arguments hold for any arbitrary MLS cover.

\subsection{An algorithm to recognise \texorpdfstring{LC$_r$}{LCr}-equivalent graph states} \label{subsec:algorithm_lcr}

We are now ready to describe the algorithm that recognises two LC$_r$-equivalent graphs. We consider a level $r \gs 1$, and two graphs $G_1$ and $G_2$ of order $n$, defined on the same vertex set $V$. Following \cref{lemma:standardform}, assume, without loss of generality, that $G_1$ and $G_2$ are both in standard form with respect to the same MLS cover. Then, it is valid (see \cref{lemma:same_types}) to define $V_X, V_Z \se V$, the sets of vertices respectively of type X and Z with respect to the MLS cover. Also, each vertex in $V_X$ has the same neighbourhood in both $G_1$ and $G_2$. According to \cref{lemma:standardform_LCr_lc}, if $G_1$ and $G_2$ are LC$_r$-equivalent then there is a single $r$-local complementation over vertices of $V_X$ together with a series of local complementations on vertices of type $\bot$ that transform $G_1$ into $G_2$. We first focus on the single $r$-local complementation (that commutes with the local complementations on vertices of type $\bot$, as there is no edge between a vertex of type X and a vertex of type $\bot$) and thus consider all the possible graphs that can be reached from $G_1$ by mean of a single $r$-local complementation over vertices of $V_X$. 
Notice that such a $r$-local complementation only toggles edges which both endpoints are in $V_Z$. Given a multiset $S$, the edges toggled in $G_1\star^r S$ can be represented by a vector $\omega^{(S)}\in {{\mathbb F}_2}^{\{u,v \in V_Z ~|~ u \neq v\}}$ such that for any $u,v \in V_Z$, $\omega^{(S)}_{u,v} = u\sim_{G_1} v ~\oplus~ u\sim_{G_1\star^r S} v$. The actions of all the possible $r$-local complementations on $G_1$ can thus be described as the set $\Omega =\{\omega^{(S)}  |  \text{$S$ is an $r$-incident multiset of vertices of type X}\}$.

\begin{lemma} \label{lemma:omega}
    $\Omega$ is a vector space and a basis $\mathcal B$ of $\Omega$ can be computed in runtime $O(r n^{r+2.38})$.
\end{lemma}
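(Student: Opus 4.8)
The plan is to prove the two assertions of \cref{lemma:omega} in turn: first that $\Omega$ is an $\mathbb F_2$-vector space, and then that a basis can be extracted in the claimed runtime by reducing the $r$-incidence conditions and the edge-toggling map to linear algebra over $\mathbb F_2$.

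\emph{Linearity of $\Omega$.} First I would observe that the edge-toggling vector $\omega^{(S)}$ depends $\mathbb F_2$-linearly on (the indicator/multiplicity vector of) $S$: by \cref{def:r-LC}, $\omega^{(S)}_{u,v} = 1$ iff $S\bullet\Lambda_{G_1}^{\{u,v\}}=2^{r-1}\bmod 2^r$, and since $S\bullet\Lambda_{G_1}^{\{u,v\}} = \sum_{w\in \Lambda_{G_1}^{\{u,v\}}} S(w)$ is an integer linear form in $S$, the condition ``$\equiv 2^{r-1}\pmod{2^r}$'' is exactly the condition that the bit of weight $2^{r-1}$ in this integer sum equals $1$, which is $\mathbb F_2$-linear in the bits of the $S(w)$'s. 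I would then argue, using the closure properties from the \cref{prop:LCr_lc}-block (namely that if $G\star^r S_1$ and $G\star^r S_2$ are valid with $S_1+S_2$ independent then $G\star^r(S_1+S_2) = (G\star^r S_1)\star^r S_2$, and that multiplicities may be taken mod $2^r$), that the set of valid ($r$-incident, independent) multisets $S$ over $V_X$ is closed under the relevant addition: since $V_X$ is an independent set in $G_1$ (standard form), any sum $S_1+S_2$ of multisets supported on $V_X$ is automatically independent, so validity is preserved and $\omega^{(S_1+S_2)} = \omega^{(S_1)}\Delta\,\omega^{(S_2)}$ as vectors in $\mathbb F_2$. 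Hence $\Omega$ is the image of an $\mathbb F_2$-linear map restricted to the $\mathbb F_2$-subspace of valid $S$, and is therefore a subspace of ${\mathbb F_2}^{\{u,v\in V_Z\,|\,u\neq v\}}$.

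\emph{Encoding and computing a basis.} The key point is to turn $r$-incidence into a system of $\mathbb F_2$-linear equations on a bounded number of variables. By \cref{def:r-inc}, $S$ must satisfy, for each $k\in[0,r)$ and each $K\subseteq V\setminus\supp(S)$ of size $k+2$, that $S\bullet\Lambda_{G_1}^K$ is divisible by $2^{r-k-\delta(k)}$; each such divisibility condition on an integer linear form in the $S(w)$'s unpacks into at most $r$ individual bit-constraints, each $\mathbb F_2$-linear in the bit-expansions of the $S(w)$. I would represent $S$ by its vector of multiplicities mod $2^r$, i.e.\ by $r|V_X|\le rn$ bits. There are $O(r)$ choices of $k$ and, for each $k$, $O(n^{k+2})$ choices of $K$, hence $O(n^{r+1})$ subsets $K$ total, giving $O(r\,n^{r+1})$ linear constraints in $O(rn)$ variables; a basis $\{S_0,\dots\}$ of the solution space is obtained by Gaussian elimination. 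Then I apply the linear map $S\mapsto \omega^{(S)}$ to each basis vector (each $\omega^{(S)}$ has $O(n^2)$ coordinates, each a linear form over $O(rn)$ bits, computable in $O(rn\cdot n^2)$ time — or faster by a matrix product), and perform a second Gaussian elimination on the resulting $O(rn)$ vectors in ${\mathbb F_2}^{O(n^2)}$ to extract a basis $\mathcal B$ of $\Omega$.

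\emph{Main obstacle and bookkeeping.} The delicate part is the runtime accounting: the dominating term is building and row-reducing the incidence system. Naively the system has $O(rn^{r+1})$ rows and $O(rn)$ columns, and one must also compute each $\Lambda_{G_1}^K$ and each coefficient; I expect the cleanest route is to phrase the whole incidence condition as a single matrix identity over $\mathbb Z/2^r\mathbb Z$ (the ``cut matrix'' being $k$-orthogonal, generalising the tri-orthogonality remark after \cref{prop:decomposition}) and to carry out the elimination via fast matrix multiplication, which is what produces the $n^{r+2.38}$ rather than $n^{r+3}$ in the stated bound; the factor $r$ tracks the $p$-adic bit length. A secondary subtlety is that \cref{def:r-inc} restricts $K$ to $V\setminus\supp(S)$, so one should be careful that imposing the constraints for \emph{all} $K$ disjoint from $V_X$ (equivalently $K\subseteq V_Z$ plus type-$\bot$ vertices, for every candidate support) is either equivalent to, or a harmless strengthening of, the condition as $\supp(S)$ varies — I would handle this by noting $\supp(S)\subseteq V_X$ always, so it suffices to range $K$ over size-$(k+2)$ subsets of $V\setminus V_X$, which is still $O(n^{k+2})$ and independent of $S$, keeping the system genuinely linear.
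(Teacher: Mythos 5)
Your overall architecture matches the paper's: encode the $r$-incidence conditions as a linear system, observe that $S \mapsto \omega^{(S)}$ is additive, push a generating set of the solution space through this map, and extract a basis. But the step you rely on to make everything computable --- unpacking each divisibility condition, and the toggle condition ``$S\bullet\Lambda_{G_1}^{u,v} = 2^{r-1}\bmod 2^{r}$'', into $\mathbb F_2$-linear constraints on the \emph{bits} of the multiplicities $S(w)$ --- is false, and it is the crux of the algorithm. Integer addition carries: for $j\ge 1$ the bit of weight $2^j$ of $\sum_w S(w)$ is not a linear function of the bits of the $S(w)$ (the weight-$2$ bit of a sum of $0/1$ variables is the second elementary symmetric polynomial). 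Concretely, the solution set of $x+y+z=0\bmod 4$ contains $(1,1,2)$ and $(1,2,1)$, whose bitwise XOR encodes $(0,3,3)$, which is not a solution; so the set of $r$-incident multisets, written in bits, is in general not an $\mathbb F_2$-subspace and cannot be produced by $\mathbb F_2$-Gaussian elimination on bit-constraints. The same carry issue invalidates your first justification of the linearity of $S\mapsto\omega^{(S)}$; your second justification, via the composition property $G\star^r(S_1+S_2)=(G\star^r S_1)\star^r S_2$ together with the fact that edges incident to $V_X$ are untouched, is the correct one and is the argument the paper uses.

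The fix --- and what the paper actually does --- is to keep the multiplicities as elements of $\Zp{2^r}$ and treat $\Sigma$ as the solution set of a system of $O(n^{r+1})$ linear equations \emph{over the ring} $\Zp{2^r}$ (each divisibility condition is rescaled by a power of $2$ into an equation modulo $2^r$). Since $\Zp{2^r}$ has zero divisors, plain Gaussian elimination does not apply; one computes a generating set $\{S_1,\dots,S_t\}$ with $t\le n$ via the Howell transform, which is where the $O(rn^{r+2.38})$ bound comes from. The map $f:S\mapsto\omega^{(S)}$ is then a homomorphism of abelian groups from the $\Zp{2^r}$-module $\Sigma$ to ${\mathbb F_2}^{\{u,v\in V_Z\,|\,u\neq v\}}$; since $f(S+S)=f(S)+f(S)=0$, coefficients reduce modulo $2$ and the images $f(S_i)$ generate $\Omega$ over $\mathbb F_2$, so only this final step is an ordinary $\mathbb F_2$-Gaussian elimination. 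You gesture at ``a single matrix identity over $\mathbb Z/2^r\mathbb Z$'' as a possible optimisation, but it is not an optimisation: it is the only correct formulation, and the missing ingredient in your write-up is the normal-form computation over $\Zp{2^r}$ in place of bitwise elimination.
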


\begin{proof}
    A multiset $S$ of vertices of type X can be represented as a vector in $({\Zp{2^r}})^{V_X}$ which entries are $S(u) \bmod 2^{r}$, as the multiplicity can be considered modulo $2^r$ in the context of a $r$-local complementation. Let $\Sigma$ be the subset of $({\Zp{2^r}})^{V_X}$ that corresponds to all $r$-incident independent multisets of vertices in $V_X$. $\Sigma$ is a vector space since the property of $r$-incidence is preserved under the addition of two vectors. $\Sigma$ is actually the space of solutions to a set of $O(n^{r+1})$ equations\footnote{There are precisely $\binom{|V_Z|}{r+1}+\binom{|V_Z|}{r}+\cdots+\binom{|V_Z|}{3}+\binom{|V_Z|}{2}$ equations.} given by the conditions of $r$-incidence. With some multiplications by powers of 2, all these equations are expressible as equations modulo $2^{r}$. For every set  $K\subseteq V_Z$ of size between $2$ and $r+1$, the corresponding equation is $$\sum_{u \in \Lambda_{G_1}^K}2^{|K|-2+\delta(|K|-2)}S(u) = 0\bmod 2^{r}$$
    Notice that if $r=1$, the space of solutions is the entire space $({\Zp{2^r}})^{V_X}$, as there are no incidence constraints on usual local complementations. Summing up, to compute a basis of $\Sigma$, we solve a system of $O(n^{r+1})$ equations modulo $2^{r}$ with $O(n)$ variables. One can obtain a generating set $\{S_1,S_2,...,S_t\}$ of $\Sigma$ of size $t \ls n$ in $O(r n^{r+2.38})$ basic operations using an algorithm based on the Howell transform \cite{storjohann2000algorithms}. 

    Let $f$ be the function that associates with each element $S \in \Sigma$, its action $f(S) \in \Omega$ on the edges: for any $u,v \in V_Z$, $f(S)_{u,v} = u\sim_{G_1} v ~\oplus~ u\sim_{G_1\star^r S} v$. Notice that $f$ is linear, as for any $S, S' \in \Sigma$:
    \begin{align*}
        f(S+S') &= u\sim_{G_1} v ~\oplus~ u\sim_{G_1\star^r (S+S')} v
        = u\sim_{G_1} v ~\oplus~ u\sim_{(G_1\star^r S) \star^r S'} v\\
        &= u\sim_{G_1} v ~~\oplus~~ u\sim_{G_1\star^r S} v ~~\oplus~~ \left(S' \bullet\Lambda_{G_1\star^r S}^{u,v} = 2^{r-1}\bmod 2^{r}\right)\\
        &= u\sim_{G_1} v ~~\oplus~~ u\sim_{G_1\star^r S} v ~~\oplus~~ \left(S' \bullet\Lambda_{G_1}^{u,v} = 2^{r-1}\bmod 2^{r}\right)\\
        &= u\sim_{G_1 \star^r S} v ~~\oplus~~ u\sim_{G_1\star^r S'} v\\
        &= u\sim_{G_1} v ~\oplus~ u\sim_{G_1 \star^r S} v ~~\oplus~~ u\sim_{G_1} v ~\oplus~ u\sim_{G_1\star^r S'} v = f(S)+f(S')
    \end{align*} 

    This directly implies that $\Omega$ is a vector space: if $\omega, \omega' \in \Omega$, by definition there exists $S, S' \in \Sigma$ such that $f(S) = \omega$ and $f(S') = \omega$, moreover $\omega + \omega' = f(S) + f(S') = f(S+S') \in \Omega$. Also, let us prove that $\Omega$ is generated by $\{f(S_1),f(S_2),...,f(S_t)\}$. Take a vector $\tilde \omega \in \Omega$, by definition there exists $\tilde S \in \Sigma$ such that $\tilde \omega = f(\tilde S)$. $\tilde S$ can be expressed as a linear combination of vectors from the generating set, i.e.~$\tilde S = \sum_{i \in [1,t]}a_i S_i$ where $a_i \in \Zp{2^r}$. Then, for any $u,v \in V_Z$, $ (\tilde \omega)_{u,v} = f\left(\sum_{i \in [1,t]}a_i S_i\right)_{u,v} = \sum_{i \in [1,t]}a'_i f(S_i)_{u,v}$ by linearity of $f$, where $a'_i \in \mathbb F_2$ such that $a'_i = a_i \bmod 2$ when $a_i$ and $a'_i$ are viewed as integers. In other words, $\tilde w = \sum_{i \in [1,t]}a'_i f(S_i)$, implying that $\{f(S_1),f(S_2),...,f(S_t)\}$ is a generating set of $\Omega$. Using Gaussian elimination, one can easily obtain a basis $\mathcal B$ of $\Omega$ from $\{f(S_1),f(S_2),...,f(S_{t})\}$.

\end{proof}

Thanks to the exhaustive description of all possible $r$-local complementations on $G_1$, we are now ready to reduce LC$_r$-equivalence to LC-equivalence with some additional constraint. We denote by $G_1^{\#}$ (resp. $G_2^{\#}$) the graph obtained from $G_1$ (resp. $G_2$) by the following procedure. First, remove the vertices of $V_X$. Then, for each vector $\omega \in \mathcal B$, for each $u,v \in V_Z$ such that $\omega_{u,v} = 1$, add a vertex connected only to $u$ and $v$ and call it $p_{u,v}^{\omega}$, and let $\mathcal P^\omega= \{p_{u,v}^{\omega} ~|~ \omega_{u,v}=1\}$. In the following, we refer to the vertices added by this procedure as "new vertices".

\begin{restatable}{lemma}{newgraphs} \label{lemma:new_graphs}
    $G_1$ and $G_2$ are LC$_r$-equivalent if and only if there exists a sequence of (possibly repeating) vertices $a_1, \cdots, a_m$ such that $G^{\#}_2 = G^{\#}_1 \star a_1 \star \cdots \star a_m$ satisfying the following additional constraints:
    \begin{itemize}
        \item the sequence contains no vertex of $V_Z$;
        \item for each $\omega \in \mathcal B$, either the sequence contains every vertex of $\mathcal P^{\omega}$ exactly once, or it contains none. 
    \end{itemize}
\end{restatable}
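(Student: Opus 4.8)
The plan is to establish the equivalence by translating the statement of \cref{lemma:standardform_LCr_lc} through the gadget construction defining $G_1^{\#}$ and $G_2^{\#}$. Recall that by \cref{lemma:standardform_LCr_lc}, $G_1$ and $G_2$ are LC$_r$-equivalent if and only if there is a single $r$-local complementation over a multiset $S$ of vertices of $V_X$, together with a sequence of usual local complementations on vertices of type $\bot$, transforming $G_1$ into $G_2$. Since there are no edges between $V_X$ and the vertices of type $\bot$, and none between $V_X$ and itself (standard form), the $r$-local complementation over $S$ only toggles edges inside $V_Z$, and it commutes with the local complementations on type-$\bot$ vertices. Hence LC$_r$-equivalence is equivalent to the existence of an $r$-incident independent multiset $S$ of vertices of $V_X$ such that $G_1\star^r S$ and $G_2$ are LC-equivalent via a sequence of local complementations using \emph{only} vertices of type $\bot$ (equivalently, using no vertex of $V_X\cup V_Z$). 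By \cref{lemma:omega}, the effect of such an $S$ on the edges inside $V_Z$ is exactly a vector $\omega^{(S)}\in\Omega$, and $\Omega$ is spanned by $\mathcal B$; so the reachable toggle patterns are precisely the $\mathbb F_2$-linear combinations of the basis vectors in $\mathcal B$.

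\medskip

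The core of the argument is then a dictionary between, on one side, toggling the edge set of $G_1[V_Z]$ by a vector $\omega=\sum_{\omega'\in\mathcal B'}\omega'$ (for some $\mathcal B'\subseteq\mathcal B$) and, on the other side, applying local complementations to the pendant gadget vertices in $G_1^{\#}$. The key local observation is this: if $p_{u,v}^{\omega'}$ is a vertex adjacent only to $u$ and $v$, then $G_1^{\#}\star p_{u,v}^{\omega'}$ toggles exactly the edge $uv$ (and swaps the roles of the two pendant edges, but leaves $p_{u,v}^{\omega'}$ pendant on $\{u,v\}$ again, now with the complemented $uv$). More generally, applying $\star$ at \emph{every} vertex of $\mathcal P^{\omega'}$ once toggles precisely the edges $\{uv : \omega'_{u,v}=1\}$ inside $V_Z$ and, crucially, leaves the gadget vertices in a configuration isomorphic to the original (each $p_{u,v}^{\omega'}$ pendant on the same pair $\{u,v\}$), since each such $p$ is touched once as the complementation vertex and is never in the neighbourhood of another $p_{u',v'}^{\omega'}$ (the gadget vertices form an independent set and have disjoint pendant supports within a fixed $\omega'$, which one must check — distinct pairs give distinct pendants, and a $p$ adjacent to $u$ is not adjacent to another $p$). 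Hence "apply $\star$ at all of $\mathcal P^{\omega'}$" realises exactly the toggle $\omega'$ on $G_1[V_Z]$ while preserving the rest of the gadget structure; composing over $\mathcal B'$ realises $\omega=\sum_{\omega'\in\mathcal B'}\omega'$. This gives the "if and only if" on the $V_Z$-edge-toggling part, matching the reachable set $\Omega$.

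\medskip

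For the forward direction I would argue: given $S$ and a type-$\bot$ sequence $\sigma$ relating $G_1\star^r S$ to $G_2$, write $\omega^{(S)}=\sum_{\omega'\in\mathcal B'}\omega'$, build the sequence consisting of ($\star$ at all of $\mathcal P^{\omega'}$, for each $\omega'\in\mathcal B'$) followed by $\sigma$; since $\sigma$ uses only type-$\bot$ vertices which in $G_1^{\#}$ have the same neighbourhoods as in $G_1$ (they are disjoint from $V_X$, removed vertices, and from all gadget pendants, which attach only to $V_Z$), this sequence takes $G_1^{\#}$ to $G_2^{\#}$ and obeys the two constraints. For the converse, given such a constrained sequence on $G_1^{\#}$, note that forbidding $V_Z$ and requiring each $\mathcal P^{\omega'}$ all-or-nothing means the net effect on the $V_Z$-subgraph is exactly a sum $\omega=\sum_{\omega'\in\mathcal B'}\omega'\in\Omega=f(\Sigma)$, so there is an $r$-incident independent $S$ over $V_X$ with $\omega^{(S)}=\omega$; the remaining local complementations are all on type-$\bot$ vertices and transfer back verbatim to $G_1\star^r S \to G_2$. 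The main obstacle — and the step that needs the most care — is the bookkeeping of what a run of local complementations through the pendant gadgets does to the gadgets themselves: one must verify that after applying $\star$ at all of $\mathcal P^{\omega'}$ the pendants are restored to pendant status on the same pairs (so they do not accumulate spurious edges that would block a later $\mathcal P^{\omega''}$ or a type-$\bot$ complementation), and that interleavings with the type-$\bot$ sequence genuinely commute. I would handle this by a direct computation on a single pendant vertex and an induction on $|\mathcal P^{\omega'}|$, using that the pendant supports within one $\omega'$ are pairwise "almost disjoint" (share at most an endpoint, never an edge), and then invoke that type-$\bot$ vertices are non-adjacent to all of $V_X\cup V_Z$ and to all gadget vertices to get the commutation for free.
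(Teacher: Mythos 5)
Your proposal is correct and follows essentially the same route as the paper: reduce via \cref{lemma:standardform_LCr_lc} to a single $r$-local complementation over $V_X$ plus local complementations on type-$\bot$ vertices, decompose its edge-toggle vector $\omega^{(S)}$ in the basis $\mathcal B$, realise each basis vector by complementing once at every pendant of $\mathcal P^{\omega'}$, and use commutation of pendant and type-$\bot$ complementations for the converse. (One small slip: $\star\, p_{u,v}^{\omega'}$ leaves the two pendant edges untouched rather than ``swapping'' them, which only makes the bookkeeping you worry about easier.)
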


The proof of \cref{lemma:new_graphs} makes use of \cref{lemma:standardform_LCr_lc} and is given in \cref{app:new_graphs}. 

There exists an efficient algorithm that decides whether two graphs are LC-equivalent with such additional constraints using our generalisation of Bouchet's algorithm.

\begin{lemma} \label{lemma:LC_new_graphs}
    Deciding whether there exists a sequence of (possibly repeating) vertices $a_1, \cdots, a_m$ such that $G^{\#}_2 = G^{\#}_1 \star a_1 \star \cdots \star a_m$, satisfying the additional constraints described in \cref{lemma:new_graphs}, can be done in runtime $O(n^4)$.
\end{lemma}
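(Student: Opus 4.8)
The plan is to encode the two additional constraints from \cref{lemma:new_graphs} as linear constraints on the quadruple $(A,B,C,D)\subseteq V(G_1^\#)^4$ appearing in \cref{prop:Bouchet}, and then invoke \cref{prop:extendedBouchet}. Recall that in (the proof of Bouchet's theorem recalled in) \cref{app:Bouchet}, the set $B$ records, roughly speaking, the multiset of vertices on which local complementations are applied, read modulo $2$. So the constraint ``the sequence contains no vertex of $V_Z$'' should translate, as in \cref{ex:V0}, to $b_u=0$ for every $u\in V_Z$, i.e.\ $|V_Z|$ linear equations. The constraint ``for each $\omega\in\mathcal B$, the sequence contains every vertex of $\mathcal P^\omega$ exactly once, or none'' is what makes this more intricate than \cref{ex:V0}: it is a \emph{parity-synchronisation} constraint, forcing all the new vertices attached to a given basis vector $\omega$ to be used the same number of times modulo $2$. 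This is still linear: for a fixed $\omega\in\mathcal B$ and a fixed reference vertex $p^\omega_\star\in\mathcal P^\omega$, impose $b_{p^\omega_{u,v}} = b_{p^\omega_\star}$ for every $p^\omega_{u,v}\in\mathcal P^\omega$. Since $|\mathcal B|\le \binom{|V_Z|}{2}=O(n^2)$ and each $\mathcal P^\omega$ has $O(n^2)$ vertices, $G_1^\#$ has $O(n^4)$ vertices and the constraint set $L$ has $\ell=O(n^4)$ equations.

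The first step I would carry out is to make precise, by appeal to the explicit correspondence in \cref{app:Bouchet}, that a solution $(A,B,C,D)$ to $(i)$, $(ii)$ and $L$ yields a sequence of local complementations relating $G_1^\#$ to $G_2^\#$ in which the number of times a vertex $w$ appears, read modulo $2$, equals $b_w$. Concretely: vertices outside the support of $b$ are not used an odd number of times, and one may in fact discard even repetitions since the transformation only depends on the parity (because $G\star w\star w=G$). This is exactly what lets the ``no vertex of $V_Z$'' requirement be read off $b$, and likewise the synchronisation requirement: the constraints $b_{p^\omega_{u,v}}=b_{p^\omega_\star}$ guarantee that either all vertices of $\mathcal P^\omega$ occur an odd number of times (hence, after removing even repetitions, exactly once) or all occur an even number (hence, after removal, none). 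Conversely, a sequence satisfying the two combinatorial constraints of \cref{lemma:new_graphs} gives a solution to $L$ by the same dictionary.

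The second step is a routine verification that the hypotheses of \cref{prop:extendedBouchet} are met: $G_1^\#$ and $G_2^\#$ are connected (the vertices of $V_Z$ survive the deletion of $V_X$ and the new vertices are attached to them; connectivity can be arranged or argued from the standard-form structure, handling trivial degenerate cases separately), and they contain an even-degree vertex --- each new vertex $p^\omega_{u,v}$ has degree exactly $2$, which settles this immediately provided at least one new vertex exists; if $\mathcal B=\{0\}$ the problem is vacuous or reduces to plain LC-equivalence on $V_Z$, handled directly. Then \cref{prop:extendedBouchet} applies with $n'=O(n^2)$ vertices and $\ell=O(n^4)$ constraints, giving runtime $O((n'^2+\ell)n'^2)=O((n^4+n^4)n^4)=O(n^8)$ --- so to reach the claimed $O(n^4)$ bound I would need to be considerably more careful about sizes.

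The main obstacle is precisely this complexity accounting: a naive encoding makes $G_1^\#$ have $\Theta(n^4)$ vertices, which is far too many for an $O(n^4)$ algorithm. The fix I anticipate is to observe that one does \emph{not} need a separate ``gadget'' vertex for every pair $(u,v)$ with $\omega_{u,v}=1$: for a fixed $\omega$, all the $p^\omega_{u,v}$ are forced to behave identically, so they can be represented by a bounded number of vertices, or, better, the effect of toggling all edges prescribed by $\omega$ can be simulated by adding just $O(1)$ auxiliary vertices per basis vector $\omega$ (there are $O(n^2)$ of those) --- or even by folding the $|\mathcal B|\le O(n^2)$ choices into the linear constraint set $L$ directly over a graph with only $O(n)$ vertices ($V_Z$ plus $O(1)$ auxiliaries), keeping $\ell=O(n^2)$. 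With $n'=O(n)$ and $\ell=O(n^2)$, \cref{prop:extendedBouchet} gives $O((n^2+n^2)n^2)=O(n^4)$, as claimed. Getting this reduction right --- so that the synchronisation constraint is still faithfully captured while the gadget count drops from $O(n^4)$ to $O(n^2)$ vertices or fewer --- is the delicate part, and is where I would spend most of the proof; the remainder is bookkeeping around \cref{prop:extendedBouchet} and the dictionary of \cref{app:Bouchet}.
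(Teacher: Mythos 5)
Your core reduction is the one the paper uses: encode the requirements of \cref{lemma:new_graphs} as linear constraints on $(A,B,C,D)$ --- $b_u=0$ for $u\in V_Z$ and $b$-synchronisation within each $\mathcal P^{\omega}$ --- treat $\dim\Omega=0$ separately via plain Bouchet, note that the degree-$2$ new vertices supply the even-degree hypothesis, and invoke \cref{prop:extendedBouchet}. (The paper additionally imposes $v\in\overline{C}$ for every new vertex; this confines new vertices to cases 1 and 2 of the dictionary in \cref{app:Bouchet} and makes ``exactly once or none'' immediate. Your constraints are arguably sufficient, but your justification --- ``discard even repetitions since only parity matters'' --- is not valid for local complementations in general ($G\star u\star v\star u\neq G\star v$ when $u\sim v$); you would need to argue that a new vertex can never participate in a pivot, which does follow from $V_Z\subseteq\overline{B}$, and that its neighbourhood, hence the effect of complementing it, is never altered by the rest of the sequence.)

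The genuine gap is in the second half: you do not actually prove the stated runtime. Your size estimate is off --- by \cref{lemma:omega} the basis $\mathcal B$ is extracted from a generating set of $\Sigma\subseteq(\Zp{2^r})^{V_X}$ of size $t\le n$, so $|\mathcal B|\le n$ rather than $\binom{|V_Z|}{2}$, and $G_1^{\#}$ has $O(n^3)$ vertices, not $O(n^4)$ --- and, more importantly, your proposed fix is to replace each $\mathcal P^{\omega}$ by $O(1)$ merged gadgets. That changes the graphs $G_1^{\#},G_2^{\#}$, which are fixed by the construction preceding \cref{lemma:new_graphs}, so it would require re-proving \cref{lemma:new_graphs} for the modified construction; you leave this step as a sketch and yourself identify it as where ``most of the proof'' would go. The paper does none of this: it applies \cref{prop:extendedBouchet} directly to $G^{\#}_1,G^{\#}_2$ with the constraint set above and reads off the $O(n^4)$ bound of that proposition (with $n$ the order of the graphs handed to the extended Bouchet algorithm). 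Your instinct that the bound is sensitive to the blow-up in the order of $G^{\#}$ is a fair observation about the paper's bookkeeping, but as written your argument establishes neither the paper's bound nor a complete alternative.
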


\begin{proof}
    If the vector space $\Omega$ is of dimension zero (i.e.~$\Omega$ only contains the null vector), then there is no additional constraint, thus one can apply the usual Bouchet algorithm that decides LC-equivalence of graphs.

    If $\Omega$ is not of dimension zero, then $G^{\#}_1$ and $G^{\#}_2$ both have at least one even-degree vertex (since every "new vertex" is of degree 2). Using the notations of \cref{subsec:bouchet}, let us define the following linear constraints on $V^4$:  $\forall u \in V_Z$, $\forall \omega \in \mathcal B$, $\forall v,v'\in \mathcal P^{\omega}$, 
    \begin{itemize}
        \item $ u \in \overline B$;
        \item $v \in \overline C$;
        \item $v \in B$ if and only if $v' \in B$.
    \end{itemize} 
    According to \cref{prop:extendedBouchet}, a solution to the system of equations composed of (i), (ii) (see \cref{prop:Bouchet}) and the additional linear constraints can be computed in runtime $O(n^4)$ when it exists. Following the algorithm in \cref{app:Bouchet}, such a solution yields a sequence of local complementations satisfying the additional constraints described in \cref{lemma:new_graphs}. Conversely, such a sequence of local complementations can be converted into a valid solution to the system of equations.
\end{proof}

Summing up, we have an algorithm that decides, for a fixed level $r$,  the LC$_r$-equivalence of graphs in polynomial runtime.

\begin{theorem} \label{thm:algolcr}
    There exists an algorithm that decides if two graphs are LC$_r$-equivalent with runtime $O(r n^{r+2.38} + n^{6,38})$, where $n$ is the order of the graphs.
\end{theorem}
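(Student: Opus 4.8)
The plan is to chain the lemmas of this section into one decision procedure; once the correctness of each stage is granted, the proof is essentially a complexity accounting. First I would run the algorithm of \cref{lemma:standardform} on the inputs $G_1, G_2$: in time $O(n^{6.38})$ it either certifies that $G_1$ and $G_2$ are not LU-equivalent --- hence \emph{a fortiori} not LC$_r$-equivalent, and we reject --- or it returns an MLS cover $\mathcal M$ together with graphs $G_1', G_2'$ that are LC-equivalent to $G_1, G_2$ respectively and both in standard form with respect to $\mathcal M$. Since LC-equivalence is a special case of LC$_r$-equivalence for every $r \gs 1$, and LC$_r$-equivalence is transitive, $G_1$ and $G_2$ are LC$_r$-equivalent if and only if $G_1'$ and $G_2'$ are; so it suffices to decide the latter. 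Using the vertex types recorded while building $\mathcal M$, I would then check in $O(n^2)$ time whether every vertex has the same type in $G_1'$ and $G_2'$ and whether every type-X vertex has the same neighbourhood in both graphs; by \cref{lemma:same_types} both conditions are necessary for LU-equivalence, so if either fails we reject. Otherwise the sets $V_X$ and $V_Z$ are well defined and common to $G_1'$ and $G_2'$, which is exactly the setting in which \cref{lemma:omega} and \cref{lemma:new_graphs} are stated.

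Next I would invoke \cref{lemma:omega} to compute a basis $\mathcal B$ of the space $\Omega$ of edge-toggle patterns over $V_Z$ induced by a single $r$-local complementation on a multiset of type-X vertices; this is the dominant and only $r$-dependent stage, at cost $O(r\,n^{r+2.38})$. From $G_1', G_2'$ and $\mathcal B$ I build the gadget graphs $G_1^{\#}$ and $G_2^{\#}$ by the construction preceding \cref{lemma:new_graphs} (delete the vertices of $V_X$; for each $\omega \in \mathcal B$ and each pair $u,v \in V_Z$ with $\omega_{u,v}=1$ attach a fresh degree-two vertex $p^{\omega}_{u,v}$), which takes polynomial time and leaves graphs of size polynomial in $n$. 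By \cref{lemma:new_graphs}, $G_1'$ and $G_2'$ are LC$_r$-equivalent if and only if there is a sequence of local complementations taking $G_1^{\#}$ to $G_2^{\#}$ that uses no vertex of $V_Z$ and uses each $\mathcal P^{\omega}$ either wholly once or not at all; and by \cref{lemma:LC_new_graphs}, which relies on the constrained Bouchet procedure of \cref{prop:extendedBouchet} (with the ``Class $\alpha$'' case absorbed because every gadget vertex has even degree, or handled by ordinary Bouchet when $\Omega=\{0\}$), this last question is decided in $O(n^4)$. Composing the three equivalences yields the decision, and the total running time is $O(n^{6.38}) + O(r\,n^{r+2.38}) + O(n^4) = O(r\,n^{r+2.38} + n^{6.38})$, as claimed.

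There is no substantial obstacle left, since the technical weight sits in \cref{lemma:omega} and \cref{lemma:new_graphs}; the proof itself is bookkeeping. The one point that needs care is that \cref{lemma:new_graphs} and \cref{lemma:LC_new_graphs} presuppose that $G_1'$ and $G_2'$ are in standard form with respect to a \emph{common} MLS cover and that the type data --- and hence $V_X$ and $V_Z$ --- is unambiguous; the preliminary tests justified by \cref{lemma:same_types} guarantee that we only reach the gadget construction in that situation, and reject correctly otherwise.
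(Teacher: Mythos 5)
Your proposal is correct and follows essentially the same route as the paper: the paper's proof of \cref{thm:algolcr} is precisely the five-step pipeline you describe (standard form via \cref{lemma:standardform}, the type/neighbourhood check justified by \cref{lemma:same_types}, the basis of $\Omega$ via \cref{lemma:omega}, the gadget graphs $G_1^{\#},G_2^{\#}$, and the constrained Bouchet call via \cref{lemma:new_graphs} and \cref{lemma:LC_new_graphs}), with the same complexity accounting $O(r\,n^{r+2.38}+n^{6.38})$.
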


The algorithm reads as follows:
\begin{enumerate}
    \item Put $G_1$ and $G_2$ in standard form with respect to the same MLS cover if possible, otherwise output NO.
    \item Check whether each vertex has the same type in $G_1$ and $G_2$, and whether every vertex of type X has the same neighbourhood in both graphs, otherwise output NO.
    \item Compute a basis of the vector space $\Omega$.
    \item Compute the graphs ${G}^{\#}_1$ and ${G}^{\#}_2$.
    \item Decide whether ${G}^{\#}_1$ and ${G}^{\#}_2$ are LC-equivalent with the additional constraints described in \cref{lemma:new_graphs}. Output YES if this is the case,  NO otherwise.
\end{enumerate}

Notice that the algorithm is exponential in $r$, in particular it does not provide an efficient algorithm to decide LU-equivalence of graph states. To address this issue, we provide in the next subsection some upper bounds on the level of a generalised local complementation.

\subsection{Bounds for generalised local complementation} \label{subsec:bounds}

In this section, we prove an upper bound on the level of a valid generalised local complementation: roughly speaking we show that if $G\star^r S$ is valid then $r$ is at most logarithmic in the order of the graph $G$. This bound is however not true in general as it has been shown in \cite{claudet2024local} that whenever $G\star^r S$ is valid, we have $G\star^r S = G\star^{r+1} (S+S)$. To avoid these pathological cases, we thus focus on genuine $r$-incident independent multisets:

\begin{definition}
Given a graph $G$, a $r$-incident independent multiset $S$ is \emph{genuine} if there exists a set $K \se V \sm \supp(S)$ such that $|K|>1$ and $\sum_{N_{G}(u)=K}S(u)$ is odd\footnote{With a slight abuse of notation, $\sum_{N_{G}(u)=K}S(u)$ is the sum over all $u\in V$ s.t. $N_{G}(u)=K$.}. 
\end{definition}

\begin{proposition} \label{prop:nontrivial}
If $G\star^r S$ is valid and there is no $S'$ such that $G\star^{r-1} S' = G \star^r S$ then $S$ is a genuine $r$-incident independent multiset.    
\end{proposition}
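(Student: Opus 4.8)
The statement to prove is the contrapositive-flavored assertion: if $G\star^r S$ is valid and is \emph{not} equal to any $(r-1)$-local complementation, then $S$ must be genuine. So I would argue by contraposition: assume $S$ is an $r$-incident independent multiset that is \emph{not} genuine, and show that then there exists $S'$ with $G\star^{r-1}S' = G\star^r S$. By definition, $S$ not genuine means that for every set $K\se V\sm\supp(S)$ with $|K|>1$, the quantity $\sum_{N_G(u)=K}S(u)$ is even. The goal is to exhibit a witness $S'$ — and the natural candidate is $S' = S/2$ in an appropriate sense, or more precisely a multiset whose multiplicities are "half" of those of $S$ once we reorganise $S$ by the neighbourhood classes of its support vertices.

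\medskip

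\textbf{Step 1: Reduce to understanding which edges $G\star^r S$ toggles.} Recall from \cref{def:r-LC} that $G\star^r S$ toggles the edge $uv$ exactly when $S\bullet\Lambda_G^{u,v} = 2^{r-1}\bmod 2^r$, i.e.\ when the number of common neighbours of $u,v$ lying in $S$ (with multiplicity) is an odd multiple of $2^{r-1}$. I want to show this toggling pattern is realisable by a level-$(r-1)$ complementation over some multiset $S'$, which toggles $uv$ iff $S'\bullet\Lambda_G^{u,v} = 2^{r-2}\bmod 2^{r-1}$. Note by the last bullet of the Proposition on page with basic properties, $G\star^{r-1}S'$ being valid and transforming $G$ the same way is what we need; also recall $G\star^{r-1}S$ is automatically valid and equals $G$, so $S'$ cannot simply be $S$.

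\medskip

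\textbf{Step 2: Build $S'$ from the neighbourhood-class decomposition of $S$.} Partition $\supp(S)$ according to the neighbourhood: for each set $K\se V$ that arises as $N_G(u)$ for some $u\in\supp(S)$, let $c_K = \sum_{N_G(u)=K} S(u)$ be the total multiplicity of vertices in $S$ with neighbourhood exactly $K$. The key observations are: (a) vertices of $S$ with a common neighbourhood $K$ with $|K|\le 1$ contribute to $\Lambda_G^{u,v}$ for no pair or are otherwise harmless (a vertex of degree $\le 1$ in $S$ is a common neighbour of at most one pair's worth of constraints — I will need to handle $|K|=0$ and $|K|=1$ separately and check they do not obstruct the reduction); (b) for $|K|>1$, non-genuineness gives $c_K$ even. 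Now define $S'$ by taking, within each class $K$ with $|K|>1$, "half" the multiplicity — concretely pick any multiset supported on $\{u : N_G(u)=K\}$ with total multiplicity $c_K/2$; for classes with $|K|\le 1$, set the corresponding part of $S'$ so that the few edges those vertices could affect are still matched (this may force including them with full or adjusted multiplicity, to be checked). Then for any pair $u,v$, $S'\bullet\Lambda_G^{u,v} = \tfrac12\, S\bullet\Lambda_G^{u,v}$ up to the low-degree corrections, since $\Lambda_G^{u,v}$ is a union of neighbourhood-classes $K$ with $\{u,v\}\se K$, hence $|K|\ge 2$. Consequently $S\bullet\Lambda_G^{u,v} = 2^{r-1}\bmod 2^r \iff S'\bullet\Lambda_G^{u,v} = 2^{r-2}\bmod 2^{r-1}$, so $G\star^{r-1}S'$ toggles exactly the same edges. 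One also must verify $S'$ is independent (it is, as $\supp(S')\se\supp(S)$ up to the low-degree part, and $S$ is independent) and $(r-1)$-incident (follows from $S$ being $r$-incident together with the halving: $S'\bullet\Lambda_G^K = \tfrac12 S\bullet\Lambda_G^K$ divisible by $2^{(r-1)-k-\delta(k)}$ because $S\bullet\Lambda_G^K$ was divisible by $2^{r-k-\delta(k)}$).

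\medskip

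\textbf{Main obstacle.} The delicate point is the treatment of the vertices of $S$ whose neighbourhood $K$ has $|K|\le 1$: these are exactly the ones that genuineness ignores, yet they can still be common neighbours — a vertex $u\in\supp(S)$ with $N_G(u)=\{w\}$ is a common neighbour of no pair of \emph{distinct} vertices unless... actually it lies in $\Lambda_G^{x,y}$ only if both $x,y\in\{w\}$, impossible, so it contributes to no edge-toggling constraint at all. Similarly $|K|=0$ contributes nothing. So in fact those vertices of $S$ are entirely irrelevant to the transformation $G\star^r S$ and can be dropped from $S'$ freely. The real care needed is just to make this precise — that $\Lambda_G^{u,v}$ for distinct $u,v$ only ever contains vertices of degree $\ge 2$ — and then the halving argument goes through cleanly. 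I would also double-check the edge case $r=2\to r=1$: there $S'$ is $1$-incident automatically (every multiset is), consistent with \cref{prop:decomposition}, so no issue. Finally I should confirm the logical direction matches the statement: we have shown "not genuine $\Rightarrow$ $\exists S'$ with $G\star^{r-1}S' = G\star^r S$", whose contrapositive is exactly Proposition \ref{prop:nontrivial}.
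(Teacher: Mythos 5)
Your proposal is correct and follows essentially the same route as the paper: argue by contraposition, group $\supp(S)$ by neighbourhood classes so that only the class totals $c_K=\sum_{N_G(u)=K}S(u)$ matter, use non-genuineness to halve those totals, and check that halving converts the level-$r$ toggling and incidence conditions into the level-$(r-1)$ ones. Your explicit treatment of the vertices of $\supp(S)$ of degree at most $1$ -- noting they lie in no $\Lambda_G^{u,v}$ for distinct $u,v$ and in no $\Lambda_G^K$ with $|K|\ge 2$, hence can be discarded -- is if anything slightly more careful than the paper's proof, which collapses each class to a single representative and then divides all multiplicities by two without comment on those low-degree classes.
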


\begin{proof} By contradiction assume $S$ is an $r$-incident independent multiset that is not genuine. Let $S'$ be the multiset obtained from $S$ by choosing, for every set $K \se V \sm \supp(S)$ s.t. $\{u \in \supp(S)~|~N_{G}(u)=K\}$ is not empty, a single vertex $u \in \supp(S)$ s.t. $N_{G}(u)=K$, and setting $S'(u)=\sum_{N_{G}(u)=K}S(u)$ and for any other vertex $v \in \supp(S)$ s.t. $N_{G}(v)=K$, $S'(v)=0$. It is direct to show that $S'$ is $r$-incident and that $G\star^r S = G\star^r S'$. Then, let $S'/2$ be the multiset obtained from $S'$ by dividing by 2 the multiplicity of each vertex in $\supp(S')$. It is direct to show that $S'$ is $(r-1)$-incident and that $G\star^r S= G\star^{r-1} S'/2$.   
\end{proof}

Genuine $r$-incidence can only occur for multisets whose support is of size at least exponential in $r$.

\begin{lemma} \label{lemma:exp_support}
    If $r>1$ and $S$ is a genuine $r$-incident independent multiset of a graph $G$, then $|\supp(S)| \gs 2^{r+2}-r-3$.
\end{lemma}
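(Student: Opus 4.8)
# Proof Plan for Lemma (exponential support size)

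\textbf{Setup.} The plan is to leverage the defining property of genuineness: there is some set $K \subseteq V \setminus \supp(S)$ with $|K| \geq 2$ such that $\sum_{N_G(u) = K} S(u)$ is odd. Fix the smallest such $K$, say $|K| = k_0 \geq 2$. The strategy is to build, starting from $K$, a large collection of subsets of $V \setminus \supp(S)$ of bounded size on which the $r$-incidence congruences force strong $2$-adic divisibility, and then play these divisibility conditions against the oddness at $K$ to derive that $\supp(S)$ must be large. The central quantity to track is, for each $K' \subseteq V \setminus \supp(S)$, the sum $\sigma(K') := S \bullet \Lambda_G^{K'} = \sum_{u : K' \subseteq N_G(u)} S(u)$, together with its ``exact'' variant $\tau(K') := \sum_{N_G(u) = K'} S(u)$; these are related by Möbius inversion over the superset lattice, $\sigma(K') = \sum_{K' \subseteq L} \tau(L)$.

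\textbf{Main argument.} First I would record what $r$-incidence says in terms of $\sigma$: for every $K'$ with $2 \leq |K'| \leq r+1$, we have $2^{r - |K'| - \delta(|K'|-2) + 1} \mid \sigma(K')$ (reindexing Definition~\ref{def:r-inc} with $|K'| = k+2$). In particular $\sigma(K')$ is highly divisible when $|K'|$ is small. The genuineness hypothesis gives $\tau(K)$ odd for our fixed $K$ of minimal size $k_0$; by minimality, $\tau(K'') $ is even for all $K''$ with $2 \leq |K''| < k_0$. Using Möbius inversion I would extract, for subsets $K' \subseteq K$, a relation expressing $\tau(K)$ modulo $2$ (hence $\equiv 1$) in terms of the $\sigma(K')$ for $K' \supseteq K'$ within a controlled range — the point being that the $\sigma$ values that are ``forced'' to be divisible by large powers of $2$ cannot by themselves account for an odd $\tau(K)$ unless there are enough vertices in $\supp(S)$ outside the relevant neighbourhoods to supply the discrepancy. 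Concretely, I expect to descend: since $\sigma(K)$ must be divisible by $2^{r-k_0-\delta(k_0-2)+1}$ but involves $\tau(K)$ which is odd, there must be at least $2^{r-k_0-\delta(k_0-2)+1} - 1$ further units' worth of multiplicity contributed by vertices $u$ with $K \subsetneq N_G(u)$; each such vertex has neighbourhood a strict superset of $K$, and iterating this reasoning up the lattice (at each level the divisibility requirement weakens by a factor of $2$, but only once we pass $|K'| = r+1$ does it vanish) forces a geometric growth in the number of distinct neighbourhoods, hence in $|\supp(S)|$, reaching something on the order of $2 + 4 + \cdots + 2^{r} = 2^{r+1} - 2$ vertices from this tower, plus the $k_0 \geq 2$ vertices needed to even define $K$ via common neighbourhoods, and additional slack from the $\delta$ correction and the base cases. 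I would then do the bookkeeping carefully to match the exact bound $2^{r+2} - r - 3$: the $2^{r+2}$ main term should come from a slightly more careful count (e.g.\ the tower at level $j$ contributes roughly $2^{\,r-j+\text{const}}$ vertices and these are forced to be distinct by distinct neighbourhoods), while the $-r-3$ is the accumulated loss from the endpoints of the range and the Kronecker-delta adjustment at $|K'| = 2$.

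\textbf{Alternative framing.} An alternative — possibly cleaner — route is to pass to the ``divided'' picture: consider $\nu := v_2(\gcd \text{ of relevant data})$ and argue by a $2$-adic valuation count. For each $j \in \{0, 1, \dots, r-k_0\}$ one shows that the set of vertices $u \in \supp(S)$ whose neighbourhood $N_G(u)$ contains $K$ and has size $|K| + j$, weighted by $S(u) \bmod 2^{\,\text{something}}$, must be nonempty and in fact must have cardinality growing like $2^{\,r-k_0-j-\delta+1}$, because the $r$-incidence congruence at the $K'$ of size $|K|+j+1$ one level up constrains the sum but the congruence cannot be met with too few terms given the odd ``seed'' at $K$. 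Summing $\sum_j 2^{\,r-k_0-j-\delta+1}$ is geometric and yields $\approx 2^{\,r-k_0-\delta+2}$, which with $k_0 = 2$, $\delta = 1$ gives $\approx 2^{r-1}$ from this single tower; to reach $2^{r+2}$ one needs to also count the neighbourhoods themselves (distinct $K'$'s force distinct support vertices via $\Lambda$) and combine several towers anchored at the different size-$2$ subsets of $K$ and its supersets.

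\textbf{Main obstacle.} The hard part will be the precise combinatorial bookkeeping that turns the qualitative ``geometric growth of forced support'' into the exact constant $2^{r+2} - r - 3$: one must be careful that the vertices counted at different levels of the lattice tower are genuinely distinct (which follows from their neighbourhoods being distinct, but one must ensure neighbourhoods are not double-counted), handle the $\delta(|K'|-2)$ correction which makes the $|K'|=2$ case behave differently from $|K'| \geq 3$, and correctly account for the fact that the divisibility constraints only apply for $|K'| \leq r+1$ so the tower has exactly the right height. A secondary subtlety is the Möbius/inclusion-exclusion step: one must verify that the ``exact-neighbourhood'' sums $\tau$ and the ``common-neighbourhood'' sums $\sigma$ interact as claimed modulo the relevant powers of $2$, since $r$-incidence is stated in terms of $\sigma$ but genuineness in terms of $\tau$. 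I would isolate that translation as a preliminary claim and prove the growth estimate by downward induction on $|K'|$ from $r+1$ to $2$.
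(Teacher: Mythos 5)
There is a genuine gap, and it sits at the heart of your plan: you repeatedly convert a divisibility constraint on the \emph{weighted} sum $\sigma(K') = S\bullet\Lambda_G^{K'}$ into a lower bound on the \emph{number of distinct vertices} contributing to it. This step fails. A congruence such as $\sigma(K')\equiv 0 \bmod 2^{r-|K'|+2}$ with an odd ``seed'' $\tau(K)$ only forces the remaining contributions to sum to a prescribed nonzero residue; since a single vertex $u$ may carry multiplicity $S(u)$ as large as $2^r-1$, one vertex can absorb the entire discrepancy at every level of your tower. So the claimed ``geometric growth in the number of distinct neighbourhoods, hence in $|\supp(S)|$'' does not follow from the congruences, and neither version of your argument (the main one or the $2$-adic reformulation) produces more than $O(1)$ forced support vertices per level. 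A secondary problem is your choice of anchor: you fix the \emph{smallest} $K$ with $\tau(K)=\sum_{N_G(u)=K}S(u)$ odd, which may well have size $2$ and carries no useful lower bound; the productive choice is the \emph{largest} such $K$, call it $K_{\max}$, because then $\sigma(K_{\max})=\sum_{L\supseteq K_{\max}}\tau(L)=\tau(K_{\max})\bmod 2$ is odd, and $r$-incidence immediately forces $|K_{\max}|\gs r+2$. Finally, you yourself note that your single-tower count tops out around $2^{r+1}$ and that reaching $2^{r+2}$ would require ``combining several towers'' without double-counting — but no mechanism for that is given, so the bound is not reached even granting the flawed step.

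The paper's proof avoids all of this by working purely modulo $2$ with \emph{exact} neighbourhoods. Let $m$ be minimal such that some $K_0\se V\sm\supp(S)$ of size $m$ has $\sigma(K_0)$ odd (such a set exists by the $K_{\max}$ observation above), so $m\gs r+2$ by $r$-incidence. Restricting to $G'=G[\supp(S)\cup K_0]$, a downward induction over $K\se K_0$ (using that $\sigma(K)$ is even for all proper $K\subsetneq K_0$ with $|K|>1$, by minimality of $m$, and that the number of strict supersets of $K$ inside $K_0$ is odd) shows that $\sum_{N_{G'}(u)=K}S(u)$ is odd for \emph{every} $K\se K_0$ with $|K|>1$. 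Each such $K$ is therefore the exact neighbourhood of at least one vertex of $\supp(S)$, and distinct $K$ give distinct vertices — this is the step that legitimately converts parity information into a count of distinct support vertices. The bound is then simply the number of subsets of $K_0$ of size at least $2$, namely $2^m-m-1\gs 2^{r+2}-r-3$. If you want to repair your write-up, replace the minimal-$\tau$ anchor by the maximal one, drop the multiplicity-based tower, and prove the ``every subset of $K_0$ is an exact neighbourhood'' statement by the parity induction.
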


\begin{proof}
    Let $m>1$ be the smallest integer such that there exists a set $K_0 \se V \sm \supp(S)$ of size $m$ such that $S \bullet\Lambda_G^{K_0}$ is odd.   
    Note that by hypothesis there exists such an integer. Indeed, let $K_\text{max}$ be the biggest subset (by inclusion) of $V \sm \supp(S)$ such that $\sum_{N_{G}(u)=K_\text{max}}S(u)$ is odd: then $S \bullet\Lambda_G^{K_\text{max}}$ is odd. Thus, by definition of the $r$-incidence, $m\gs r+2$.

    Let $G' = G[\supp(S) \cup K_0]$ the graph obtained from $G$ by removing the vertices that are neither in the support of $S$, nor in $K_0$. By definition, $S$ is also $r$-incident in $G'$. Also, $S \bullet\Lambda_{G'}^{K_0}$ is odd, and for every set $K \subsetneq K_0$ s.t. $|K|>1$, $S \bullet\Lambda_{G'}^{K}$ is even.

    Let us prove that for any $K \se K_0$ s.t. $|K|>1$, $\sum_{N_{G'}(u)=K}S(u)$ is odd, by induction over the size of $K$. First notice that $\sum_{N_{G'}(u)=K_0}S(u) = S \bullet\Lambda_{G'}^{K_0}$ is odd. Then, let $K_1 \subsetneq K_0$ s.t. $|K_1|>1$.
    \begin{align*}
        &S \bullet\Lambda_{G'}^{K_1} = \sum_{K_1\se K\se K_0}\sum_{N_{G'}(u)=K}S(u)
        = \sum_{N_{G'}(u)=K_1}S(u) + \sum_{K_1\subsetneq K\se K_0}\sum_{N_{G'}(u)=K}S(u)\\
        &= \sum_{N_{G'}(u)=K_1}S(u) + |\{K\se K_0~|~K_1\subsetneq K\}| \bmod 2 \text{~~by hypothesis of induction}\\
        &= \sum_{N_{G'}(u)=K_1}S(u) + 1 \bmod 2 
    \end{align*}
Thus, $\sum_{N_{G'}(u)=K_1}S(u)$ is odd. As a consequence, for any $K \se K_0$ s.t. $|K|>1$, there exists at least one vertex $u\in \supp(S)$ s.t. $N_{G'}(u)=K$. Then, $|\supp(S)| \gs |\{K\se K_0~|~|K|>1\}| = 2^m - m - 1 \gs 2^{r+2} - (r+2) -1 = 2^{r+2}-r-3$.
\end{proof}

Likewise, $r$-local complementations that cannot be implemented by $(r-1)$-local complementations can only occur or multisets with sufficiently many vertices outside their support.

\begin{restatable}{lemma}{sizeZ}\label{lemma:sizeZ}
    If $G\star^r S$ is valid and there is no $S'$ such that $G\star^{r-1} S' = G \star^r S$, then $|V\sm\supp(S)| \gs r+3$.
\end{restatable}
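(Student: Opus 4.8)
The statement we need is \cref{lemma:sizeZ}: if $G \star^r S$ is valid and not reducible to an $(r-1)$-local complementation, then $|V \sm \supp(S)| \ge r+3$. The natural strategy is to combine \cref{prop:nontrivial} with \cref{lemma:exp_support}. By \cref{prop:nontrivial}, the hypotheses imply that $S$ is a \emph{genuine} $r$-incident independent multiset, so there is a set $K \se V \sm \supp(S)$ with $|K|>1$ and $\sum_{N_G(u)=K}S(u)$ odd. If $r>1$, \cref{lemma:exp_support} gives $|\supp(S)| \ge 2^{r+2}-r-3$, which is a bound on the support rather than on its complement, so this alone is not enough; the point is to produce many vertices \emph{outside} $\supp(S)$.

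First I would recall that the proof of \cref{lemma:exp_support} actually works inside the induced subgraph $G' = G[\supp(S) \cup K_0]$, where $K_0 \se V \sm \supp(S)$ is a minimal-size set (of size $m \ge r+2$) with $S \bullet \Lambda_{G'}^{K_0}$ odd. So the key observation is simply that $K_0$ itself is a subset of $V \sm \supp(S)$ of size $m \ge r+2$. This already gives $|V \sm \supp(S)| \ge r+2$; the plan is to squeeze out one more vertex. To do this, note that the vertex-subset $\supp(S)$ of $G$ cannot be \emph{all} of $V$: indeed, if $V = \supp(S)$ then $V \sm \supp(S) = \emptyset$, so there is no set $K$ witnessing genuineness — contradiction. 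More carefully, since $K_0 \se V \sm \supp(S)$ has size $m$, and genuineness requires at least one vertex outside $\supp(S)$ belonging to such a $K$, we have $m \le |V \sm \supp(S)|$; I want to argue $|V\sm\supp(S)| \ge m+1$, i.e.\ that there is at least one vertex outside $\supp(S) \cup K_0$.

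The cleanest way to get the extra vertex: by the proof of \cref{lemma:exp_support}, for \emph{every} $K \se K_0$ with $|K|>1$ there is a vertex $u \in \supp(S)$ with $N_{G'}(u) = K$, and in particular with $N_{G'}(u) = K_0$, so some $u_0 \in \supp(S)$ is adjacent in $G'$ to every vertex of $K_0$. But $u_0 \in \supp(S)$, so $u_0 \notin K_0$ and $u_0 \notin V\sm\supp(S)$. This does not directly help. Instead, I would handle the case $r=1$ separately (then $r+3 = 4$; a genuine $1$-incident multiset — every multiset is $1$-incident — with $\sum_{N_G(u)=K}S(u)$ odd for some $|K|>1$ forces $|K|\ge 2$ vertices outside $\supp(S)$, and a short combinatorial argument using that $G\star^1 S$ is a genuine composition of local complementations yields two more; alternatively observe the $r=1$ case of \cref{lemma:sizeZ} can be checked by hand or subsumed). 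For $r>1$, take $K_0$ as above with $|K_0|=m\ge r+2$, and additionally invoke genuineness once more: since $\sum_{N_{G'}(u)=K_0}S(u)$ is odd, and the induced graph $G'$ omits no neighbourhood relation among $\supp(S)\cup K_0$, a parity/counting argument on $|\{K\se K_0 : |K|>1\}| = 2^m-m-1$ distinct neighbourhoods realised within $\supp(S)$ shows $|\supp(S)|\ge 2^m-m-1 \ge 2^{r+2}-r-3$; combined with the need for the genuineness witness to use $K_0\cup\{$one more$\}$ structure, one extracts the $+1$.

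\textbf{Main obstacle.} I expect the delicate point to be getting the exact constant $r+3$ rather than $r+2$. The bound $|V\sm\supp(S)|\ge r+2$ falls out immediately from $m\ge r+2$ with $K_0 \se V\sm\supp(S)$; the extra ``$+1$'' must come from showing that $K_0$ is a \emph{strict} subset of $V\sm\supp(S)$, i.e.\ that minimality of $m$ and genuineness together forbid $V = \supp(S)\cup K_0$. I would argue this by contradiction: if $V=\supp(S)\cup K_0$, then $K_0 = V\sm\supp(S)$ is the \emph{unique} maximal candidate, and re-examining the $r$-incidence conditions on subsets $K\subsetneq K_0$ of sizes $2,\dots,r+1$ (all of which must have $S\bullet\Lambda_G^K$ divisible by the appropriate power of $2$) against the derived fact that $\sum_{N_{G'}(u)=K}S(u)$ is odd for all such $K$ would force $m=|K_0|$ to be strictly larger than $r+2$ after all — i.e.\ if $K_0$ exhausts the complement, the incidence constraints propagate and the minimal witness size jumps, contradicting $m=r+2$ being attainable with $|V\sm\supp(S)|=r+2$. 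Pinning down this propagation argument cleanly is the part that needs care; everything else is a direct consequence of \cref{prop:nontrivial} and the internal workings of \cref{lemma:exp_support}.
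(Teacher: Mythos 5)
Your first step is sound and matches the paper: by \cref{prop:nontrivial} the multiset $S$ is genuine, so there is a set $K\se V\sm\supp(S)$ with $|K|>1$ and $S\bullet\Lambda_G^{K}$ odd, and $r$-incidence then forces $|K|\gs r+2$, hence $|V\sm\supp(S)|\gs r+2$. The gap is precisely in the step from $r+2$ to $r+3$, which you yourself flag as the delicate point. Your plan is to derive a contradiction from the incidence constraints alone: assuming $V=\supp(S)\cup K_0$ with $|K_0|=r+2$, you hope that ``the incidence constraints propagate and the minimal witness size jumps.'' This cannot work. The configuration $|V\sm\supp(S)|=r+2$ is perfectly consistent with $r$-incidence and genuineness: the downward induction used inside \cref{lemma:exp_support} shows that in this case \emph{every} $K\se V\sm\supp(S)$ with $|K|>1$ has $\sum_{N_{G}(u)=K}S(u)$ odd, and the required divisibilities of the sums $S\bullet\Lambda_G^{K}$ are then automatically satisfied by the binomial counting $|\{K'\se V\sm\supp(S)\mid K\se K'\}|=2^{r+2-|K|}$. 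There is no contradiction to be extracted from incidence; the contradiction must come from the hypothesis that $G\star^r S$ is not realisable at level $r-1$, which your argument never invokes at this stage.

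The paper's actual mechanism for the extra vertex is constructive. Assuming $|V\sm\supp(S)|=r+2$, every neighbourhood class $\{u\in\supp(S)\mid N_G(u)=K\}$ with $|K|>1$ is nonempty (and has odd total multiplicity); one picks a single vertex in each class and decrements its multiplicity by $1$. The resulting multiset $S'$ is still $r$-incident and satisfies $G\star^r S=G\star^r S'$ (each relevant quantity $S\bullet\Lambda_G^{K_1}$ drops by a multiple of $2^{r}$ modulo the appropriate power of two), but $S'$ is no longer genuine, so \cref{prop:nontrivial} produces $S''$ with $G\star^{r-1}S''=G\star^r S$, contradicting the hypothesis. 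You would need to supply this construction (or an equivalent use of the non-reducibility hypothesis); the route you sketch does not close. The $r=1$ case you propose to treat separately is also left unproved, whereas the argument above is uniform in $r$.
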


The proof of \cref{lemma:sizeZ} involves similar techniques as the proof of \cref{lemma:exp_support} and is given in \cref{app:sizeZ}. \cref{lemma:exp_support,lemma:sizeZ} together give a simple bound involving only the order of the graph.

\begin{proposition} \label{prop:boundlevel}
    If $G\star^r S$ is valid and there is no $S'$ such that $G\star^{r-1} S' = G \star^r S$, then $n \gs 2^{r+2}$, where $n$ is the order of $G$.
\end{proposition}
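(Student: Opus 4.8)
The plan is to obtain \cref{prop:boundlevel} as an immediate consequence of \cref{prop:nontrivial}, \cref{lemma:exp_support} and \cref{lemma:sizeZ}, via a counting argument on the partition $V = \supp(S) \sqcup (V\sm\supp(S))$. Throughout I may assume $r \ge 2$, since the hypothesis ``there is no $S'$ with $G\star^{r-1}S' = G\star^r S$'' refers to level $r-1$ and is only meaningful in that regime. Starting from the two hypotheses of the proposition, namely that $G\star^r S$ is valid and that no such reducing $S'$ exists, \cref{prop:nontrivial} tells me that $S$ is a \emph{genuine} $r$-incident independent multiset; since moreover $r>1$, I can then apply \cref{lemma:exp_support} to conclude $|\supp(S)| \gs 2^{r+2}-r-3$, a lower bound on how many vertices the transformation must act on.

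Next, the very same two hypotheses feed directly into \cref{lemma:sizeZ}, yielding $|V\sm\supp(S)| \gs r+3$: enough vertices are also forced to lie outside the support. Adding the two inequalities and using that $\supp(S)$ and its complement partition $V$,
\[
n \;=\; |\supp(S)| + |V\sm\supp(S)| \;\gs\; (2^{r+2}-r-3) + (r+3) \;=\; 2^{r+2},
\]
which is exactly the claimed bound. The two ``$\pm(r+3)$'' correction terms cancel precisely, which is presumably the reason \cref{lemma:sizeZ} is stated with the bound $r+3$ rather than something weaker.

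I do not expect any genuine obstacle at the level of \cref{prop:boundlevel} itself: once the cited lemmas are in hand, it is a one-line addition. All the real work sits upstream — in \cref{lemma:exp_support}, whose proof extracts an exponential-in-$r$ family of realised neighbourhoods by an induction over the subsets $K \se K_0$, and in \cref{lemma:sizeZ} (deferred to \cref{app:sizeZ}), which argues analogously on the complement of $\supp(S)$. The only point requiring care is the bookkeeping of hypotheses: one must check that \cref{prop:nontrivial} indeed upgrades ``no reducing $S'$'' to ``$S$ is genuine $r$-incident independent'', so that the precise hypothesis of \cref{lemma:exp_support} (with $r>1$) is met, and that \cref{lemma:sizeZ} uses exactly the hypotheses assumed here.
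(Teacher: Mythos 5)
Your proof is correct and follows exactly the route the paper intends: the paper states \cref{prop:boundlevel} as the immediate sum of \cref{lemma:exp_support} (via \cref{prop:nontrivial}, giving $|\supp(S)|\gs 2^{r+2}-r-3$) and \cref{lemma:sizeZ} (giving $|V\sm\supp(S)|\gs r+3$), with the correction terms cancelling as you observe. Your bookkeeping of the hypotheses, including the reduction to $r\gs 2$, matches the intended argument.
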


Put differently, any $r$-local complementation on a graph of order at most $2^{r+2}-1$ can be implemented by $(r-1)$-local complementations:

\begin{corollary} \label{cor:LCr_LCr-1}
    If two graphs of order at most $2^{r+2}-1$ are LC$_r$-equivalent, then they are LC$_{r-1}$-equivalent.
\end{corollary}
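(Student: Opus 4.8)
The plan is to deduce this directly from \cref{prop:boundlevel} together with \cref{prop:LCr_lc}. I would first assume $r \ge 2$ (the case $r=1$ being vacuous, since LC$_0$-equivalence is not defined), and take two LC$_r$-equivalent graphs $G_1, G_2$ of order $n \le 2^{r+2}-1$. By \cref{prop:LCr_lc} there is a sequence $G_1 = H_0, H_1, \dots, H_k = G_2$ of generalised local complementations in which exactly one step, say $H_{i+1} = H_i \star^r S$, has level $r$, and every other step is a usual local complementation.

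The crucial point is that each graph $H_j$ in this sequence has the same order $n \le 2^{r+2}-1 < 2^{r+2}$. So the valid $r$-local complementation $H_i \star^r S$ is performed on a graph of order strictly less than $2^{r+2}$, and the contrapositive of \cref{prop:boundlevel} yields a multiset $S'$ with $H_i \star^{r-1} S' = H_i \star^r S = H_{i+1}$. Substituting this $(r-1)$-local complementation for the level-$r$ one, I obtain a sequence relating $G_1$ to $G_2$ that consists of a single $(r-1)$-local complementation and, otherwise, only usual local complementations.

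Finally I would merge all these steps into one LC$_{r-1}$-equivalence. Since $r-1 \ge 1$, a usual (level-$1$) local complementation is a special case of an $(r-1)$-local complementation: any level-$1$ complementation is valid, so applying the basic identity $G \star^s T = G \star^{s+1}(T+T)$ repeatedly rewrites each level-$1$ step as a (valid) level-$(r-1)$ step inducing the same transformation. Hence $G_1$ and $G_2$ are related by a sequence of $(r-1)$-local complementations, i.e.~they are LC$_{r-1}$-equivalent, as required.

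I do not expect a genuine obstacle once \cref{prop:boundlevel} is available; the only subtlety worth stating is that \cref{prop:boundlevel} must be applied to the intermediate graph $H_i$ carrying the level-$r$ complementation, not to $G_1$ or $G_2$ themselves — which is harmless precisely because every graph along the sequence provided by \cref{prop:LCr_lc} has order $n$.
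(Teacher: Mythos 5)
Your proof is correct and rests on the same key fact as the paper's implicit argument, namely the contrapositive of \cref{prop:boundlevel} applied to an intermediate graph of the sequence, which has the same order $n<2^{r+2}$ as $G_1$ and $G_2$. The detour through \cref{prop:LCr_lc} is harmless but unnecessary: the paper simply replaces \emph{every} $r$-local complementation in the sequence witnessing LC$_r$-equivalence by an $(r-1)$-local complementation, without first reducing to a single level-$r$ step.
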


In other words, two LC$_r$-equivalent but not LC$_{r-1}$-equivalent graphs are of order at least $2^{r+2}$. This implies the following strengthening of \cref{thm:LU_imply_LCr}.

\begin{corollary} \label{cor:LU_LCr}
    If two graphs of order at most $2^{r+3}-1$ are LU-equivalent, they are LC$_r$-equivalent. 
\end{corollary}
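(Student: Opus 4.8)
The plan is to combine \cref{cor:LCr_LCr-1} with \cref{thm:LU_imply_LCr} to push the level bound down as far as the order permits. Suppose $G_1$ and $G_2$ are LU-equivalent graphs of order $n \le 2^{r+3}-1$. By \cref{thm:LU_imply_LCr}, they are LC$_{\lfloor n/2\rfloor -1}$-equivalent, so certainly LC$_s$-equivalent for some level $s$; let $s$ be the \emph{smallest} level for which $G_1$ and $G_2$ are LC$_s$-equivalent. The goal is to show $s \le r$.

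Towards a contradiction, suppose $s \ge r+1$. Since $s$ is minimal, $G_1$ and $G_2$ are LC$_s$-equivalent but not LC$_{s-1}$-equivalent. \cref{cor:LCr_LCr-1} (applied with $r$ replaced by $s$) states that any two LC$_s$-equivalent graphs of order at most $2^{s+2}-1$ are already LC$_{s-1}$-equivalent; contrapositively, two graphs that are LC$_s$-equivalent but not LC$_{s-1}$-equivalent must have order at least $2^{s+2}$. Hence $n \ge 2^{s+2} \ge 2^{(r+1)+2} = 2^{r+3}$, contradicting the hypothesis $n \le 2^{r+3}-1$. Therefore $s \le r$, i.e.~$G_1$ and $G_2$ are LC$_r$-equivalent (using that LC$_s$-equivalence implies LC$_{r}$-equivalence for $s \le r$, as noted after \cref{prop:LCr_lc}).

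I would present this as a short direct argument rather than an induction, since the single application of \cref{cor:LCr_LCr-1} at the minimal level already does all the work; one could equivalently iterate \cref{cor:LCr_LCr-1} starting from the level $\lfloor n/2\rfloor - 1$ guaranteed by \cref{thm:LU_imply_LCr}, repeatedly lowering the level by one as long as the order stays below the relevant threshold, but extracting the minimal $s$ is cleaner. One minor point to state carefully: \cref{cor:LCr_LCr-1} is phrased for $r \ge 1$ (the level of a generalised local complementation is positive), and LC$_0$-equivalence is not defined, so the descent is only invoked at levels $s \ge 2$; but if $s = 1$ then trivially $s \le r$ since $r \ge 1$ can be assumed (and if $r=0$ the statement bounds the order by $2^3-1=7$, which is subsumed by the $8$-qubit result cited in the introduction, or one may simply take $r \ge 1$ as the intended range). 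The argument has no real obstacle — it is a clean combination of the two previously established results — so the only care needed is in handling these boundary conventions on the level index.
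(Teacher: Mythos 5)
Your proof is correct and uses exactly the same two ingredients as the paper (\cref{thm:LU_imply_LCr} to seed the level and \cref{cor:LCr_LCr-1} to descend); phrasing the descent as a minimal-level contradiction rather than an explicit induction is only a cosmetic difference. The boundary remarks about $s=1$ and $r\ge 1$ are fine and match the paper's implicit conventions.
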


\begin{proof}
    Suppose that $G_1$ and $G_2$ of order $n \ls 2^{r+3}-1$ are LU-equivalent. According to \cref{thm:LU_imply_LCr}, $G_1$ and $G_2$ are LC$_{\lfloor n/2 \rfloor-1}$-equivalent. If $\lfloor n/2 \rfloor-1 \ls r$ then $G_1$ and $G_2$ are trivially LC$_r$-equivalent. Otherwise, according to \cref{cor:LCr_LCr-1}, $G_1$ and $G_2$ are LC$_r$-equivalent by direct induction. 
\end{proof}

\cref{cor:LU_LCr} provides a logarithmic bound on the level of generalised local complementations to consider for LU-equivalence: if two graphs of order $n>7$ are LU-equivalent then they are LC$_{\lceil \log_2(\frac{n+1}8)\rceil}$-equivalent. This bound leads to a quasi-polynomial time algorithm for LU-equivalence, as described in the next section. Notice that in \cref{sec:19qubits}, we elaborate on the consequences of \cref{cor:LU_LCr} on the minimal order of graphs that are LU- but not LC-equivalent. 

\subsection{An algorithm to recognise LU-equivalent graph states} \label{subsec:algorithm_lu}

According to \cref{thm:algolcr}, we have an algorithm that recognises two LC$_r$-equivalent graphs of order $n$ in runtime $O(r n^{r+2.38} + n^{6,38})$. According to \cref{cor:LU_LCr}, $G_1$ and $G_2$ are LU-equivalent if and only if they are LC$_r$-equivalent, where $r=\log_2(n)+O(1)$. Thus, our algorithm that decides LC$_r$-equivalence translates directly to an algorithm that decides LU-equivalence.

\begin{theorem}
    There exists an algorithm that decides if two graphs are LU-equivalent with runtime $n^{\log_2(n)+O(1)}$, where $n$ is the order of the graphs.
\end{theorem}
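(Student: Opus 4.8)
The plan is to assemble the final complexity bound by combining the LC$_r$-equivalence algorithm of \cref{thm:algolcr} with the logarithmic level bound of \cref{cor:LU_LCr}. First I would observe that, by \cref{cor:LU_LCr}, two graphs $G_1$ and $G_2$ of order $n$ are LU-equivalent if and only if they are LC$_r$-equivalent for $r = \lceil \log_2(\tfrac{n+1}{8})\rceil$ when $n>7$ (and for $n\le 7$, LU-equivalence coincides with LC-equivalence by \cite{CABELLO20092219}, so one simply runs Bouchet's algorithm). In particular it suffices to run the algorithm of \cref{thm:algolcr} with this value of $r$, which is $\log_2(n)+O(1)$.

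Next I would substitute $r = \log_2(n)+O(1)$ into the runtime $O(r n^{r+2.38} + n^{6.38})$ from \cref{thm:algolcr}. The dominant term is $r\, n^{r+2.38}$; since $r = \log_2(n)+O(1)$, we have $n^{r+2.38} = n^{\log_2(n)+O(1)}$, and the factor $r = O(\log n)$ is absorbed into the $O(1)$ in the exponent (as $\log n = n^{o(1)}$). The term $n^{6.38}$ is likewise dominated once $\log_2(n)+2.38 > 6.38$, i.e.\ for $n$ larger than a constant, and for smaller $n$ the runtime is constant. Hence the overall runtime is $n^{\log_2(n)+O(1)}$.

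Finally I would note the one point requiring a small amount of care: \cref{thm:algolcr} (via \cref{prop:extendedBouchet} and \cref{lemma:codim2}) assumes the graphs involved are connected and contain a vertex of even degree. Connectivity can be handled by a standard preprocessing step (LU-equivalence decomposes over connected components, after matching them up), and the even-degree condition is automatically satisfied for the graphs $G_1^{\#}, G_2^{\#}$ whenever $\Omega$ is nontrivial since the "new vertices" have degree $2$, while the case $\dim(\Omega)=0$ reduces to plain LC-equivalence. I do not expect any of these to be a genuine obstacle — the theorem is essentially a corollary of the work already done, and the only "hard part" is the bookkeeping to confirm that the $O(\log n)$ prefactor and the additive $n^{6.38}$ term are both swallowed by the $n^{\log_2(n)+O(1)}$ expression.

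\begin{proof}
    For $n\le 7$, LU-equivalence coincides with LC-equivalence \cite{CABELLO20092219}, which can be decided in constant time (the order is bounded). For $n>7$, by \cref{cor:LU_LCr}, $G_1$ and $G_2$ are LU-equivalent if and only if they are LC$_r$-equivalent for $r=\lceil \log_2(\tfrac{n+1}8)\rceil = \log_2(n)+O(1)$. Running the algorithm of \cref{thm:algolcr} with this value of $r$ therefore decides LU-equivalence. Its runtime is $O(r n^{r+2.38}+n^{6.38})$; since $r=\log_2(n)+O(1)$ and $r=n^{o(1)}$, the first term is $n^{\log_2(n)+O(1)}$, and the second term is $n^{6.38} = n^{O(1)}$, which is dominated once $\log_2(n)+2.38>6.38$. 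Hence the total runtime is $n^{\log_2(n)+O(1)}$.
\end{proof}
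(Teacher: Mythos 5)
Your proposal is correct and follows essentially the same route as the paper: invoke \cref{cor:LU_LCr} to bound the level by $r=\log_2(n)+O(1)$ and plug this into the $O(rn^{r+2.38}+n^{6.38})$ runtime of \cref{thm:algolcr}. The extra remarks about small $n$, connectivity, and the even-degree hypothesis are reasonable bookkeeping that the paper leaves implicit, but they do not change the argument.
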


In comparison, Burchardt et al.~algorithm for LU-equivalence \cite{burchardt2024algorithmverifylocalequivalence} has two sources of exponential time complexity. The logarithmic upper bound on the level of generalised local complementation we introduce may mitigate one of these sources (making one parameter of the complexity quasi-polynomial), but does not affect a priori the second one, which is roughly speaking the number of connected components of an intersection graph related to the MLS cover. 

\section{LU- and LC-equivalence coincide for graph states up to 19 qubits} \label{sec:19qubits}

It is known that there exists a pair of 27-vertex graphs that are not LC-equivalent, but  LU-equivalent, more precisely they are LC$_2$-equivalent \cite{Ji07,Tsimakuridze17}. It is still  an open question whether this is a minimal example (in number of vertices). In other words, does a pair of graphs that are LU-equivalent but not LC-equivalent on 26 vertices or less exist? In theory, one could check every pair of graphs of order up to 26, but the rapid combinatorial explosion in the number of graphs as the number of vertices increases, makes it unfeasible in practice.

The best bound known so far\footnote{In \cite{burchardt2024algorithmverifylocalequivalence} it is proved that the number of LU- and LC-orbits of \textbf{unlabelled} graphs of order up to 11 is the same.} is that for graphs of order up to 8, LU=LC i.e.~LU- and LC-equivalence coincide  \cite{CABELLO20092219}. The results of \cref{subsec:bounds}  (see \cref{cor:LU_LCr}) already imply a substantial improvement on this bound: LU=LC for graphs of order up to 15. Furthermore, for graphs of order up to 31, LU=LC$_2$, i.e.~if two graphs of order up to 31 are LU-equivalent, they are LC$_2$-equivalent. Thus, asking whether LU=LC holds for graphs of order up to 26 is equivalent to asking whether LC$_2$=LC holds for graphs of order up to 26. One direction is to study when a 2-local complementation on an multiset $S$ can be implemented using only usual local complementations over vertices in the support of $S$. If this were to be the case for every graph of order up to 26, it would show that the 27-vertex counterexample is minimal in number of vertices. In the following we study the structure of 2-local complementation to prove that LU=LC holds for graph of order up to 19.

According to \cref{lemma:sizeZ}, if there are at most 4 vertices outside the support of some 2-incident independent multiset $S$, then a 2-local complementation on $S$ can be implemented by usual local complementations. In the peculiar case of 2-local complementation, we are able to use computer-assisted generation (see \cref{app:computer}) to extend the result. The code is available at \cite{codelulc19}.

\begin{restatable}{lemma}{computerassisted} \label{lemma:2lc6}
    Let $S$ be a $2$-incident independent multiset of a graph $G$. If $|V \sm \supp(S)|\ls 5$, or if $|V \sm \supp(S)|= 6$ and $|\supp(S)|\ls 20$, then a 2-local complementation on $S$ can be implemented by local complementations over a subset of $\supp(S)$. 
\end{restatable}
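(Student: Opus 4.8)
The plan is to reduce the statement to a \emph{finite} verification that can be discharged by computer, following the same structural route as the proof of \cref{lemma:exp_support,lemma:sizeZ}: the point is that whether a 2-local complementation on $S$ is implementable by local complementations over $\supp(S)$ depends only on the bipartite ``cut structure'' between $\supp(S)$ and $V\sm\supp(S)$, together with the multiplicity pattern on $\supp(S)$, and both of these live in a bounded-size object once $|V\sm\supp(S)|$ is bounded. First I would normalise: by \cref{prop:decomposition} we may assume $S$ is a set (the multiplicity-$2$-or-$3$ part is peeled off as a $1$-local complementation over a subset of $\supp(S)$, which is harmless), and by the $r=2$ case of the argument in the proof of \cref{prop:nontrivial} we may assume that no two vertices of $\supp(S)$ have the same neighbourhood inside $K\defeq V\sm\supp(S)$ — collapsing equal-neighbourhood classes to a single representative preserves both the graph $G\star^2 S$ and the property ``implementable over $\supp(S)$''. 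After this normalisation, $|\supp(S)|\le 2^{|K|}$, so when $|K|\le 5$ we automatically have $|\supp(S)|\le 32$, and when $|K|=6$ the hypothesis $|\supp(S)|\le 20$ is in force; in either regime the relevant combinatorial data is finite.

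Next I would isolate exactly \emph{what} data the question depends on. The edges toggled by $G\star^2 S$ all lie inside $K$ (more precisely, between vertices of $V_Z$-type, but here simply: the toggled edges have both endpoints outside $\supp(S)$ when $S$ is a set, by 2-incidence and \cref{def:r-LC}), and which edges of $K$ are toggled is determined by the function $\{x,y\}\mapsto |S\cap\Lambda_G^{\{x,y\}}|\bmod 4$. Meanwhile, a sequence of local complementations over a subset $T\se\supp(S)$ acts on the induced subgraph $G[\supp(S)\cup K]$; since the vertices of $\supp(S)$ form an independent set and each is adjacent only to vertices of $K$, the effect on the edges inside $K$ of a local complementation at $u\in\supp(S)$ is exactly to toggle $K_{N_G(u)\cap K}$. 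Hence ``implementable by local complementations over a subset of $\supp(S)$'' is equivalent to: the toggle-vector on $\binom{K}{2}$ induced by $S$ lies in the $\mathbb F_2$-span of the vectors $\{\chi_{\binom{N_G(u)\cap K}{2}} : u\in\supp(S)\}$ — a purely linear-algebraic condition over $\mathbb F_2$ in at most $\binom{6}{2}=15$ coordinates. So the lemma becomes: for every bipartite cut structure on $\supp(S)\cup K$ with $|K|\le 6$ and $|\supp(S)|$ within the stated bound, that is additionally 2-incident (a tri-orthogonality condition on the cut, cf.\ the remark after \cref{prop:decomposition}), the toggle-vector is in that span. This is a finite check, and I would run it by enumerating all subsets $\mathcal N\se 2^{K}\sm\{\emptyset\}$ of admissible neighbourhood-types (one representative vertex per chosen type, all multiplicity $1$), discarding those violating 2-incidence, and verifying the span condition for each; see \cref{app:computer}.

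The main obstacle — and the reason this needs a computer rather than a one-line argument — is that the naive enumeration is too large: there are $2^{2^6-1}=2^{63}$ subsets of neighbourhood-types when $|K|=6$, which is why the extra hypothesis $|\supp(S)|\le 20$ is imposed (it caps the number of chosen types, but $\binom{63}{20}$ is still astronomically large). The real work is therefore a smarter search: exploit the $GL(|K|,\mathbb F_2)$-symmetry acting on $K$ (which permutes neighbourhood-types and the toggle coordinates compatibly) to enumerate only orbit representatives, and use the self-inverse and additivity properties from the proposition after \cref{def:r-LC} to prune — in particular, if $S=S_1+S_2$ with both $S_i$ admissible then implementability of $S$ follows from that of $S_1$ and $S_2$, so one only needs to certify a generating family of ``irreducible'' 2-incident configurations. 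A further reduction: a single representative per type suffices because adding a second vertex of the same type contributes the same neighbourhood row and merely toggles nothing new (two equal rows cancel mod $2$ in the span and cancel mod $4$ in the incidence sum for pairs inside $K$), so duplicates are genuinely irrelevant and the set-with-distinct-types normalisation is lossless. With these symmetry and irreducibility reductions the search space collapses to something tractable; the remaining content is the correctness of that reduction, which is what I would write out carefully, deferring the exhaustive machine verification to \cref{app:computer} and the code repository \cite{codelulc19}.
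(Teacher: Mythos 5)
Your overall architecture is the same as the paper's: normalise $S$ to a set via \cref{prop:decomposition}, strip the configuration down to a twin-free bipartite graph between $\supp(S)$ and $K=V\sm\supp(S)$, observe that both the action of $\star^2 S$ and the action of a local complementation at a vertex of $\supp(S)$ are vectors in $\mathbb F_2^{\binom{|K|}{2}}$, and reduce implementability to a span-membership test checked by computer (this is precisely \cref{lemma:lifting} and \cref{app:computer}). However, your justification of the twin reduction is wrong. Two twins $u,v\in\supp(S)$ contribute $2$, not $0$, to $S\bullet\Lambda_G^{\{x,y\}}$ for every pair $\{x,y\}$ of their common neighbours, and $2=2^{r-1}\bmod 4$ for $r=2$, so a pair of twins genuinely toggles edges: its effect is exactly a $1$-local complementation at $u$. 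Consequently removing the pair changes $G\star^2S$ (it only preserves \emph{implementability over $\supp(S)$}, which is the statement the paper actually uses), and collapsing an equal-neighbourhood class of size $m$ to a single multiplicity-one representative shifts every incidence sum by $m-1$ and can destroy $2$-incidence. Your claim that duplicates ``toggle nothing new'' and ``cancel mod $4$'' is false as stated.

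The more serious gap is that the feasibility of the finite check — which is the entire content of the lemma — is left unresolved. You correctly observe that naive enumeration over subsets of neighbourhood-types costs $2^{63}$ for $|K|=6$, and then appeal to a $GL(6,\mathbb F_2)$ orbit enumeration together with a decomposition into ``irreducible'' $2$-incident configurations; neither is worked out, and both are substantial subprojects (orbit representatives of subsets of $2^K$ under $GL(6,2)$, plus a proof that every configuration decomposes into your irreducibles). The paper's key counting insight, which your proposal misses, is that $2$-incidence makes a twin-free configuration with no degree-$0$ or degree-$1$ vertices \emph{uniquely determined by its vertices of degree at least $4$}: the parity conditions $S\bullet\Lambda_G^{K'}=0\bmod 2$ for $|K'|\in\{2,3\}$ force which degree-$3$ and then degree-$2$ types must be present once the high-degree types are fixed. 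This gives $|\mathcal G_k|=2^{\binom k4+\cdots+\binom kk}$, i.e.\ $2^{22}\approx 4\times 10^6$ cases for $k=6$, so a direct brute force suffices with no symmetry reduction at all. Without this observation (or a fully worked-out substitute), your argument does not establish that the verification can actually be carried out, and the hypothesis $|\supp(S)|\le 20$ is not explained either: it is not there to make enumeration possible, but because the span-membership test genuinely \emph{fails} at $|S|=21$ (recovering the known $27$-vertex counterexample).
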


Likewise, according to \cref{lemma:exp_support} and \cref{prop:nontrivial}, if the support of some 2-incident independent multiset $S$ is of size at most 10, then a 2-local complementation on $S$ can be implemented by usual local complementations. To extend this result to 2-incident independent multisets whose supports is of size at most 12, we first study the case of twin-less sets (two distinct non-connected vertices $u$ and $v$ are twins if $N_G(u) = N_G(v)$).

\begin{restatable}{lemma}{lessthantwelve} \label{lemma:lessthan12}
    Let $S$ be a 2-incident independent set of a graph $G$ such that $S$ does not contain any twins and $|S| \ls 12$. Then, $G \star^2 S = G$.
\end{restatable}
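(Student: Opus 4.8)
The plan is to reduce the problem to a bounded-size combinatorial search, exploiting the twin-free hypothesis together with the $2$-incidence constraints to control the structure of $G$. Write $n = |S| \le 12$ and let $W = V \sm \supp(S)$; recall from \cref{prop:decomposition} that we may assume $S$ is a genuine (in the sense of \cref{subsec:bounds}) $2$-incident independent \emph{set}, since otherwise $G \star^2 S = G$ follows directly: if $S$ is not genuine then by \cref{prop:nontrivial} the transformation is already a $1$-local complementation, and since $S$ is independent a $1$-local complementation over $S$ only toggles edges inside $N_G(S)$ in a way that, after checking, leaves $G$ unchanged when no genuine witness exists. So the real content is the genuine case.

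First I would record the ``cut matrix'' structure: the bipartite adjacency matrix $M$ between $\supp(S)$ and $W$ is tri-orthogonal over $\mathbb F_2$ (as noted after \cref{prop:decomposition}), i.e.\ every pair and every triple of rows indexed by $W$ meet $\supp(S)$ in an even number of common positions. Because $S$ contains no twins, the $n$ columns of $M$ (the rows of the transpose, indexed by $\supp(S)$) are pairwise distinct \emph{and} no column equals the all-zero vector restricted to the relevant coordinates after accounting for edges inside $W$ — more precisely, twin-freeness forbids two vertices of $\supp(S)$ from having the same neighbourhood in all of $V$. The key structural step is to show that tri-orthogonality of the $W$-rows, combined with $n \le 12$ distinct $\supp(S)$-neighbourhood patterns, forces $|W|$ to be small: a tri-orthogonal binary code of length $|W|$ has at most $|W|+1$ cosets of pattern classes, and distinctness of the $n$ patterns then bounds $n$ against a function of $|W|$, so that $|W| \le 5$ whenever $n \le 12$. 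Once $|W| \le 5$, \cref{lemma:2lc6} (the case $|V \sm \supp(S)| \le 5$) applies directly and gives that the $2$-local complementation is implementable by local complementations on a subset of $\supp(S)$; a short additional argument shows that in the twin-free genuine setting this implementing sequence must in fact be trivial, i.e.\ $G \star^2 S = G$.

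The remaining work is to handle the boundary case $|W| = 6$, which by the counting above can only arise when $n = |S|$ is close to $12$; here I would invoke \cref{lemma:2lc6} in its second clause ($|V \sm \supp(S)| = 6$ and $|\supp(S)| \le 20$), which covers $n \le 12 < 20$, again reducing to local complementations inside $\supp(S)$, and then rule out a non-trivial such sequence using twin-freeness: a non-empty sequence of local complementations on vertices of $\supp(S)$ that realises $G \star^2 S$ would create or destroy an edge inside $N_G(S)$ or between $\supp(S)$ and $W$ in a manner incompatible with the cut matrix being unchanged on the $W$-side, which the tri-orthogonality pins down.

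The main obstacle I anticipate is the structural counting step — showing tri-orthogonality plus twin-freeness forces $|W| \le 6$ when $n \le 12$. The naive bound (an even-intersecting, i.e.\ doubly-even-like, family) is not immediately tight enough; one likely needs the full strength of triple-wise orthogonality and a careful case analysis on the rank of $M$ over $\mathbb F_2$, possibly splitting on whether the all-ones pattern occurs among the $W$-rows. If a clean bound resists proof, the fallback is to push more of the case analysis into the computer-assisted enumeration of \cref{lemma:2lc6}, enlarging the verified range of $|\supp(S)|$ for $|W| \in \{5,6,7\}$ until it subsumes all twin-free $S$ with $|S| \le 12$; the code at \cite{codelulc19} is already set up for exactly this kind of bounded search.
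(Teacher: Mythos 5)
There is a genuine gap, and it sits exactly where you flagged your ``main obstacle'': the structural counting step is false. Tri-orthogonality of the cut matrix plus twin-freeness plus $|S|\ls 12$ does \emph{not} bound $|V\sm\supp(S)|$, nor even $|\delta_G(S)|$ (the set of vertices outside $S$ adjacent to $S$). The inequality runs the wrong way: having only $12$ pairwise-distinct neighbourhood patterns in $S$ places no upper bound on the size of the ground set those patterns live in. Concretely, take $12$ vertices in $S$, each adjacent to a single, distinct vertex of $W$: the $2$-incidence conditions (even common-neighbour counts for all pairs and triples in $W$) hold vacuously, $S$ is twin-free, and $|\delta_G(S)|=12$. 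So the reduction to \cref{lemma:2lc6} via ``$|W|\ls 5$ or $6$'' cannot be carried out, and your fallback of enlarging the computer search to $|W|=7$ and beyond is hopeless: the enumeration in \cref{app:computer} already costs $2^{64}$ cases at $k=7$. A secondary gap: even where \cref{lemma:2lc6} does apply, it only yields implementability by local complementations over a subset of $\supp(S)$, whereas the lemma claims $G\star^2 S=G$; your ``short additional argument'' deriving triviality from twin-freeness is not supplied and is in fact the content of the computational verification (the appendix checks $G\star^2 S = G$ directly for twin-free $S$ without degree-$0$/$1$ vertices when $|\delta(S)|\ls 6$, resp.\ $|S|\ls 16$ at $k=6$).

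The paper's proof takes a different route that you should compare against: it is an induction on $|\delta_G(S)|$, with the base case $|\delta_G(S)|\ls 6$ established by the computer search. For the inductive step, one deletes an arbitrary vertex $w\in\delta(S)\sm\{u,v\}$ (where $uv$ is an edge supposedly toggled); the induction hypothesis forces $S$ to acquire twins in $G[V\sm\{w\}]$, i.e.\ pairs of vertices of $S$ whose neighbourhoods in $G$ differ exactly by $w$. Stripping twin pairs and degree-$1$ vertices leaves a $2$-incident set of size $\ls 10$, which must be empty by \cref{lemma:exp_support} (a genuine $2$-incident set has at least $11$ vertices). Running this over all choices of $w$ produces a vertex $x\in S$ with $\delta(S)\sm\{u,v\}\se N_G(x)$ and then a cascade of vertices $x_w$, $x_{w,w'}$ with prescribed neighbourhoods, forcing $|S|\gs 1+5+\binom{5}{2}=16>12$, a contradiction. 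The twin-freeness hypothesis is thus used dynamically (it fails after deletion, and that failure is the engine of the induction), not as a static constraint on the cut matrix.
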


The proof of \cref{lemma:lessthan12} is an induction over the number of vertices connected to $S$ and is given in \cref{app:lessthan12}.

According to \cref{prop:decomposition}, any 2-local complementation can be decomposed into 1- and 2-local complementations over sets. Furthermore, one can check that if an 2-incident independent set $S$ contains two twins $u$ and $v$, then a 2-local complementation over $S$ has the same effect as a 2-local complementation over $S\sm\{u,v\}$ followed by a local complementation over $u$. Thus, the action of a 2-local complementation can be described by a 2-local complementation over a twin-less set followed by usual local complementations. Then, \cref{lemma:lessthan12} can be applied on the twin-less set to yield the following result:

\begin{lemma} \label{lemma:2lc12}
    Let $S$ be a 2-incident independent multiset of a graph $G$ such that $|supp(S)|\ls 12$. Then, a 2-local complementation over $S$ can be implemented by local complementations over a subset of $supp(S)$.
\end{lemma}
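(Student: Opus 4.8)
The plan is to reduce the claim, via \cref{prop:decomposition}, to the twin-less case already settled by \cref{lemma:lessthan12}, and to handle twins by an induction on the size of the set. By \cref{prop:decomposition}, $G\star^2 S = G\star^2 S_0\star^1 S_1$, where $S_1$ is the set of vertices whose multiplicity in $S$ (taken modulo $4$) is $2$ or $3$, and $S_0 = \supp(S+S_1+S_1)$. Both $S_0$ and $S_1$ are \emph{sets}; both are contained in $\supp(S)$ (since $S_1\se\supp(S)$, hence $\supp(S+S_1+S_1)\se\supp(S)$); $S_0$ is $2$-incident in $G$; and each is independent, being a subset of the independent set $\supp(S)$. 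The $1$-local complementation over $S_1$ is, by definition, a product of ordinary local complementations over vertices of $S_1\se\supp(S)$, so it suffices to implement $G\star^2 S_0$ by local complementations over a subset of $S_0\se\supp(S)$. I would prove, by induction on $|T|$, that for any graph $G$ and any $2$-incident independent set $T$ of $G$ with $|T|\ls 12$, the transformation $G\star^2 T$ can be implemented by local complementations over a subset of $T$; applying this to $T=S_0$ (note $|S_0|\ls|\supp(S)|\ls 12$) then finishes the lemma.

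For the base of the induction, if $T$ contains no twins, then \cref{lemma:lessthan12} applies directly and gives $G\star^2 T = G$, so nothing has to be done. For the inductive step, suppose $T$ contains a pair of twins $u,v$ (that is, $u\not\sim_G v$ and $N_G(u)=N_G(v)$). The key identity I would establish is
\begin{equation*}
G\star^2 T = \bigl(G\star^2 (T\sm\{u,v\})\bigr)\star u.
\end{equation*}
Granting it, $T\sm\{u,v\}$ is a $2$-incident independent set of $G$ with $|T\sm\{u,v\}| = |T|-2 < |T|$ and $|T\sm\{u,v\}|\ls 12$, so by the induction hypothesis $G\star^2 (T\sm\{u,v\})$ is implemented by local complementations over a subset of $T\sm\{u,v\}\se T$; appending $\star u$ with $u\in T$ implements $G\star^2 T$ over a subset of $T$, completing the induction.

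To prove the identity I would use only the elementary properties of generalised local complementation from \cite{claudet2024local}. Write $T' = T\sm\{u,v\}$. First, I would check that $T'$ is still $2$-incident in $G$: for a set $K$ of size $2$ or $3$ disjoint from $T'$, either $K$ is also disjoint from $\{u,v\}$, and then $T\bullet\Lambda_G^K = T'\bullet\Lambda_G^K + |\{u,v\}\cap\Lambda_G^K|$ with the last term in $\{0,2\}$, so $T'\bullet\Lambda_G^K$ is even because $T\bullet\Lambda_G^K$ is; or $K$ meets $\{u,v\}$, say $u\in K$, and then $\Lambda_G^K\se N_G(u)$, which by independence of $T$ is disjoint from $T\sm\{u\}\supseteq T'$, so $T'\bullet\Lambda_G^K = 0$. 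The same computation, run on every pair $\{u,x\}$ (resp.\ $\{v,x\}$), shows that the $2$-local complementation over $T'$ toggles no edge incident to $u$, and none incident to $v$; hence in $G' := G\star^2 T'$ the vertices $u,v$ are still non-adjacent and satisfy $N_{G'}(u) = N_G(u) = N_G(v) = N_{G'}(v)$, so they remain twins. Next, since $u,v$ are non-adjacent twins, the set $\{u,v\}$ and the multiset $\{u,u\}$ have, on every pair of vertices, the same number of common neighbours ($2$ if the pair has a common neighbour among $\{u,v\}$, and $0$ otherwise, using $u\notin N_{G'}(u)$ and $v\notin N_{G'}(u)$ for pairs meeting $\{u,v\}$ and the twin property for the remaining pairs); so they induce the same $2$-local complementation, and $G'\star^2\{u,v\} = G'\star^2\{u,u\} = G'\star^1\{u\} = G'\star u$. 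Finally, $\{u,v\}$ is $2$-incident and independent in $G$, $T = T'+\{u,v\}$ is independent, and $T'$ is $2$-incident in $G$, so by additivity of generalised local complementations $G\star^2 T = G'\star^2\{u,v\} = G'\star u$, which is the desired identity.

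The step I expect to be the main obstacle is the middle one: making precise that carrying out only the $T'$-part of the $2$-local complementation leaves the twin pair $\{u,v\}$ intact and keeps $T'$ itself $2$-incident. Both facts rest on the same observation — that because $T$ is independent, any common neighbourhood $\Lambda_G^K$ able to contain $u$ or $v$ must avoid the rest of $T$ — so once this is isolated the argument is short. Everything else (that \cref{prop:decomposition} produces sets inside $\supp(S)$, and the accounting of which vertices the ordinary local complementations act on) is routine bookkeeping.
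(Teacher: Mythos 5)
Your proposal is correct and follows essentially the same route as the paper: reduce to sets via \cref{prop:decomposition}, strip twin pairs one at a time using the identity $G\star^2 T = (G\star^2(T\sm\{u,v\}))\star u$, and finish by applying \cref{lemma:lessthan12} to the remaining twin-less set. The only difference is that the paper merely asserts the twin-removal identity (``one can check that\dots''), whereas you prove it in full; your verification of that identity is sound.
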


According to \cref{lemma:2lc6} and \cref{lemma:2lc12}, if a 2-incident independent multiset $S$ satisfies $|supp(S)|\ls 12$ or $|V\sm\supp(S)|\ls 5$, or alternatively if $|supp(S)|\ls 20$ and $|V\sm\supp(S)|= 6$, then a 2-local complementation over $S$ can be implemented by usual local complementations. Thus, for graphs of order up to 19, any 2-local complementation can be implemented by usual local complementations, implying LU=LC. In other words, a 2-local complementation that cannot be implemented by usual local complementation is possible only on a graph of order at least 20. We summarise our findings in the following proposition:

\begin{proposition} 
    LU- and LC-equivalence coincide for graph states up to 19 qubits.
\end{proposition}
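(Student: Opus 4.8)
The plan is to assemble the final proposition by combining the structural reduction from earlier sections with the case analysis on the size of the support and its complement. First I would recall, via \cref{cor:LU_LCr}, that two LU-equivalent graphs of order up to $31$ are necessarily LC$_2$-equivalent, so it suffices to show that for graphs of order at most $19$ every $2$-local complementation is in fact implementable by usual local complementations; this is exactly the statement that LC$_2$ collapses to LC in that regime, hence LU$=$LC. So the whole proof reduces to: given a $2$-incident independent multiset $S$ on a graph $G$ with $|V|\le 19$, show $G\star^2 S$ is LC-equivalent to $G$.

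Next I would dispatch the argument by comparing $|\supp(S)|$ against $|V\sm\supp(S)|$. Write $n=|V|\le 19$, $s=|\supp(S)|$ and $t=|V\sm\supp(S)|=n-s$. If $t\le 5$, \cref{lemma:2lc6} already gives that $G\star^2 S$ can be realised by local complementations over a subset of $\supp(S)$. If instead $t\ge 6$, then $s=n-t\le 19-6=13$; but a cleaner bound is available: when $t\ge 7$ we get $s\le 12$ and \cref{lemma:2lc12} finishes it, and the one remaining case $t=6$ forces $s\le 13$. Here I would note that $|\supp(S)|\le 13$ still leaves a gap between \cref{lemma:2lc12} (which covers $s\le 12$) and what we need, so I would instead invoke the second clause of \cref{lemma:2lc6}, which handles $t=6$ as long as $s\le 20$ — and $s\le 13\le 20$, so this case is covered. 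Thus in every case ($t\le 5$, or $t=6$, or $t\ge 7$) a $2$-local complementation over $S$ is implementable by usual local complementations.

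Finally I would tie the loop: every $2$-local complementation on a graph of order at most $19$ preserves the LC-equivalence class, so by \cref{prop:decomposition} (which expresses an arbitrary $2$-local complementation as a composition of $1$- and $2$-local complementations over \emph{sets}) and the observation that a $2$-incident independent \emph{set} containing twins can be reduced to a twin-less one plus a local complementation, LC$_2$-equivalence coincides with LC-equivalence for graphs of order up to $19$. Combined with the reduction LU$\Rightarrow$LC$_2$ for order at most $31$ from \cref{cor:LU_LCr}, this yields that LU- and LC-equivalence coincide for graph states on up to $19$ qubits.

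I expect the main subtlety — more a bookkeeping obstacle than a mathematical one — to be making sure the case split on $t=|V\sm\supp(S)|$ is genuinely exhaustive and that each case is matched to the \emph{right} lemma: the boundary case $t=6$ is the delicate one, since it is exactly where \cref{lemma:2lc6}'s weaker hypothesis ($|\supp(S)|\le 20$) is needed and where a naive appeal to \cref{lemma:2lc12} would fail. One should also double-check the arithmetic $n\le 19$, $t\ge 7 \Rightarrow s\le 12$, and $t=6 \Rightarrow s\le 13\le 20$, and confirm that $t\le 5$ is covered unconditionally by \cref{lemma:sizeZ}/\cref{lemma:2lc6}. Beyond that, the result is a direct corollary of the machinery already developed.
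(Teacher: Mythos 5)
Your proposal is correct and follows essentially the same route as the paper: reduce LU to LC$_2$ via \cref{cor:LU_LCr}, then show every $2$-local complementation on a graph of order at most $19$ is implementable by usual local complementations through exactly the same case split on $t=|V\sm\supp(S)|$ (with $t\le 5$ and the boundary case $t=6$, $s\le 13\le 20$ handled by \cref{lemma:2lc6}, and $t\ge 7$, $s\le 12$ handled by \cref{lemma:2lc12}). The arithmetic and the matching of cases to lemmas are all as in the paper.
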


\section{Conclusion}

In this paper, we have introduced a quasi-polynomial runtime algorithm to recognise the LU-equivalence of graph states -- and more generally stabilizer states -- based on a recent generalisation of local complementation that captures the LU-equivalence of graph states. A key component of our approach is a new, nontrivial logarithmic bound on the level of the generalised local complementation.

We have also extended the well-known Bouchet algorithm to recognise the LC-equivalence of graph states, by allowing the addition of arbitrary linear constraints. This extension enables finer control over local complementations (or local Clifford operators) in the LC-equivalence problem, and we believe it will have broader applications.

We have also made significant progress in understanding the structure of quantum entanglement by demonstrating that LC-equivalence and LU-equivalence coincide for graph states with up to 19 qubits, extending the previously known bound of 8 qubits. 
The smallest known example of a pair of graph states that are LU- but not LC-equivalent consists of 27 qubits. A natural next step is to determine whether LU- and LC-equivalence continue to coincide for graph states up to 26 qubits or, alternatively, to find a counterexample in the range of 20 to 26 qubits. As shown in this work, leveraging generalised local complementation should facilitate this exploration.

\section*{Acknowledgements}
The authors want to thank Adam Burchardt for fruitful discussions. This work is supported by the the \emph{Plan France 2030} through the PEPR integrated project EPiQ ANR-22-PETQ-0007 and the HQI platform ANR-22-PNCQ-0002; and by the European projects Quantum Flagship NEASQC, European High-Performance Computing HPCQS and MSCA Staff Exchanges Qcomical HORIZON-MSCA-2023-SE-01. The project is also supported by the \emph{Maison du Quantique} MaQuEst. 
%MAison du Quantique

%%%%%%%%%%%%%%%%%%%%%%%%%%%%%%%%%%%%%%%%%%%%%%%%%%%%

\bibliography{reflipics}

\appendix

\section{Alternative proof of Proposition \ref{prop:Bouchet}}
\label{app:Bouchet}

Recall that $G$ and $G'$ are said LC-equivalent if there exists a sequence of (possibly repeating) vertices $a_1, \cdots, a_m$ such that $G' = G \star a_1 \star \cdots \star a_m$.
According to Bouchet, $G$ and $G'$ are LC-equivalent if and only if there exist $A,B,C,D\subseteq V$ such that
\begin{itemize}
\item[(i)] $\forall u,v \in V$,\\
$|B\cap N_G(u)\cap N_{G'}(v)| + |A\cap N_G(u)\cap \{v\}| +  |D\cap \{u\}\cap N_{G'}(v)| + |C\cap \{u\}\cap \{v\}| = 0\bmod 2$
\item[(ii)]
$A\cap D~\Delta~ B\cap C = V$
\end{itemize}

We begin by proving by induction the "only if" part of the statement. First, notice that equations (i) and (ii) are satisfied when $G = G'$ with $A = D = V$ and $B = C =  \emptyset$. Indeed, let $u,v \in V$:
\begin{align*}
    & |B\cap N_G(u)\cap N_{G}(v)| + |A\cap N_G(u)\cap \{v\}| +  |D\cap \{u\}\cap N_{G}(v)| + |C\cap \{u\}\cap \{v\}|\\
    &= |N_G(u)\cap \{v\}| +  |\{u\}\cap N_{G}(v)|\\
    &= 0 \bmod 2
\end{align*}
Furthermore, $A\cap D~\Delta~ B\cap C = V$.

Now, suppose that $G$ and $G'$ are equivalent and there exist $A,B,C,D\subseteq V$  satisfying (i) and (ii). Applying a local complementation on some vertex $w$ in $G$ results in the graph $G \star w$, which is also LC-equivalent to $G'$. Define 
\begin{itemize}
    \item $A' = A \Delta ( \{w\}\cap C)$
    \item $B' = B \Delta ( \{w\}\cap D)$
    \item $C' = C \Delta ( N_{G}(w)\cap A)$
    \item $D' = D \Delta ( N_{G}(w)\cap B)$
\end{itemize}
Let us show that $A',B',C',D'\subseteq V$  satisfy (i) and (ii) for the graphs $G \star u$ and $G'$.

\noindent{\bf Proof that $A',B',C',D'$  satisfy (i).}

Let $u,v \in V$.
\begin{align*}
    & |B'\cap N_{G \star w}(u)\cap N_{G'}(v)| + |A'\cap N_{G \star w}(u)\cap \{v\}| +  |D'\cap \{u\}\cap N_{G'}(v)| + |C'\cap \{u\}\cap \{v\}|\\
    &=|(B \Delta ( \{w\}\cap D))\cap N_{G \star w}(u)\cap N_{G'}(v)| + |(A \Delta ( \{w\}\cap C))\cap N_{G \star w}(u)\cap \{v\}|\\
    &~~+ |(D \Delta ( N_{G}(w)\cap B))\cap \{u\}\cap N_{G'}(v)| + |(C \Delta ( N_{G}(w)\cap A))\cap \{u\}\cap \{v\}|
\end{align*}
If $u\not\sim_{G} w$, then $N_{G \star w}(u) = N_{G}(u)$ and $N_G(w)\cup \{u\} = N_G(u)\cup \{w\} = \emptyset$:
\begin{align*}
    &= |B\cap N_G(u)\cap N_{G'}(v)| + |A\cap N_G(u)\cap \{v\}| +  |D\cap \{u\}\cap N_{G'}(v)| + |C\cap \{u\}\cap \{v\}|\\
    &~~+ | \{w\}\cap D\cap N_G(u)\cap N_{G'}(v)| + |\{w\}\cap C\cap N_G(u)\cap \{v\}|\\
    &~~+ |N_{G}(w)\cap B\cap \{u\}\cap N_{G'}(v)| + |N_{G}(w)\cap A \cap \{u\}\cap \{v\}| \bmod 2\\
    &= 0 \bmod 2
\end{align*}
If $u\sim_{G} w$, then $N_{G \star w}(u) = N_{G}(u) \Delta N_{G}(w) \Delta \{u\}$, $N_G(w)\cup \{u\} = \{u\}$ and $N_G(u)\cup \{w\} = \{w\}$, thus :
\begin{align*}
    &=|(B \Delta ( \{w\}\cap D))\cap (N_{G}(u) \Delta N_{G}(w) \Delta \{u\})\cap N_{G'}(v)|\\
    &~~+ |(A \Delta ( \{w\}\cap C))\cap (N_{G}(u) \Delta N_{G}(w) \Delta \{u\})\cap \{v\}|\\
    &~~+ |(D \Delta ( N_{G}(w)\cap B))\cap \{u\}\cap N_{G'}(v)| + |(C \Delta ( N_{G}(w)\cap A))\cap \{u\}\cap \{v\}|\\
    &= |B\cap N_G(u)\cap N_{G'}(v)| + |A\cap N_G(u)\cap \{v\}| +  |D\cap \{u\}\cap N_{G'}(v)| + |C\cap \{u\}\cap \{v\}|\\
    &~~+|B\cap N_G(w)\cap N_{G'}(v)| + |A\cap N_G(w)\cap \{v\}| +  |D\cap \{w\}\cap N_{G'}(v)| + |C\cap \{w\}\cap \{v\}|\\
    &~~+|B\cap\{u\}\cap N_{G'}(v)|+|A\cap\{u\}\cap\{v\}|+|B\cap\{u\}\cap N_{G'}(v)|+|A\cap\{u\}\cap \{v\}|\\
    &~~+ |\{w\} \cap D \cap N_G(w) \cap N_{G'}(v)| + |\{w\} \cap C \cap N_G(w) \cap \{v\}|\\
    &~~+ |\{w\} \cap D \cap \{u\} \cap N_{G'}(v)| + |\{w\} \cap C \cap \{u\}\cap \{v\}|\\
    &= 0 \bmod 2
\end{align*}

\noindent{\bf Proof that $A',B',C',D'$  satisfy (ii).}
\begin{align*}
    & (A'\cap D')\Delta(B'\cap C')\\
    &= \left((A \Delta ( \{w\}\cap C))\cap (D \Delta ( N_{G}(w)\cap B))\right)\Delta\left((B \Delta ( \{w\}\cap D))\cap (C \Delta ( N_{G}(w)\cap A))\right)\\
    &= (A\cap D)\Delta(A\cap N_{G}(w)\cap B)\Delta(D\cap \{w\}\cap C)\Delta(B\cap C)\Delta(B \cap  N_{G}(w)\cap A)\Delta(C \cap\{w\}\cap D)\\
    &=(A\cap D)\Delta(B\cap C) = V
\end{align*}

Now we prove the "if" part of the statement. The proof is very similar to a proof in \cite{claudet2024local} regarding the relation between local complementation and local Clifford operators. Let $G$ and $G'$ be two graphs defined on the same vertex set $V$ along with $A, B, C, D$ satisfying (i) and (ii). Condition (ii) implies that for some vertex $u \in V$, 6 cases can occur:
\begin{enumerate}
    \item $u \in A \cap \overline B \cap \overline C \cap D$; %\Ncom{$I$}
    \item $u \in A \cap B \cap \overline C \cap D$; %\Ncom{$X(\pi/2)$}
    \item $u \in A \cap \overline B \cap C \cap D$; %\Ncom{$Z(\pi/2)$}
    \item $u \in \overline A \cap B \cap C \cap \overline D$; %\Ncom{$H$}
    \item $u \in A \cap B \cap C \cap \overline D$; %\Ncom{$X(\pi/2) H = H Z(\pi/2) = Z(\pi/2)X(\pi/2)$}
    \item $u \in \overline A \cap B \cap C \cap D$. %\Ncom{$Z(\pi/2) H = H X(\pi/2) = X(\pi/2)Z(\pi/2)$}
\end{enumerate}
We call $V_1$ (resp. $V_2$, $V_3$, $V_4$, $V_5$, $V_6$) the set of vertices in case 1 (resp. 2, 3, 4, 5, 6). Notice that $V = V_1$ implies $G = G'$, indeed condition (ii) implies that for any $u,v \in V$, $|N_G(u)\cap \{v\}| +  |\{u\}\cap N_{G'}(v)| = 0 \bmod 2$ i.e.~$u \sim_G v \Leftrightarrow u \sim_{G'} v$. Furthermore, applying a local complementation on a vertex $w$ of $G$ changes the sets $A$, $B$, $C$, $D$, thus it changes in which case a vertex is. The changes are given in the following table (the case in which unwritten vertices are remain unchanged). 

\begin{center}
    \begin{tabular}{|c|c|}
    \hline
    \multicolumn{2}{|c|} {Case of $w$ in}\\
    $~~G~~$& $G\star w$\\
    \hline
    1&2\\
    \hline
    2&1\\
    \hline
    3&6\\
    \hline
    4&5\\
    \hline
    5&4\\
    \hline
    6&3\\
    \hline
    \end{tabular}\qquad\begin{tabular}{|c|c|}
    \hline
    \multicolumn{2}{|c|}{Case of $u {\in}N_G(w)$ in}\\
    $~~~G~~~$& $G\star w$\\
    \hline
    1&3\\
    \hline
    2&5\\
    \hline
    3&1\\
    \hline
    4&6\\
    \hline
    5&2\\
    \hline
    6&4\\
    \hline
    \end{tabular}
    \qquad \begin{tabular}{|c|c|}
    \hline
    \multicolumn{2}{|c|}{Case of $w_1$ (or $w_2$) in}\\
    $~~G~~$&$G\wedge w_1 w_2$\\
    \hline
    1&4\\
    \hline
    2&6\\
    \hline
    3&5\\
    \hline
    4&1\\
    \hline
    5&3\\
    \hline
    6&2\\
    \hline
    \end{tabular}
\end{center}

The table indicates that if $G$ and $G'$ are LC-equivalent and $A, B, C, D$ satisfy (i) and (ii), then, for $G \wedge w_1 w_2$ and $G'$, $A', B', C', D'$ satisfy (i) and (ii) where:
\begin{itemize}
    \item $A' = (A \sm \{w_1,w_2\}) \cup (C \cap \{w_1,w_2\})$
    \item $B' = (B \sm \{w_1,w_2\}) \cup (D \cap \{w_1,w_2\})$
    \item $C' = (C \sm \{w_1,w_2\}) \cup (A \cap \{w_1,w_2\})$
    \item $D' = (D \sm \{w_1,w_2\}) \cup (B \cap \{w_1,w_2\})$
\end{itemize}

Let us design an algorithm that produces a sequence of (possibly repeating) vertices $s = (a_1, \cdots, a_m)$ such that $G' = G \star a_1 \star \cdots \star a_m$. Initialise $G_0 = G$, $s_0 = [~]$ an empty sequence of vertices and $A_0 = A$, $B_0 = B$, $C_0 = C$, $D_0 = D$.

\begin{enumerate}
    \item If there is a vertex $u$ in case 2 or 6: let $s_0 \leftarrow s_0 + [u]$, $G_0 \leftarrow G_0 \star u$, $A_0 \leftarrow A_0 \Delta ( \{u\}\cap C_0)$, $B_0 \leftarrow B_0 \Delta ( \{u\}\cap D_0)$, $C_0 \leftarrow C_0 \Delta ( N_{G_0}(u)\cap A_0)$, $D_0 \leftarrow D_0 \Delta ( N_{G_0}(u)\cap B_0)$. Repeat until there is no vertex in case 2 or 6 left. 
    \item If there is a vertex $u$ in case 4 or 5: let $v \in N_{G_0}(u)$ such that $v$ is also in case 4 or 5. 
    Let $s_0 \leftarrow s_0 + [u, v, u]$, $G_0 \leftarrow G_0 \wedge u v$, $A_0 \leftarrow (A_0 \sm \{u,v\}) \cup (C_0 \cap \{u,v\})$, $B_0 \leftarrow (B_0 \sm \{u,v\}) \cup (D_0 \cap \{u,v\})$, $C_0 \leftarrow (C_0 \sm \{u,v\}) \cup (A_0 \cap \{u,v\})$, $D_0 \leftarrow (D_0 \sm \{u,v\}) \cup (B_0 \cap \{u,v\})$. Then go to step 1.
\end{enumerate}

\noindent{\bf Correctness.~} The evolution of $A_0$, $B_0$, $C_0$ and $D_0$ at each iteration of the algorithm ensures that (i) and (ii) are satisfied for $G \star s_0$ and $G'$. In step 2, if there is a vertex u in case 4 or 5, let us show that there exists $v \in N_{G_0}(u)$ such that $v$ is also in case 4 or 5. Notice that in step 2, no vertex is in case 2 or 6. Suppose by contradiction that every vertex in $N_{G_0}(u)$ is in case 1 or 3, i.e.~for every $v \in N_{G_0}(u)$, $v \in A \cap \overline B  \cap D$. Then $|B_0\cap N_{G_0}(u)\cap N_{G'}(u)| + |A_0\cap N_{G_0}(u)\cap \{u\}| +  |D_0\cap \{u\}\cap N_{G'}(u)| + |C_0\cap \{u\}\cap \{u\}| = |C_0 \cap \{u\}| = 1 \bmod 2$, contradicting (ii). At the end of the algorithm, every vertex is in case 1 or 3. Actually, every vertex is in case 1. Suppose by contradiction there is a vertex u in case 3. Then $|B_0\cap N_{G_0}(u)\cap N_{G'}(u)| + |A_0\cap N_{G_0}(u)\cap \{u\}| +  |D_0\cap \{u\}\cap N_{G'}(u)| + |C_0\cap \{u\}\cap \{u\}|  = |C_0 \cap \{u\}| = 1 \bmod 2$, contradicting (ii). Thus, at the end of the algorithm, $G' = G \star s_0$.

\noindent{\bf Termination.} The number of vertices in case 1 or 3 strictly increases at each iteration of the algorithm.

\section{Interpretation of the constraints in terms of local Clifford operators} \label{app:interpretationClifford}

There is a one-to-one correspondence between the solutions to  equations $(i)$ and $(ii)$ and the local Clifford operators (up to Pauli operators) that maps $\ket G$ to $\ket {G'}$. In particular if $A,B,C,D$ satisfy equations $(i)$ and $(ii)$, then $\ket{G'} = e^{i\theta}\bigotimes_{v\in V}U_v \ket{G}$ where for any $v \in V$, $U_v$ is equal,  up to a Pauli operator, to:

\begin{table}[h]
    \centerline{\begin{tabular}{clccl}
     $I$ & if  $v\in \overline{B}\cap \overline{C}$&$\qquad\qquad$& $H$ & if $v\in \overline{A}\cap \overline{D}$\\
      $Z(\pi/2)$ &if $v\in \overline{B}\cap C$&& $Z(\pi/2)H$ & if $v\in \overline{A}\cap {D}$\\
      $X(\pi/2)$ & if $v\in B \cap \overline{C}$&&$X(\pi/2)H$& if $v\in {A}\cap \overline{D}$
    \end{tabular}}
    \caption{Corresponding Clifford operators\label{table:encoding}.}
\end{table}

More details on the LC-equivalence of graphs and the corresponding Clifford operators can be found in \cite{VdnEfficientLC,Hein06}. From a graph state point of view, \cref{prop:extendedBouchet} provides an efficient algorithm to decide whether two graph states are LC-equivalent under some constraints on the Clifford operators. Notice that such constraints should be expressible as a linear equation through the correspondence given in \cref{table:encoding}. We give below a non-exhaustive family of constraints expressible as linear equations (in the following, $k$ denotes an integer).

\begin{itemize}
    \item $v \notin B$:~ $U_v$ is $Z(k\pi/2)$ up to Pauli;
    \item $v \notin C$:~ $U_v$ is $X(k\pi/2)$ up to Pauli;
    \item $v \notin A$:~ $U_v$ is $Z(k\pi/2) H$ up to Pauli;
    \item $v \notin D$:~ $U_v$ is $X(k\pi/2) H$ up to Pauli;
    \item $v \notin \overline{B}\cap \overline{C}$:~ $U_v$ is a Pauli;
    \item $v \notin \overline{A}\cap \overline{D}$:~ $U_v$ is $H$ up to a Pauli;
    \item $v \in A$ iff $v \in D$:~ $U_v$ is I, $X(\pi/2)$, $Z(\pi/2)$ or $H$ up to Pauli, i.e.~$U_v^2$ is a Pauli;
    \item $v \in A$ iff $v \in B$:~ $U_v$ is $X(\pi/2)$ or $X(k\pi/2) H$ up to Pauli;
    \item $v \in A$ iff $w \in A$, $v \in B$ iff $w \in B$, $v \in C$ iff $w \in C$, $v \in D$ iff $w \in D$:~ $U_v=U_w$ up to Pauli.
\end{itemize}

\section{Proof of Lemma \ref{lemma:standardform}} \label{app:standardform}

\standardform*

\begin{proof}
    To prove the proposition, we introduce an algorithm that transforms the input graphs $G_1, G_2$ into graphs in standard form with respect to the same MLS cover by means of local complementations. The notation $\wedge $ refers to the pivoting operation: $G\wedge uv \defeq G \star u \star v = G \star v \star u$. The dimension of a minimal local set refers to the logarithm in base 2 of its number of generators, in particular a minimal local set of dimension 2 induces vertices of type $\bot$, and minimal local sets can be of dimension either 1 or 2 (see \cite{claudet2024local}). The action of the local complementation and the pivoting on the type of the vertices is given in the following table (the types of the unwritten vertices remain unchanged). 
    \begin{center}
    \begin{tabular}{|c|c|}
    \hline
    \multicolumn{2}{|c|} {Type of $u$ in}\\
    $~~G~~$& $G\star u$\\
    \hline
    X&X\\
    \hline
    Y&Z\\
    \hline
    Z&Y\\
    \hline
    $\bot$&$\bot$\\
    \hline
    \end{tabular}\qquad\begin{tabular}{|c|c|}
    \hline
    \multicolumn{2}{|c|}{Type of $v {\in}N_G(u)$ in}\\
    $~~~G~~~$& $G\star u$\\
    \hline
    X&Y\\
    \hline
    Y&X\\
    \hline
    Z&Z\\
    \hline
    $\bot$&$\bot$\\
    \hline
    \end{tabular}
    \qquad \begin{tabular}{|c|c|}
    \hline
    \multicolumn{2}{|c|}{Type of $u$ (or $v$) in}\\
    $~~G~~$&$G\wedge uv$\\
    \hline
    X&Z\\
    \hline
    Y&Y\\
    \hline
    Z&X\\
    \hline
    $\bot$&$\bot$\\
    \hline
    \end{tabular}
    \end{center}

    The algorithm reads as follows:

    \begin{enumerate}
        \item Compute an MLS cover $\mathcal M$ of $G_1$ \cite{claudet2024covering}. If $\mathcal M$ is not an MLS cover of $G_2$, then $G_1$ and $G_2$ are not LU-equivalent.
        \item If there is an XX-edge (i.e.~an edge $uv$ s.t. both $u$ and $v$ are of type X with respect to $\mathcal M$) in $G_1$ or $G_2$: apply a pivoting on it.\\
        Repeat until there is no XX-edge left.
        \item If there is a XY-edge in $G_1$ or $G_2$: apply a local complementation on the vertex of type X, then go to step 2.
        \item If there is a vertex of type Y in $G_1$ or $G_2$: apply a local complementation on it, then go to step 2.
        \item If there is an X$\bot$-edge in $G_1$ or $G_2$: apply a pivoting on it.\\
        Repeat until there is no X$\bot$-edge left.

        \item If there is an XZ-edge $uv$ in $G_1$ or $G_2$ such that $v\prec u$: apply a pivoting on $u v$.\\
        Repeat until for every XZ-edge $uv$ in $G_1$ or $G_2$, $u\prec v$.
 
        \item If there is a vertex $u$ of type X in $G_1$ (resp. $G_2$) such that $\{u\} \cup N_{G_1}(u)$ (resp.  $\{u\} \cup  N_{G_2}(u)$) is not a minimal local set of dimension 1: find a minimal local set $M$ contained in $\{u\} \cup N_{G_1}(u)$ (resp.  $\{u\} \cup  N_{G_2}(u)$) and check that $M$ is a minimal local set of same dimension in both graphs (if not, they are not LU-equivalent). If this is the case, add $M$ to $\mathcal M$ then go to step 5.

        \item For every vertex $u$ of type X in $G_1$ (resp. $G_2$), add $\{u\} \cup N_{G_1}(u)$ (resp.  $\{u\} \cup N_{G_2}(u)$) to $\mathcal M$.
    \end{enumerate}

    \noindent{\bf Correctness.~} When step 2 is completed, there is no XX-edge. Step 3 transforms the neighbours of type Y into vertices of type Z. No vertex of type Y is created as there is no XX-edge before the local complementation. When step 3 is completed, there is no XX-edge nor any XY-edge. Step 4 transforms the vertex of type Y into a vertex of type Z. No vertex of type Y is created as there is no XY-edge before the local complementation. When step 4 is completed, there is no vertex of type Y nor any XX-edge. In step 5, applying a pivoting on an X$\bot$-edge transforms the vertex of type X into a vertex of type Z. No XX-edge is created, as the vertex of type X has no neighbour of type X before the pivoting. When step 5 is completed, there is no vertex of type Y and each neighbour of a vertex of type X is of type Z. In step 6, applying a pivoting on an XZ-edge permutes the type of the two vertices, and preserves the fact that each neighbour of a vertex of type X is of type Z. In step 7, adding a minimal local set to  $\mathcal M$ may only change the type of some vertices to $\bot$. When step 7 is completed, for every vertex $u$ of type X in $G_1$ (resp. $G_2$), $\{u\} \cup N_{G_1}(u)$ (resp.  $\{u\} \cup N_{G_2}(u)$) is a minimal local set. Thus, in step 8, adding those minimal local set leave the types invariant. When step 8 is completed, $G_1$ and $G_2$ are in standard form with respect to $\mathcal M$.
    
    \noindent{\bf Termination.} The quantity $2|V^{G_1}_Y|+|V^{G_1}_X| + 2|V^{G_2}_Y|+|V^{G_2}_X|$, where $|V^{G_i}_X|$ (resp. $|V^{G_i}_Y|$) denotes the number of vertices of type X (resp. Y) with respect to $\mathcal M$ in $G_i$, strictly decreases at steps 2 to 5, which guarantees to reach step 6. At step 6, $V^{G_i}_X$ is updated as follows: exactly one vertex $u$ is removed from the set and is replaced by a vertex $v$ such that $v\prec u$, which guarantees the termination of step 6. Each time a minimal local set is added to $\mathcal M$ in step 7, at least one vertex not of type $\bot$ becomes of type $\bot$. Indeed, without loss of generality, let $M$ be a minimal local set in $\{u\} \cup N_{G_i}(u)$, assuming $\{u\} \cup N_{G_i}(u)$ is not a minimal local set of dimension 1 generated by $\{u\}$. If $M$ is of dimension 2, every vertex of $M$ becomes of type $\bot$ when adding $M$ to $\mathcal M$. Else, $M$ is generated by a set containing at least one vertex of type Z, which becomes of type $\bot$ when adding $M$ to $\mathcal M$.

    \noindent{\bf Complexity.} The time-complexity of the algorithm is given by the time-complexity of step 1, as it is asymptotically the most computationally expensive step. The MLS cover can be computed in runtime $O(n^{6.38})$, where $n$ is the order of the graph. It should be noted that giving the type of a vertex with respect to some minimal local set can be done in time-complexity $O(n^3)$. Indeed, in the case of a minimal local $L$ set of dimension 1, finding $D$ such that $L = D \cup Odd_G(D)$ reduces to finding the kernel of some matrix with coefficients in $\mathbb F_2$ (see details in \cite{claudet2024covering}), which can be done using Gaussian elimination.
\end{proof}

\begin{remark}
    In the main algorithm described in \cref{subsec:algorithm_lcr}, the number of vertices of type $Z$ is directly linked to the runtime of the algorithm. Thus, it is preferable that the number of vertices of type Z is low. %It should be noted that right after step 1, it is possible to reduce the number of vertices of type Z by adding a linear number of minimal local sets to $\mathcal M$. Indeed, following \cite{claudet2024local}, if there exists a set $K$ of vertices of type Z of size more than $\lfloor n/2 \rfloor +1$ where $n$ is the order of the graphs, then a minimal local set $L$ within $K$ can be found, and adding $L$ to $\mathcal M$ transforms at least one vertex of type $Z$ into a vertex of type $\bot$. Repeating the operation leads to graphs where the number of vertices of type $Z$ is at most $\lfloor n/2 \rfloor$. %Furthermore, notice than the remaining of the algorithm above may only decrease the number of vertices of type $Z$.
    It should be noted that at any point of the algorithm, following \cite{claudet2024local}, if there exists a set $K$ of vertices of type Z of size more than $\lfloor n/2 \rfloor +1$, then a minimal local set $L$ within $K$ can be found, and adding $L$ to $\mathcal M$ transforms at least one vertex of type $Z$ into a vertex of type $\bot$. Repeating the operation leads to a graph where the number of vertices of type $Z$ is at most $\lfloor n/2 \rfloor$. The algorithm can then be restarted from step 2. Notice that the number of vertices of type $\bot$ never decreases, thus this procedure will be performed less than $n$ times. Adding random minimal local sets to $\mathcal M$ is another way of trying to reduce the number of vertices of type Z.
\end{remark}

\section{Proof of Lemma \ref{lemma:new_graphs}} \label{app:new_graphs}

\newgraphs*

\begin{proof}
    Suppose $G_1$ and $G_2$ LC$_r$-equivalent. According to \cref{lemma:standardform_LCr_lc}, $G_1$ and $G_2$ are related by a single $r$-local complementation over a multiset $S$ whose support lies in $V_X$, along with a sequence of local complementations on the vertices of $V \sm (V_X \cup V_Z)$. We note $G_2 = G_1 \star^r S \star u_1 \star \cdots u_k$. The edges toggled by an $r$-local complementation over $S$ are described by an element $\omega \in \Omega$. Let us decompose $\omega$ as a linear combination of basis vectors of $\Omega$: $\omega = \omega_1 + \cdots + \omega_t$, where each $\omega_i \in \mathcal B$. In $G^{\#}_1$ and $G^{\#}_2$, let $V_\omega = \bigcup_{i \in [1,t]} \mathcal P^{\omega_i}$.
    Then, $G^{\#}_2 = G^{\#}_1  \star^1 V_\omega \star u_1 \star \cdots u_k$. Note that the 1-local complementation over the set $V_\omega$ corresponds to the composition of local complementations on each element of $V_\omega$. Thus, $G_1$ is mapped to $G_2$ by a sequence of local complementations that satisfy the additional constraints.

    Conversely, suppose there exists a sequence of (possibly repeating) vertices $a_1, \cdots, a_m$ such that $G^{\#}_2 = G^{\#}_1 \star a_1 \star \cdots \star a_m$ satisfying the additional constraints. As local complementations on new vertices commute with each other and with local complementations on vertices of $V \sm (V_X \cup V_Z)$, one can take apart the vertices of the sequence among the new vertices, so there exists a set $V_0$ of new vertices and vertices $u_i$ in $V \sm (V_X \cup V_Z)$ such that $G^{\#}_2 = G^{\#}_1  \star^1 V_0 \star u_1 \star \cdots u_k$. The additional constraints imply that $V_0$ is an union of sets of vertices corresponding respectively to some elements $\omega_i \in \mathcal B$. Let $\omega \in \Omega$ be the sum of these elements. By construction, there exists a multiset $S$ in the original graphs whose action is described by $\omega$, implying $G_2 = G_1 \star^r S \star u_1 \star \cdots u_k$. Thus, $G_1$ and $G_2$ are LC$_r$-equivalent.
\end{proof}

\section{Proof of Lemma \ref{lemma:sizeZ}} \label{app:sizeZ}

\sizeZ*

\begin{proof}

    According to \cref{prop:nontrivial}, $S$ is genuine. Let $K_\text{max}$ be one of the biggest subset (by inclusion) of $V \sm \supp(S)$ such that $\sum_{N_{G}(u)=K_\text{max}}S(u)$ is odd: then $S \bullet\Lambda_G^{K_\text{max}}$ is odd. If $|V \sm \supp(S)| \ls r+1$, then  $|K_{max}|\ls r+1$, contradicting the $r$-incidence of $S$. Thus $|V\sm\supp(S)| \gs |K_{max}| \gs r+2$. 

    Now, suppose $|V\sm\supp(S)| = r+2$, i.e.~$K_{max}=V\sm\supp(S)$. Let us prove that for any $K \se V \sm \supp(S)$ s.t. $|K|>1$, $\sum_{N_{G}(u)=K}S(u)$ is odd, by induction over the size of $K$. First, notice that $\sum_{N_{G}(u)=V \sm \supp(S)}S(u)$ is odd.
    Then, let $K_0 \subsetneq V \sm \supp(S)$ s.t. $|K_0|>1$.
    \begin{align*}
        &S \bullet\Lambda_G^{K_0} =\!\!\!\! \sum_{K_0\se K\se V \sm \supp(S)}\sum_{N_{G}(u)=K}S(u)
        = \sum_{N_{G}(u)=K_0}S(u) + \sum_{K_0\subsetneq K\se V \sm \supp(S)}\sum_{N_{G}(u)=K}S(u)\\
        &= \sum_{N_{G}(u)=K_0}S(u) + |\{K\se V \sm \supp(S)~|~K_0 \subsetneq K\}| \bmod 2 \text{~~by hypothesis of induction}\\
        &= \sum_{N_{G}(u)=K_0}S(u) + 1 \bmod 2 
    \end{align*}
    Thus, $\sum_{N_{G}(u)=K_0}S(u)$ is odd, as $S \bullet\Lambda_G^{K_0}$ is even by $r$-incidence of $S$. As a consequence, for any $K \se V \sm \supp(S)$ s.t. $|K|>1$, there exists at least one vertex $u\in \supp(S)$ s.t. $N_{G}(u)=K$. 

    Let $S'$ be the multiset obtained from $S$ by choosing, for every $K \se V \sm \supp(S)$ s.t. $|K|>1$, a single vertex $u \in \supp(S)$ s.t. $N_{G}(u)=K$, and setting $S'(u)=S(u)-1$ (the multiplicity of other vertices remain unchanged). Let us prove that $S'$ is $r$-incident and $G \star^r S = G\star^{r} S'$. First, $S'$ is $r$-incident. Indeed, let an integer $k\in [0,r)$, let $K_1\subseteq V\setminus \supp(S')$ be a set of size $k+2$, and let $k'=k-|K_1\cap \supp(S)|$. $S\bullet \Lambda_G^{K_1}$ is a multiple of $2^{r-k'-\delta(k')}$ by $r$-incidence of $S$, so is $S'\bullet \Lambda_G^{K_1}$, as $S'\bullet \Lambda_G^{K_1} = S\bullet \Lambda_G^{K_1} -|\{K \se V \sm \supp(S)~|~K_1\sm \supp(S)\se K\}| = S\bullet \Lambda_G^{K_1} -2^{r-k'}$. Then, if $u$ or $v$ is in $\supp(S)$, $u\sim_{G\star^r S} v ~\Leftrightarrow~ u\sim_{G\star^{r} S'} v ~\Leftrightarrow~ u\sim_{G} v$. If $u,v \in V \sm \supp(S)$:
    \begin{align*}
        u\sim_{G\star^r S} v &~\Leftrightarrow~\left(u\sim_{G} v ~~\oplus~~ S \bullet\Lambda_G^{u,v} = 2^{r-1}\bmod 2^{r}\right)\\
        &~\Leftrightarrow~\left(u\sim_{G} v ~~\oplus~~ S'\bullet \Lambda_G^{u,v} +2^{r} = 2^{r-1}\bmod 2^{r}\right)\\
        &~\Leftrightarrow~\left(u\sim_{G} v ~~\oplus~~ S'\bullet \Lambda_G^{u,v} = 2^{r-1}\bmod 2^{r}\right)
        ~\Leftrightarrow~ u\sim_{G\star^{r}S'} v
    \end{align*}
    Thus, $G\star^r S = G\star^r S'$. Notice also that $S'$ is not genuine. Thus, by \cref{prop:nontrivial} there exists an $S''$ such that $G\star^{r-1} S'' = G \star^r S' = G \star^r S$.
\end{proof}

\section{Computer-assisted study of 2-local complementation}
\label{app:computer}

We use the following lemma to drastically decrease the size of the space to explore when studying 2-local complementation.

\begin{lemma} \label{lemma:lifting}
    Let $S$ be a 2-incident independent multiset of a graph $G=(V,E)$ and suppose that there exists no set $A \se \supp(S)$ such that $G \star^2 S = G \star^1 A$. Then there exists a graph $G'=(V',E')$ bipartite with respect to a bipartition $S', V' \sm S'$ of the vertices such that:
    \begin{itemize}
        \item $S'$ is 2-incident;
        \item $S'$ contains no twins;
        \item $S'$ contains no vertex of degree 0 or 1;
        \item $|S'| \ls |\supp(S)|$;
        \item $|V' \sm S'|\ls |V \sm \supp(S)|$;
        \item there exists no set $A \se S'$ such that $G' \star^2 S' = G' \star^1 A$.
    \end{itemize}
\end{lemma}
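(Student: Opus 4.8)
The plan is to construct the graph $G'$ from $G$ by a sequence of reductions, each of which preserves the key non-implementability property (namely, that no local complementation over a subset of the support of the current multiset reproduces the effect of the $2$-local complementation) while decreasing or maintaining the relevant size parameters. First I would pass from $G$ to the induced subgraph on $\supp(S) \cup \bigl(V \sm \supp(S)\bigr)$—this is already the whole graph, so instead the first genuine step is to make the graph bipartite: since $S$ is independent, the only edges that matter for $G \star^2 S$ are those incident to a vertex of $\supp(S)$; and since a $2$-local complementation over $S$ only toggles edges between two vertices of $V \sm \supp(S)$ (as such an edge is toggled according to the parity condition on $S \bullet \Lambda_G^{u,v}$), I would argue that we may delete all edges inside $V \sm \supp(S)$ without changing whether the transformation is implementable by local complementations over $\supp(S)$. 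More precisely, if $G$ is $G$ with the edges inside $V \sm \supp(S)$ removed, then $G \star^2 S$ and $G$ differ on exactly the same set of "new" edges, and the effect of any $\star^1 A$ with $A \subseteq \supp(S)$ on edges inside $V \sm \supp(S)$ is identical in $G$ and in the reduced graph (local complementation at a vertex $a \in \supp(S)$ toggles edges inside $N_G(a) \subseteq V \sm \supp(S)$, and this neighbourhood is unchanged). One needs to check the $2$-incidence of $S$ survives deletion of these edges, which is immediate since $\Lambda^K_G$ for $K \subseteq V \sm \supp(S)$ only records which vertices of $\supp(S)$ are adjacent to all of $K$, and these adjacencies are untouched.

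Next I would remove redundant vertices in three successive cleaning steps, each phrased as: "if the property held before, it still holds after". For \emph{twins} in $S'$: if $u, v \in \supp(S)$ are non-adjacent with $N_{G}(u) = N_{G}(v)$, then (using the proposition on multiset multiplicities and the decomposition results) I can argue that a $2$-local complementation over $S$ can be rewritten as one over the multiset with $u,v$ removed composed with local complementations—following exactly the observation made just before \cref{lemma:2lc12} in the main text, that a pair of twins in an independent set can be peeled off at the cost of an ordinary local complementation over one of them. Deleting one twin (or merging multiplicities appropriately) reduces $|\supp(S)|$ by one; if the resulting transformation \emph{were} implementable by local complementations over the new support, then reinstating the peeled local complementation would make the original implementable too—contradiction. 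For \emph{degree-$0$ vertices} in $S'$: an isolated vertex of $\supp(S)$ contributes nothing to any $\Lambda^K_G$ and nothing to any edge toggle, so it may simply be discarded. For \emph{degree-$1$ vertices} in $S'$: if $a \in \supp(S)$ has unique neighbour $w$, then local complementation at $a$ does nothing (its neighbourhood is a single vertex), and $a$ contributes to $\Lambda^K_G$ only for $K = \{w\}$ or $K \subseteq \{w\}$, which are sets of size $\le 1$ and hence irrelevant both to $2$-incidence and to the edge-toggling rule (which only involves $\Lambda^{u,v}_G$ for pairs $u,v$); thus $a$ plays no role and can be deleted. Each such deletion only decreases $|S'|$ and leaves $|V' \sm S'|$ non-increasing (possibly we also lose now-isolated vertices on the other side, which only helps).

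The remaining bookkeeping is to confirm all six bullet points simultaneously: bipartiteness is guaranteed by the edge-deletion step; twin-freeness, and the absence of degree-$0$ and degree-$1$ vertices in $S'$, are enforced by iterating the cleaning steps until no such vertex remains (the process terminates since $|S'|$ strictly decreases at each cleaning); the size bounds $|S'| \le |\supp(S)|$ and $|V' \sm S'| \le |V \sm \supp(S)|$ hold because every operation either deletes vertices or keeps the vertex set fixed; and $S'$ being $2$-incident together with the non-implementability property are exactly the invariants we tracked through every step. I would present this as a loop: start with $(G, S)$, apply edge-deletion once, then repeatedly apply whichever cleaning step is available; when none applies, output the result as $(G', S')$.

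The main obstacle I anticipate is the twin-removal step, and more subtly the interaction between the cleaning steps and the hypothesis "there exists no set $A \se \supp(S)$ such that $G \star^2 S = G \star^1 A$". The delicate point is that peeling a twin pair introduces an ordinary local complementation, so the invariant cannot be stated naively as "not implementable by local complementations over the support"—it must be stated so that it is stable under conjugation/composition by local complementations over support vertices. I would handle this by formulating the invariant as: the $2$-local complementation induced by $S'$ on $G'$ is not equal to \emph{any} composition of local complementations over vertices of $S'$; then the twin-peeling identity (a $2$-LC over $S'$ equals a $2$-LC over $S' \sm \{u,v\}$ followed by a $1$-LC over $u$) transfers this invariant cleanly, because if the reduced transformation were a product of support local complementations, composing back in the single local complementation over $u$ would exhibit the original as such a product as well. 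The other genuinely technical verification is that $2$-incidence is preserved by the degree-$1$ deletion and by edge-deletion inside $V \sm \supp(S)$; both are routine once one notes that the quantities $S \bullet \Lambda_G^K$ for $|K| \ge 2$ with $K \subseteq V \sm \supp(S)$ are completely determined by the bipartite adjacency between $\supp(S)$ and $V \sm \supp(S)$, which none of these operations alters.
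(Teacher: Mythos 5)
Your proposal is correct and follows essentially the same route as the paper: a constructive reduction that deletes the edges inside $V \sm \supp(S)$ to obtain a bipartite graph, then removes degree-$0$/degree-$1$ vertices and twin pairs, tracking $2$-incidence and non-implementability by local complementations over the support as invariants through each step. The only organisational difference is that the paper front-loads the multiset-to-set reduction by invoking \cref{prop:decomposition} once at the start (writing $G \star^2 S = G \star^2 S_2 \star^1 S_1$ with $S_2$ a set, so that implementability of $G\star^2 S_2$ by $\star^1 A$ yields implementability of $G\star^2 S$ by $\star^1(A\Delta S_1)$), whereas you fold the multiplicity handling into the twin-peeling step; both work, but you should make that reduction explicit since the lemma requires $S'$ to be a genuine set forming one side of the bipartition.
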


\begin{proof}
    The proof is constructive, in the sense that we construct $G'$ and $S'$ from $G$ and $S$. According to \cref{prop:decomposition}, there exists set $S_1, S_2 \se \supp(S)$ such that $S_2$ is 2-incident and $G \star^2 S = G \star^2 S_2 \star^1 S_1$. Also, the existence of a set $A \se S_2$ such that $G \star^2 S_2 = G \star^1 A$ implies $G \star^2 S = G \star^1 (A \Delta S_1)$. Let $G' = G$ and $S' = S_2$, then apply the following operations:
    \begin{enumerate}
        \item Remove the edges between vertices of $V' \sm S'$;
        %\item Remove the vertices of $V' \sm S'$ of degree 0;
        \item Remove each vertex of $S'$ of degree 0 or 1;
        \item If there exists a pair of twins $u,v$ in $S'$, remove $u$ and $v$. Repeat until $S'$ contains no twins.
    \end{enumerate}
    Notice that each operation preserves the 2-incidence of $S'$ and that  $G' \star^2 S' = G' \star^1 A$ for no set $A \se S'$.
\end{proof}

Given an integer $k$, let $\mathcal G_k$ be the class of graphs that are bipartite with respect to a bipartition $S, V \sm S$ of the vertices such that:
\begin{itemize}
    \item $S$ is 2-incident;
    \item $S$ contains no twins;
    \item $S$ contains no vertex of degree 0 or 1;
    \item $|V \sm S| = k$.
\end{itemize}
It is easy to generate each graph of $\mathcal G_k$, although the number of elements in $\mathcal G_k$ grows double exponentially fast with $k$. $S$ can be defined as a list of words in $\{0,1\}^k$ of weight at least 2. More precisely each vertex of $S$ is uniquely associated with a set of $V \sm S$ of size at least 2, its neighbourhood. Furthermore, the 2-incidence of $S$ implies that $S$ is uniquely determined by the set of its vertices of degree at least 4. Indeed, starting from a set containing only vertices of degree at least 4, the conditions of the form "$S\bullet \Lambda_G^K = 0 \bmod 2^{r-k-\delta(k)}$" translate into a procedure to find which vertices of degree 3 then 2 need to be added to the set so that $S$ is 2-incident. This proves that there is a bijection between $\mathcal G_k$ and lists of words in $\{0,1\}^k$ of weight at least 4. Thus, the size of $\mathcal G_k$ is exactly given by the formula $$ |\mathcal G_k| = 2^{\binom{k}{4} + \binom{k}{5} + \cdots + \binom{k}{k}}$$
For $k=1,2,3,4,5$ and $6$, the size of $\mathcal G_k$ is respectively $1,1,1,2,2^6 = 64$, and $2^{22} \sim 4 \times 10^6$ which is suitable for computation. But, even for $k$ as low as seven, the size of $\mathcal G_7$ is $2^{64} \sim 2 \times 10^{19}$.

For every $k$ from 1 to 6, we generate each graph $G$ of $\mathcal G_k$, along with the set $S$ defined above. Notice that a local complementation over a vertex $u$ of $S$ toggles the connectivity of some pairs of vertices of $V \sm S$, here the pairs where each end is a neighbour of $u$. In other words, to each vertex $u$ of $S$ we associate a vector in $\mathbb F_2^{\binom{k}{2}}$ corresponding to the action of the local complementation of $u$ on the graph. The set of the vectors corresponding to each vertex of $S$ spans a $\mathbb F_2$-vector space $\mathcal L$ describing the action of local complementation over vertices of $S$ on the graph. Using Gaussian elimination, we are able to compute a basis of $\mathcal L$. Furthermore, we compute the vector $x$ in $\mathbb F_2^{\binom{k}{2}}$ corresponding to the action of a 2-local complementation over $S$ on the graph. Checking if the action of a 2-local complementation over $S$ can be implemented by local complementations on vertices of $S$ amounts to checking if $x$ belongs to the vector space $\mathcal L$, which can be done efficiently using Gaussian elimination.

For $k \in [1,3]$, the set $S$ corresponding to the only graph $G \in \mathcal G_k$ is empty, hence a 2-local complementation over $S$ leaves $G$ invariant, i.e.~$G \star^2 S = G$. For $k = 4$, $S$ is either empty of contains 11 vertices; in both case it is easy to check that $G \star^2 S = G$. For $k = 5$, the computation shows that for each graph $G \in \mathcal G_5$, $G \star^2 S = G$. Now, fix $k=6$. For each graph $G \in \mathcal G_6$ such that the corresponding set $S$ contains at most 16 vertices, $G \star^2 S = G$. For each graph $G \in \mathcal G_6$ such that the corresponding set $S$ contains at most 20 vertices, a 2-local complementation over the corresponding set $S$ can be implemented by local complementations over vertices of $S$, i.e.~there exists a set $A \se S$ such that $G \star^2 S = G \star^1 A$. The property does not hold if $S$ is of size 21, for instance we recover the well-known 27-vertex counterexample to the LU=LC conjecture described in \cite{Tsimakuridze17}.

According to \cref{lemma:lifting}, this is enough to prove \cref{lemma:2lc6}:

\computerassisted*

\section{Proof of Lemma \ref{lemma:lessthan12}} \label{app:lessthan12}

\lessthantwelve*

\begin{proof}
    Given a set $S$, we define the strict neighbourhood of $S$ in the graph $G$ as the set of vertices not in $S$ that are connected to at least one vertex in $S$: $\delta_G(S) = \{v \in V \sm S ~|~ N_G(v) \cap S \neq \emptyset\}$. Let us prove, by induction on $|\delta_G(S)|$, that for any 2-incident independent set $S$ such that $S$ does not contain any twins and $|S| \ls 12$, $G \star^2 S = G$.

    First, by brute force, we check that the property is true whenever $|\delta(S)| \ls 6$ (see details in \cref{app:computer}, code available at \cite{codelulc19}). 

    Now, suppose the property true for an integer $t \gs 6$. Let $S$ be a 2-incident independent set such that $S$ does not contain any twins, $|S| \ls 12$, and $|\delta_G(S)| = t+1$. Let us prove that $G \star^{2} S = G$. Suppose by contradiction that $G \star^{2} S \neq G$. Then, there exists $u,v \in \delta(S)$ such that the edge between $u$ and $v$ is toggled by the 2-local complementation over $S$. 
    
    Let $w \in \delta(S) \sm \{u,v\}$, and let $G' = G[V \sm \{w\}]$, i.e.~$G'$ is the graph obtained from $G$ by removing the vertex $w$. $S$ is still a 2-incident independent set in $G'$ (but now it may contain twins). Notice that $\delta_{G'}(S) = \delta_{G}(S) -1$. The case where $S$ contains no twins contradicts the induction hypothesis. Suppose that $S$ does contain two twins, say $x$ and $y$. Then, their neighbourhood in the original graph $G$ differs only by the vertex $w$ (thus there can only be pairs of twins, e.g. no triplets). Furthermore, $S \sm \{x,y\}$ is 2-incident and $G' \star^2 S = (G' \star^2 S \sm \{x,y\}) \star^1 \{x\}$. Let $S'$ be the set obtained from $S$ by removing every pair of twins and every vertex of degree 1. $S'$ is 2-incident and $|S'|\ls 10$. A non-empty 2-incident independent set without twins or vertex of degree 1 is genuine, and thus by \cref{lemma:exp_support} has at least 11 vertices. Thus, $S' = \emptyset$.

    $w$ was chosen arbitrarily in $w \in \delta(S) \sm \{u,v\}$. Thus, for any $w\in \delta(S) \sm \{u,v\}$, the set obtained from $S$ by removing every pair of twins and every vertex of degree 1 in the graph $G[V \sm \{w\}]$, is empty. 

    Let us prove that $S$ contains a vertex $x$ such that $\delta(S) \sm \{u,v\} \se N_G(x)$. Suppose by contradiction that this is not the case and consider a vertex $y \in S$ such that $|N_G(y) \cap (\delta(S) \sm \{u,v\})|$ is maximum. Note that $|N_G(y) \cap (\delta(S) \sm \{u,v\})| \gs 2$, else  $G \star^2 S = G$, as $S$ is 2-incident. By hypothesis there is a vertex $w \in \delta(S) \sm \{u,v\}$ such that $w \notin N_G(y)$. In $G[V \sm \{w\}]$, $y$ is not of degree 1 so it has a twin $z$, implying that $N_G(z) = N_G(y) \cup \{w\}$. This is a contradiction with the fact that $|N_G(y) \cap (\delta(S) \sm \{u,v\})|$ is maximum.

    As $|\delta_G(S)| \gs 7$, $|N_G(x) \cap (\delta(S) \sm \{u,v\})| \gs 5$. For any $w\in \delta(S) \sm \{u,v\}$, in $G[V \sm \{w\}]$ each vertex of degree at least 2 has a twin. Thus, in $G$, for any $w\in \delta(S) \sm \{u,v\}$, there is a vertex $x_w \in S$ such that $N_G(x_w) = N_G(x) \sm \{w\}$. Additionally, for any such vertex $x_w$, for any $w'\in \delta(S) \sm \{u,v,w\}$, there is a vertex $x_{w,w'} \in S$ such that $N_G(x_{w,w'}) = N_G(x_w) \sm \{w\}$. Thus, $S$ contains at least $1+5+\binom{5}{2} = 16$ vertices, contradicting the hypothesis that $|S|\ls 12$.
\end{proof}

\end{document}